\newenvironment{proofofclaim}[1][Proof]{\noindent\textbf{#1.} }{\ \rule{0.5em}{0.5em}}
\newtheorem{observation}{Observation}
\newtheorem{myclaim}{Claim}
   \renewenvironment{thebibliography}[1]{
 \begin{oldthebibliography}{#1}
 \setlength{\parskip}{0.0ex} \setlength{\itemsep}{1ex}}  {\end{oldthebibliography}}
\begin{document}

\title{The Recognition of Simple-Triangle Graphs \\
and of Linear-Interval Orders is Polynomial\thanks{This work was partially supported by~the EPSRC Grant~EP/K022660/1.} 
\thanks{A preliminary conference version of this work appeared in the \emph{Proceedings of the 21st European Symposium on Algorithms (ESA)}, 
Sophia Antipolis, France, pages 719--730, 2013.}}
\author{George B. Mertzios}
\institute{School of Engineering and Computing Sciences, Durham University, UK.\\
Email: \texttt{george.mertzios@durham.ac.uk}\vspace{-0.2cm}}
\maketitle

\begin{abstract}
Intersection graphs of geometric objects have been extensively studied, both due to their interesting structure and their numerous applications; 
prominent examples include interval graphs 
and permutation graphs. 
In this paper we study a natural graph class that generalizes both interval and permutation graphs, namely \emph{simple-triangle} graphs. 
Simple-triangle graphs --~also known as \emph{PI} graphs (for Point-Interval)~-- are the intersection graphs of triangles that are defined
by a point on a line $L_{1}$ and an interval on a parallel line $L_{2}$. 
They lie naturally between permutation and trapezoid graphs, which are the intersection graphs of line segments
between $L_{1}$ and $L_{2}$ and of trapezoids between $L_{1}$ and $L_{2}$, respectively. 
Although various efficient recognition algorithms for permutation and trapezoid graphs are well known to exist, 
the recognition of simple-triangle graphs has remained an open problem since their introduction by~Corneil and~Kamula three decades ago. 
In this paper we resolve this problem by proving that simple-triangle graphs can be recognized in polynomial time. 
As a consequence, our algorithm also solves a longstanding open problem in the area of partial orders, 
namely the recognition of \emph{linear-interval~orders}, 
i.e.~of partial orders $P=P_{1}\cap P_{2}$, where $P_{1}$ is a linear order and $P_{2}$ is an interval order. 
This is one of the first results on recognizing partial orders~$P$ that are the 
intersection~of~orders from two different classes $\mathcal{P}_{1}$ and $\mathcal{P}_{2}$. 
In complete contrast to this, partial orders~$P$ which are~the~intersection of orders from the same class $\mathcal{P}$ have been extensively investigated, 
and in most cases the complexity status of these recognition problems has been already established.\newline

\noindent \textbf{Keywords:} Intersection graphs, PI graphs, recognition problem, partial orders, polynomial algorithm.
\end{abstract}

\section{Introduction\label{intro-sec}}

A graph $G$ is the \emph{intersection graph} of a family $\mathcal{F}$ of
sets if we can bijectively assign sets of~$\mathcal{F}$ to vertices of $G$
such that two vertices of $G$ are adjacent if and only if the corresponding
sets have a non-empty intersection. 
It turns out that many graph classes with important applications can be
described as intersection graphs of set families that are derived from some
kind of geometric configuration. One of the most prominent examples is that
of \emph{interval} graphs, i.e.~the intersection graphs of intervals on the
real line, which have natural applications in several fields,
including~bioinformatics and involving the physical mapping of DNA and the
genome~reconstruction\footnote{%
Benzer~\cite{Benzer59} earned the prestigious Lasker Award (1971) and
Crafoord Prize (1993) partly for showing that the set of intersections of a
large number of fragments of genetic material in a virus form an interval
graph.\vspace{-0.2cm}}~\cite{Goldberg95,Golumbic04,Carrano88}.

Generalizing the intersections on the real line, consider two parallel
horizontal lines on the plane,~$L_{1}$ (the upper line) and~$L_{2}$ (the
lower~line). A graph~$G$ is a \emph{simple-triangle} graph if it is the
intersection graph of triangles that have one endpoint on $L_{1}$ and the
other two on $L_{2}$. Furthermore, $G$ is a \emph{triangle} graph if it is
the intersection graph of triangles with endpoints on $L_{1}$ and $L_{2}$,
but now there is no restriction on which line contains one endpoint of every
triangle and which contains the other two. Simple-triangle and triangle
graphs are also known as~\emph{PI} and~\emph{PI$^{\ast }$} graphs,
respectively~\cite{Brandstaedt99,CorKam87,Spinrad03}, where~PI stands 
for ``Point-Interval'' . Such representations of simple-triangle and of
triangle graphs are called \emph{simple-triangle}~(or~\emph{PI})~and~\emph{%
triangle} (or~\emph{PI}$^{\ast }$) \emph{representations}, respectively.
Simple-triangle and triangle graphs lie naturally between \emph{permutation}
graphs (i.e.~the intersection graphs of line segments with one endpoint on~$%
L_{1}$ and one on~$L_{2}$) and \emph{trapezoid} graphs (i.e.~the
intersection graphs of trapezoids with one interval on~$L_{1}$ and the
opposite interval on~$L_{2}$)~\cite{Brandstaedt99,Spinrad03}. Note that,
using the notation \emph{PI }for simple-triangle graphs, permutation graphs
are~\emph{PP} (for ``Point-Point'') graphs, while trapezoid graphs are~\emph{%
II} (for ``Interval-Interval'') graphs~\cite{CorKam87}.

A \emph{partial order} is a pair $P=(U,R)$, where $U$ is a finite set and $R$
is an irreflexive transitive binary relation on $U$. Whenever $(x,y)\in R$
for two elements $x,y\in U$, we write $x<_{P}y$. If $x<_{P}y$ or $y<_{P}x$,
then $x$ and $y$ are \emph{comparable}, otherwise they are \emph{incomparable%
}. $P$ is a \emph{linear order} if every pair of elements in $U$ are
comparable. Furthermore, $P$ is an \emph{interval order} if each element $%
x\in U$ is assigned to an interval $I_{x}$ on the real line such that $%
x<_{P}y$ if and only if $I_{x}$ lies completely to the left~of~$I_{y}$. One
of the most fundamental notions on partial orders is \emph{dimension}. For
any partial order $P$ and any class $\mathcal{P}$ of partial orders
(e.g.~linear order, interval order, semiorder, etc.), the $\mathcal{P}$\emph{%
-dimension}~of~$P$ is the smallest $k$ such that $P$ is the intersection of $%
k$ orders from $\mathcal{P}$. In particular, when $\mathcal{P}$ is the class
of linear orders, the $\mathcal{P}$-dimension of $P$ is known as the \emph{%
dimension} of $P$. Although in most cases we can efficiently recognize
whether a partial order belongs to a class $\mathcal{P}$, this is not
the case for higher dimensions. Due to a classical result of
Yannakakis~\cite{Yannakakis82}, it is NP-complete to decide whether the
dimension, or the interval dimension, of a partial order is at most~$k$, 
where~$k\geq 3$.

There is a natural correspondence between graphs and partial orders. For a
partial order ${P=(U,R)}$, the \emph{comparability} (resp.~\emph{%
incomparability}) \emph{graph} $G(P)$ of~$P$ has elements of $U$ as vertices
and an edge between every pair of comparable (resp.~incomparable) elements.
A graph $G$ is a \emph{(co)comparability graph} if $G$ is the
(in)comparability graph of a partial order $P$. 
There has been a long line of research in order to establish the complexity
of recognizing partial orders of $\mathcal{P}$-dimension at most~$2$
(e.g.~where $\mathcal{P}$ is linear orders~\cite{Spinrad03} or interval
orders~\cite{MaSpinrad94}). 
In particular, since permutation (resp.~trapezoid) graphs
are the incomparability graphs of partial orders with dimension
(resp.~interval dimension) at~most~$2$~\cite{Dagan88,Spinrad03}, permutation
and trapezoid graphs can be recognized efficiently by the corresponding
partial order algorithms~\cite{MaSpinrad94,Spinrad03}.

In contrast, not much is known so far for the recognition of partial orders $%
P$ that are the intersection of orders from different classes $\mathcal{P}%
_{1}$ and $\mathcal{P}_{2}$. 
One of the longstanding open problems in this area is
the recognition of \emph{linear-interval orders} $P$, i.e.~of partial orders 
$P=P_{1}\cap P_{2}$, where $P_{1}$ is a linear order and $P_{2}$ is an
interval order. In~terms of graphs, this problem is equivalent to the
recognition of simple-triangle (i.e.~PI) graphs, since~PI~graphs are the
incomparability graphs of linear-interval orders; this problem is well known and remains open 
since the introduction of PI graphs in 1987~\cite{CorKam87} (cf.~for instance the books~\cite{Spinrad03,Brandstaedt99}).

\vspace{0.1cm}

\noindent \textbf{Our contribution.}~In this article we establish the
complexity of recognizing simple-triangle (PI) graphs, and therefore also
the complexity of recognizing linear-interval orders. Given a graph $%
G$ with $n$ vertices, such that its complement~$\overline{G}$ has $m$ edges,
we provide an algorithm with running time $O(n^{2}m)$ that either computes a
PI representation of $G$, or it announces that $G$ is not a PI graph.
Equivalently, given a partial order $P=(U,R)$ with $|U|=n$ and $|R|=m$, our
algorithm either computes in $O(n^{2}m)$ time a linear order $P_{1}$ and an
interval order $P_{2}$ such that $P=P_{1}\cap P_{2}$, or it announces that
such orders $P_{1},P_{2}$ do not exist. Surprisingly, it turns out that the
seemingly small difference in the definition of simple-triangle~(PI) graphs
and triangle (PI$^{\ast }$) graphs results in a very different behavior of
their recognition problems; only recently it has been proved that the
recognition of triangle graphs is NP-complete~\cite{Mertzios-PI-ast-tcs}. In
addition, our polynomial time algorithm is in contrast to the recognition
problems for the related classes of \emph{bounded tolerance} (i.e.~\emph{%
parallelogram}) graphs~\cite{MSZ-SICOMP-11} and of \emph{max-tolerance}
graphs~\cite{Kaufmann06}, which have already been proved to be NP-complete.

As the main tool for our algorithm we introduce the notion of a \emph{linear-interval cover} of bipartite graphs. 
As a second tool we identify a new tractable~subclass~of~$3$SAT, called \emph{gradually mixed} formulas, 
for which we provide a linear time algorithm. The class of gradually mixed formulas is \emph{hybrid}, 
i.e.~it is characterized by both \emph{relational} and \emph{structural} restrictions on the clauses. 
Then, using the notion of a linear-interval cover, we are able to reduce our problem to the satisfiability problem of gradually mixed formulas.

Our algorithm proceeds as follows. First, it computes from the given graph $%
G $ a bipartite graph $\widetilde{G}$, such that $G$ is a PI graph if and
only if $\widetilde{G}$ has a linear-interval cover. Second, it computes a 
gradually mixed Boolean formula $\phi $ such that $\phi $ is
satisfiable if and only if $\widetilde{G}$ has a linear-interval cover. This
formula $\phi $ can be written as $\phi =\phi _{1}\wedge \phi _{2}$, where
every clause of $\phi _{1}$ has $3$ literals and every clause of $\phi _{2}$
has $2$ literals. The construction of $\phi _{1}$ and $\phi _{2}$ is based
on the fact that a necessary condition for $\widetilde{G}$ to admit a
linear-interval cover is that its edges can be colored with two different
colors (according to some restrictions). Then the edges of $\widetilde{G}$
correspond to literals of $\phi $, while the two edge colors encode the
truth value of the corresponding variables. Furthermore every clause of $%
\phi _{1}$ corresponds to the edges of an \emph{alternating cycle} in $%
\widetilde{G}$ (i.e.~a closed walk that alternately visits edges and
non-edges) of length $6$, while the clauses of $\phi _{2}$ correspond to
specific pairs of edges of $\widetilde{G}$ that are not allowed to receive
the same color. Finally, the equivalence between the existence of a
linear-interval cover of $\widetilde{G}$ and a satisfying truth assignment
for $\phi $ allows us to use our linear algorithm to solve satisfiability on gradually
mixed formulas in order to complete our recognition algorithm.

\vspace{0.1cm}

\noindent \textbf{Organization of the paper.}~We present in Section~\ref%
{tractable-subclass-3SAT-sec} the class of gradually mixed formulas and a
linear time algorithm to solve satisfiability on this class. In Section~\ref{prelim-sec}
we provide the necessary notation and preliminaries on threshold graphs and
alternating cycles. Then in Section~\ref{linear-interval-sec} we introduce
the notion of a linear-interval cover~of bipartite graphs to characterize PI
graphs, and in Section~\ref{linear-interval-satisfiability-sec} we translate
the linear-interval~cover~problem~to~the~satisfiability problem on a gradually mixed
formula. Finally, in Section~\ref{recognition-sec} we~present~our~PI~graph~recognition~algorithm.

\section{A tractable subclass of $3$SAT\label{tractable-subclass-3SAT-sec}}

In this section we introduce the class of \emph{gradually mixed} formulas
and we provide a linear time algorithm for solving satisfiability on this
class. Any gradually mixed formula $\phi$ is a mix of binary and ternary
clauses. That is, there exist a $3$-CNF formula $\phi _{1}$ (i.e.~a formula
in conjunctive normal form with at most $3$ literals per clause) and a $2$%
-CNF formula $\phi _{2}$ (i.e.~with at most $2$ literals per clause) such
that $\phi =\phi _{1}\wedge \phi _{2}$, while $\phi $ satisfies some
constraints among its clauses. Before we define gradually mixed formulas
(cf.~Definition~\ref{gradually-mixed-def}), we first define \emph{dual}
clauses. 

\begin{definition}
\label{dual-clause-def}Let $\phi _{1}$ be a $3$-CNF formula. If $\alpha
=(\ell _{1}\vee \ell _{2}\vee \ell _{3})$ is a clause of~$\phi _{1}$, then
the $\overline{\alpha }=(\overline{\ell _{1}}\vee \overline{\ell _{2}}\vee 
\overline{\ell _{3}})$ is the \emph{dual} clause of $\alpha $.
\end{definition}

Note by Definition~\ref{dual-clause-def} that, whenever $\alpha $ is a
clause of a formula $\phi _{1}$, the dual clause $\overline{\alpha }$ of $%
\alpha $ may belong, or may not belong, to $\phi _{1}$.

\begin{definition}
\label{gradually-mixed-def}Let $\phi _{1}$ and $\phi _{2}$ be CNF formulas
with $3$ literals and $2$ literals in each clause, respectively. The mixed
formula $\phi =\phi _{1}\wedge \phi _{2}$ is \emph{gradually mixed} if the
next two conditions are satisfied:\vspace{-0,1cm}

\begin{enumerate}
\item Let $\alpha$ and $\beta$ be two clauses of $\phi_{1}$. Then $\alpha$
does not share exactly one literal with either the clause $\beta$ or the
clause $\overline{\beta}$.\vspace{0,15cm}

\item If $\alpha =(\ell _{1}\vee \ell _{2}\vee \ell _{3})$ is a clause of $%
\phi _{1}$ and $(\ell _{0}\vee \overline{\ell _{1}})$ is a clause of $\phi
_{2}$, then $\phi _{2}$ contains also (at least) one of the clauses $\{(\ell
_{0}\vee \ell _{2}),(\ell _{0}\vee \ell _{3})\}$.
\end{enumerate}
\end{definition}

As an example of a gradually mixed formula, consider the formula~$\phi
=\phi_{1}\wedge \phi_{2}$, where ${\phi_{1} = (x_{1} \vee \overline{x_{2}}
\vee x_{3}) \wedge (\overline{x_{1}} \vee x_{2} \vee x_{4}) \wedge (x_{5}
\vee x_{6} \vee \overline{x_{7}})}$ and~$\phi_{2} = (x_{8} \vee \overline{%
x_{3}}) \wedge (x_{8} \vee x_{1}) \wedge (x_{8} \vee x_{4}) \wedge (%
\overline{x_{8}} \vee x_{9}) \wedge (x_{5} \vee x_{10}) \wedge (\overline{%
x_{6}} \vee x_{10})$.

Note by Definition~\ref{gradually-mixed-def} that the class of gradually
mixed formulas contains $2$SAT as a proper subclass, since every $2$-CNF
formula $\phi_{2}$ can be written as a gradually mixed formula $\phi =\phi
_{1}\wedge \phi _{2}$ where $\phi_{1}=\emptyset$. Furthermore the class of
gradually mixed formulas $\phi$ is a \emph{hybrid} class, since the
conditions of Definition~\ref{gradually-mixed-def} concern simultaneously 
\emph{relational} restrictions (i.e.~where the clauses are restricted to be
of certain types) and \emph{structural} restrictions (i.e.~where there are
restrictions on how different clauses interact with each other). The
intuition for the term \emph{gradually mixed} in Definition~\ref%
{gradually-mixed-def} is that, whenever the sub-formulas $\phi _{1}$ and $%
\phi _{2}$ share more variables, the number of clauses of $\phi _{2}$ that
are imposed by condition~2 of Definition~\ref{gradually-mixed-def}
increases. In the next theorem we use resolution to prove that
satisfiability can be solved in linear time on gradually mixed formulas.

\begin{theorem}
\label{gradually-mixed-algorithm-thm}There exists a linear time algorithm
which decides whether a given gradually mixed formula $\phi $ is satisfiable
and computes a satisfying truth assignment of $\phi $, if one exists.
\end{theorem}

\begin{proof}
Let $\phi =\phi _{1}\wedge \phi _{2}$, where $\phi _{1}$ is a $3$-CNF
formula and $\phi _{2}$ is a $2$-CNF formula. We first scan through all
clauses of $\phi $ to remove all tautologies, i.e.~all clauses which contain
both a literal and its negation, since such clauses are always satisfiable.
Furthermore we eliminate all double literal occurrences in every clause. In
the remainder of the proof we denote by $\phi $ the resulting formula after
the removal of tautologies and the elimination of double literal occurrences
in the clauses. Note that, during this elimination procedure, some clauses
of $\phi _{1}$ may become $2$-CNF clauses. In the resulting formula 
we denote by $\phi _{1}^{\prime }$ the conjunction of the clauses that have $%
3$ literals each, and by $\phi _{1}^{\prime \prime }$ the conjunction of the
clauses of $\phi _{1}$ that remain with $1$ or $2$ literals each. In
particular, since also in every clause of $\phi _{1}$ no literal is the
negation of another one (as we removed from $\phi$ all tautologies), the
literals of every clause in $\phi_{1}^{\prime}$ correspond to three distinct
variables.

Then we compute a $2$-CNF formula $\phi _{0}$ (in time linear to the size of 
$\phi $) as follows. Initially $\phi _{0}$ is empty. First we mark all
literals $\ell $ for which the $2$-CNF formula $\phi _{1}^{\prime \prime
}\wedge \phi _{2}$ includes the clause $(\ell )$. Then we scan through all
clauses of the $3$-CNF formula $\phi _{1}^{\prime }$. For every clause $%
(\ell _{1}\vee \ell _{2}\vee \ell _{3})$ of $\phi _{1}^{\prime }$, such that
the literal $\overline{\ell _{1}}$ (resp.~$\overline{\ell _{2}}$~or~$%
\overline{\ell _{3}}$) has been marked, we add to $\phi _{0}$ the clause $%
(\ell _{2}\vee \ell _{3})$ (resp.~the clause $(\ell _{1}\vee \ell _{3})$ or $%
(\ell _{1}\vee \ell _{2})$).

If $\phi \wedge \phi _{0}$ is satisfiable then clearly $\phi $ is also
satisfiable as a sub-formula of $\phi \wedge \phi _{0}$. Conversely, suppose
that $\phi $ is satisfied by the truth assignment $\tau$. Let $\gamma =(\ell
_{1}\vee \ell _{2})$ be an arbitrary clause of $\phi _{0}$. The existence of 
$\gamma $ in $\phi _{0}$ implies the existence of some clauses $\alpha =(%
\overline{\ell _{3}})$ and $\beta =(\ell _{1}\vee \ell _{2}\vee \ell _{3})$
in $\phi $. Therefore, since $\alpha =\beta =1$ in $\tau $ by assumption, it
follows that $\ell _{3}=0$ in $\tau $. Thus the clause $\beta $ equals $%
(\ell _{1}\vee \ell _{2})$ in $\tau $, and therefore $\gamma =1$ in $\tau $.
That is, $\tau $ satisfies also $\phi _{0}$. Therefore $\phi $ is
satisfiable if and only if $\phi \wedge \phi _{0} $ is satisfiable.

In the remainder of the proof, we prove that $\phi \wedge \phi _{0}$ is
satisfiable if and only if the $2$-CNF formula $\phi _{1}^{\prime \prime
}\wedge \phi _{2}\wedge \phi _{0}$ is satisfiable. The one direction is
immediate, i.e.~if $\phi \wedge \phi _{0}$ is satisfiable then $\phi
_{1}^{\prime \prime }\wedge \phi _{2}\wedge \phi _{0}$ is also satisfiable
as a sub-formula of $\phi \wedge \phi _{0}$. Conversely, suppose that $\phi
_{1}^{\prime \prime }\wedge \phi _{2}\wedge \phi _{0}$ is satisfiable and
let $\tau $ be a satisfying truth assignment of this formula. If $\tau $
satisfies all clauses of $\phi_{1}^{\prime}$, then clearly $\tau $ is also a
satisfying truth assignment of $\phi \wedge \phi _{0}$. Otherwise let $%
\alpha =(\ell _{1}\vee \ell _{2}\vee \ell _{3})$ be a clause of $%
\phi_{1}^{\prime}$ that is not satisfied by $\tau $. Then $\ell _{1}=\ell
_{2}=\ell _{3}=0$ in $\tau $. In this case, we construct the truth
assignment $\tau ^{\prime }$ from $\tau $ by flipping the value of one
(arbitrary) literal of $\{\ell _{1},\ell _{2},\ell _{3}\}$ in $\tau $.
Assume without loss of generality that the value of $\ell _{1}$ flips from $%
\tau $ to $\tau ^{\prime }$, while the values of all other variables remain
the same in both $\tau $ and $\tau ^{\prime }$. Recall that the literals $%
\{\ell _{1},\ell _{2},\ell _{3}\}$ correspond to three distinct variables,
since we eliminated all double occurrences of literals in all clauses in $%
\phi _{1}$. Therefore $\ell _{1}=\overline{\ell _{2}}=\overline{\ell _{3}}=1$
in $\tau ^{\prime }$, and thus~$\alpha =1$~in~$\tau ^{\prime }$.

Suppose that there exists a clause $\beta =(\ell _{4}\vee \ell _{5}\vee \ell
_{6})$ of $\phi _{1}^{\prime}$ where $\beta =1$ in $\tau $ and $\beta =0$ in 
$\tau ^{\prime }$. Then clearly one of the literals of $\beta$ equals $%
\overline{\ell_{1}}$, since $\overline{\ell_{1}}$ is the only literal whose
value changes in $\tau ^{\prime }$ from $1$ to~$0$. Assume without loss of
generality that $\ell _{4}=\overline{\ell_{1}}$, i.e.~$\alpha$ shares at
least one literal with $\overline{\beta} = (\overline{\ell_{4}} \vee 
\overline{\ell_{5}} \vee \overline{\ell_{6}})$. Therefore, since $\phi $ is
a gradually mixed formula by assumption, it follows by Definition~\ref%
{gradually-mixed-def} that $\alpha$ shares at least one more literal with $%
\overline{\beta}$. Assume without loss of generality that $\ell _{5}=%
\overline{\ell_{2}}$. Then, since by assumption $\ell_{2}=0$ in both $\tau$
and $\tau^{\prime}$, it follows that the clause $\beta =(\ell _{4}\vee
\ell_{5}\vee \ell _{6}) = (\overline{\ell _{1}}\vee \overline{\ell_{2}}\vee
\ell _{6})$ is satisfied in $\tau ^{\prime }$, which is a contradiction to
our assumption. Therefore for every clause~$\beta$ of $\phi_{1}^{\prime}$,
if~${\beta=1}$ in~$\tau$ then also~${\beta=1}$ in~${\tau^{\prime}}$.

We now prove that all clauses of the $2$-CNF formula $\phi _{1}^{\prime
\prime }\wedge \phi _{2}\wedge \phi _{0}$ remain satisfied in $\tau ^{\prime
}$. First consider an arbitrary clause $\gamma $ of $\phi _{0}$ that
contains one of the literals $\{\ell _{1},\overline{\ell _{1}}\}$. If $%
\gamma $ contains the literal $\ell _{1}$ then $\gamma =1$ in $\tau ^{\prime
}$, since $\ell _{1}=1$ in $\tau ^{\prime }$. Let $\gamma $ contain the
literal $\overline{\ell _{1}}$, and let $\gamma =(\overline{\ell _{1}}\vee
\ell _{4})$. Then it follows by the construction of the formula $\phi _{0}$
that there exists a literal $\ell _{5}$, such that $(\overline{\ell _{1}}%
\vee \ell _{4}\vee \ell _{5})$ is a clause of $\phi _{1}^{\prime }$ and $(%
\overline{\ell _{5}})$ is a clause of $\phi _{1}^{\prime \prime }\wedge \phi
_{2}$. Note that $(\overline{\ell _{1}}\vee \ell _{4}\vee \ell _{5})=1$ in $%
\tau $, since $\ell _{1}=0$ in $\tau $ by assumption. Therefore also $(%
\overline{\ell _{1}}\vee \ell _{4}\vee \ell _{5})=1$ in $\tau ^{\prime }$ by
the previous paragraph. Thus, since $\overline{\ell _{1}}=0$ in $\tau
^{\prime }$, it follows that $(\ell _{4}\vee \ell _{5})=1$ in $\tau ^{\prime
}$. Furthermore, since $\tau $ satisfies $\phi _{1}^{\prime \prime }\wedge
\phi _{2}$ by assumption, it follows that $(\overline{\ell _{5}})=1$ in $%
\tau $, and thus $\ell _{5}=0$ in both $\tau $ and $\tau ^{\prime }$.
Therefore $\ell _{4}=1$ in $\tau ^{\prime }$, since $(\ell _{4}\vee \ell
_{5})=1$ in $\tau ^{\prime }$, and thus $\gamma =(\overline{\ell _{1}}\vee
\ell _{4})=1$ in $\tau ^{\prime }$. That is, all clauses $\gamma $ of $\phi
_{0}$ remain satisfied in the assignment $\tau ^{\prime }$.

Now consider a clause $\gamma $ of $\phi _{2}$ that contains one of the
literals $\{\ell _{1},\overline{\ell _{1}}\}$. If $\gamma $ contains $\ell
_{1}$ then $\gamma =1$ in $\tau ^{\prime }$, since $\ell _{1}=1$ in $\tau
^{\prime }$. Let $\gamma $ contain the literal $\overline{\ell _{1}}$, and
let $\gamma =(\overline{\ell _{1}}\vee \ell _{4})$. Note that $\ell _{4}\neq
\ell _{1}$, since we removed all tautologies from $\phi $. Suppose that $%
\ell _{4}=\overline{\ell _{1}}$, i.e.~$\gamma =(\overline{\ell _{1}})$.
Then, since $\alpha =(\ell _{1}\vee \ell _{2}\vee \ell _{3})$ is a clause of 
$\phi _{1}$ by assumption, the formula $\phi _{0}$ contains (by
construction) the clause $(\ell _{2}\vee \ell _{3})$. Thus, since $\tau $
satisfies $\phi _{0}$ by assumption, it follows that $\ell _{2}=1$ or $\ell
_{3}=1$ in $\tau $. This is a contradiction, since $\ell _{1}=\ell _{2}=\ell
_{3}=0$ in $\tau $. Therefore $\ell _{4}\notin \{\ell _{1},\overline{\ell
_{1}}\}$. Thus, since $\phi $ is a gradually mixed formula by assumption, it
follows by Definition~\ref{gradually-mixed-def} that $\phi _{2}$ has also
one of the clauses $\{(\ell _{4}\vee \ell _{2}),(\ell _{4}\vee \ell _{3})\}$%
. Assume without loss of generality that $\phi _{2}$ has the clause $(\ell
_{4}\vee \ell _{2})$. Then, since $\tau $ satisfies $\phi _{2}$ by
assumption and $\ell _{2}=0$ in $\tau $, it follows that $\ell _{4}=1$ in $%
\tau $. Furthermore, since $\ell _{4}\notin \{\ell _{1},\overline{\ell _{1}}%
\}$, it remains $\ell _{4}=1$ in $\tau ^{\prime }$, and thus $\gamma =(%
\overline{\ell _{1}}\vee \ell _{4})=1$ in $\tau ^{\prime }$. That is, all
clauses $\gamma $ of $\phi _{2}$ remain satisfied in the assignment $\tau
^{\prime }$.

Finally consider a clause $\gamma $ of $\phi _{1}^{\prime \prime }$ that
contains one of the literals $\{\ell _{1},\overline{\ell _{1}}\}$. If $%
\gamma $ contains $\ell _{1}$ then $\gamma =1$ in $\tau ^{\prime }$, since $%
\ell _{1}=1$ in $\tau ^{\prime }$. Let $\gamma $ contain the literal $%
\overline{\ell _{1}}$, and let $\gamma =(\overline{\ell _{1}}\vee \ell _{4})$%
. Note that $\ell _{4}\neq \ell _{1}$, since we removed all tautologies from 
$\phi $. Suppose that $\ell _{4}=\overline{\ell _{1}}$, i.e.~$\gamma =(%
\overline{\ell _{1}})$. Then, since $\alpha =(\ell _{1}\vee \ell _{2}\vee
\ell _{3})$ is a clause of $\phi _{1}$ by assumption, the formula $\phi _{0}$
contains by construction the clause $(\ell _{2}\vee \ell _{3})$. Thus $\ell
_{2}=1$ or $\ell _{3}=1$ in $\tau $, since $\tau $ satisfies $\phi _{0}$ by
assumption. This is a contradiction, since $\ell _{1}=\ell _{2}=\ell _{3}=0$
in $\tau $. Therefore $\ell _{4}\notin \{\ell _{1},\overline{\ell _{1}}\}$.
Recall that $\phi _{1}^{\prime \prime }$ contains exactly those clauses of $%
\phi _{1}$ which remain with $1$ or $2$ literals each, after eliminating all
double literal occurrences in every clause of $\phi $. That is, the clause $%
\gamma $ was before the double literal elimination one of the clauses $(%
\overline{\ell _{1}}\vee \ell _{4}\vee \ell _{4})$ and $(\overline{\ell _{1}}%
\vee \overline{\ell _{1}}\vee \ell _{4})$. Furthermore $\alpha =(\ell
_{1}\vee \ell _{2}\vee \ell _{3})$ and $\gamma $ are two different clauses
of $\phi _{1}$, since $\alpha $ belongs to $\phi _{1}^{\prime }$ and $\gamma 
$ belongs to $\phi _{1}^{\prime \prime }$. Moreover $\alpha $ shares the
literal $\ell _{1}$ with the dual clause $\overline{\gamma }$ of $\gamma $.
If $\gamma $ was the clause $(\overline{\ell _{1}}\vee \ell _{4}\vee \ell
_{4})$ before the double literal elimination, then Definition~\ref%
{gradually-mixed-def} implies that $\ell _{4}=\overline{\ell _{2}}$ or $\ell
_{4}=\overline{\ell _{3}}$. Therefore $\ell _{4}=1$ in $\tau ^{\prime }$,
since $\ell _{2}=\ell _{3}=0$ in both $\tau $ and $\tau ^{\prime }$, and
thus $\gamma =(\overline{\ell _{1}}\vee \ell _{4})=1$ in $\tau ^{\prime }$.
Otherwise, if $\gamma $ was the clause $(\overline{\ell _{1}}\vee \overline{%
\ell _{1}}\vee \ell _{4})$ before the double literal elimination, then
Definition~\ref{gradually-mixed-def} implies that $\ell _{1}=\ell _{2}$, or $%
\ell _{1}=\ell _{3}$, or $\ell _{4}=\overline{\ell _{2}}$, or $\ell _{4}=%
\overline{\ell _{3}}$. Recall that $\alpha $ is a clause of $\phi
_{1}^{\prime }$ by assumption, and thus $\ell _{1}\neq \ell _{2}$ and $\ell
_{1}\neq \ell _{3}$. Therefore $\ell _{4}=\overline{\ell _{2}}$ or $\ell
_{4}=\overline{\ell _{3}}$, and thus $\ell _{4}=1$ in $\tau ^{\prime }$,
since $\ell _{2}=\ell _{3}=0$ in both $\tau $ and $\tau ^{\prime }$.
Therefore $\gamma =(\overline{\ell _{1}}\vee \ell _{4})=1$ in $\tau ^{\prime
}$. That is, all clauses $\gamma $ of $\phi _{1}^{\prime \prime }$ remain
satisfied in the assignment $\tau ^{\prime }$.

Summarizing, all clauses of the $2$-CNF formula $\phi _{1}^{\prime \prime
}\wedge \phi _{2}\wedge \phi _{0}$ remain satisfied in $\tau ^{\prime }$.
Furthermore, $\alpha =1$ in $\tau ^{\prime }$, while for every clause $\beta 
$ of $\phi _{1}^{\prime }$, if $\beta =1$ in $\tau $ then also $\beta =1$ in 
$\tau ^{\prime }$. Thus, according to the above transition from $\tau $ to $%
\tau ^{\prime }$, we can modify iteratively the truth assignment $\tau $ to
a truth assignment $\tau ^{\prime \prime }$ that satisfies all clauses of $%
\phi \wedge \phi _{0}$. Therefore $\phi \wedge \phi _{0}$ is satisfiable if
and only if the $2$-CNF formula $\phi _{1}^{\prime \prime }\wedge \phi
_{2}\wedge \phi _{0}$ is satisfiable.

Since the transition from the assignment $\tau $ to the assignment $\tau
^{\prime }$ can be done in constant time (we only need to flip locally the
value of one literal $\ell _{1}$ in the clause $\alpha =(\ell _{1}\vee \ell
_{2}\vee \ell _{3})$ of $\phi _{1}^{\prime }$), the computation of $\tau
^{\prime \prime }$ from $\tau $ can be done in time linear to the size of $%
\phi \wedge \phi _{0}$. Therefore, since a satisfying truth assignment $\tau 
$ of the $2$-CNF formula $\phi _{1}^{\prime \prime }\wedge \phi _{2}\wedge
\phi _{0}$ (if one exists) can be computed in linear time using any standard
linear time algorithm for the $2$-SAT problem (e.g.~\cite{Even76}), a
satisfying truth assignment~$\tau ^{\prime \prime }$of~$\phi \wedge \phi
_{0} $ (if one exists) can be also computed in time linear to the size of $%
\phi \wedge \phi _{0}$ (and thus also in time linear to the size of $\phi $%
). This completes the proof of the theorem.\qed
\end{proof}

The conditions of Definition~\ref{gradually-mixed-def} which guarantee the
tractability of gradually mixed formulas are \emph{minimal}, in the sense
that, if we remove any of these two conditions, the resulting subclass of $3$%
SAT is NP-complete.

Indeed, assume that we impose only the \emph{first} condition of Definition~%
\ref{gradually-mixed-def} to the mixed formula $\phi =\phi _{1}\wedge \phi
_{2}$. Then we can reduce $3$SAT to this subclass as follows. Let $\phi _{0}$
be an instance of $3$SAT. We define $\phi _{1}$ to be the formula obtained
by $\phi _{0}$ if we replace every literal $\ell $ of $\phi _{0}$ by a new
variable $x_{\ell }$. For every two of these new variables $x_{\ell }$ and $%
x_{\ell ^{\prime }}$ in $\phi _{1}$, we add to $\phi _{2}$ the clauses $%
(x_{\ell }\vee \overline{x_{\ell ^{\prime }}})\wedge (\overline{x_{\ell }}%
\vee x_{\ell ^{\prime }})$ if $\ell =\ell ^{\prime }$ in $\phi _{0}$, and we
add to $\phi _{2}$ the clauses $(x_{\ell }\vee x_{\ell ^{\prime }})\wedge (%
\overline{x_{\ell }}\vee \overline{x_{\ell ^{\prime }}})$ if $\ell =%
\overline{\ell ^{\prime }}$ in $\phi _{0}$. Then $\phi =\phi _{1}\wedge \phi
_{2}$ satisfies the first condition of Definition~\ref{gradually-mixed-def}
(since no two clauses of $\phi _{1}$ share any variable), while $\phi _{0}$
is satisfiable if and only if $\phi $ is satisfiable.

On the other hand, assume that we impose only the \emph{second} condition of
Definition~\ref{gradually-mixed-def} to the mixed formula $\phi =\phi
_{1}\wedge \phi _{2}$. Then, by setting $\phi_{2} = \emptyset$, we can
include in the resulting class \emph{every} $3$-CNF formula, and thus this
class is NP-complete.

\section{Preliminaries\label{prelim-sec}}

\subsection{Notation\label{notation-subsec}}

In the remainder of this article we consider finite, simple, and undirected
graphs. Given a graph $G$, we denote by $V(G)$ and $E(G)$ the sets of its
vertices and edges, respectively. An edge between two vertices $u$ and $v$
of a graph $G=(V,E)$ is denoted by $uv$, and in this case $u$ and $v$ are
said to be \emph{adjacent}. The \emph{neighborhood} of a vertex $u\in V$ is
the set $N(u)=\{v\in V\ |\ uv\in E\}$ of its adjacent vertices. The
complement of $G$ is denoted by $\overline{G}$, i.e.~$\overline{G}=(V,%
\overline{E})$, where $uv\in \overline{E}$ if and only if $uv\notin E$. For
any subset $E_{0}\subseteq E$ of the edges of $G$, we denote for simplicity $%
G-E_{0}=(V,E\setminus E_{0})$. A subset $S\subseteq V$ of its vertices
induces an \emph{independent set} in $G$ if $uv\notin E$ for every pair of
vertices $u,v\in S$. Furthermore, $S$ induces a \emph{clique} in $G$ if $%
uv\in E$ for every pair $u,v\in S$. For two graphs $G_{1}=(V,E_{1})$ and $%
G_{2}=(V,E_{2})$, we denote $G_{1}\subseteq G_{2}$ whenever $E_{1}\subseteq
E_{2}$. Moreover, we denote for simplicity by $G_{1}\cup G_{2}$ and $%
G_{1}\cap G_{2}$ the graphs $(V,E_{1}\cup E_{2})$ and $(V,E_{1}\cap E_{2})$,
respectively. A graph $G$ is a \emph{split graph} if its vertices can be
partitioned into a clique $K$ and an independent set $I$. Furthermore, $%
G=(V,E)$ is a \emph{threshold graph} if we can assign to each vertex $v\in V$
a real weight $a_{v}$, such that $uv\in E$ if and only if $a_{u}+a_{v}\geq 1$.

A \emph{proper }$k$\emph{-coloring} of a graph $G$ is an assignment of $k$
colors to the vertices of $G$, such that adjacent vertices are assigned
different colors. The smallest $k$ for which there exists a proper $k$%
-coloring of $G$ is the \emph{chromatic number} of $G$, denoted by $\chi (G)$%
. If $\chi (G)=2$ then $G$ is a \emph{bipartite}~graph; in this case the
vertices of $G$ are partitioned into two independent sets, the \emph{color
classes}. A bipartite graph $G$ is denoted by $G=(U,V,E)$, where $U$ and $V$
are its color classes and $E$ is the set of edges between them. For a
bipartite graph ${G=(U,V,E)}$, its \emph{bipartite complement} is the graph $%
{\widehat{G}=(U,V,\widehat{E})}$, where for two vertices $u\in U$ and $v\in
V $, ${uv\in \widehat{E}}$ if and only if ${uv\notin E}$. A bipartite graph~$%
{G=(U,V,E)}$ is a \emph{chain graph} if the vertices of each color class can
be ordered by inclusion of their neighborhoods, i.e.~${N(u)\subseteq N(v)}$
or ${N(v)\subseteq N(u)}$ for any two vertices $u,v$ in the same color
class. Note that chain graphs are closed under bipartite complementation,
i.e.~$G$ is a chain graph if and only if $\widehat{G}$ is a chain graph.

For any graph $G=(V,E)$ and any graph class $\mathcal{G}$, the $\mathcal{G}$%
\emph{-cover number} of $G$ is the smallest $k$ such that $%
E=\bigcup_{i=1}^{k}E_{i}$, where $G_{i}=(V,E_{i})\in \mathcal{G}$, $1\leq
i\leq k$; in this case the graphs $\{G_{i}\}_{i=1}^{k}$ are a $\mathcal{G}$%
\emph{-cover} of $G$. For several graph classes $\mathcal{G}$ it is
NP-complete to decide whether the $\mathcal{G}$-cover number of a graph is
at most $k$, where $k\geq 3$, see e.g.~\cite{Yannakakis82}.
Throughout the paper, whenever a set of the chain graphs $\{G_{i}\}_{i=1}^{k}$ form a
chain-cover of a bipartite graph $G$, then all these graphs are assumed to
have the same color classes as~$G$.

For any partial order $P=(U,R)$, we denote by $\overline{P}=(U,\overline{R})$
the \emph{inverse} partial order of $P$, i.e.~for any two elements $u,v\in U$%
, $u<_{\overline{P}}v$ if and only if $v<_{P}u$. For any two partial orders $%
P_{1}=(U,R_{1})$ and $P_{2}=(U,R_{2})$, we denote $P_{1}\subseteq P_{2}$
whenever $R_{1}\subseteq R_{2}$. Moreover, we denote for simplicity $%
P_{1}\cup P_{2}$ and $P_{1}\cap P_{2}$ for the partial orders $(U,R_{1}\cup
R_{2})$ and $(U,R_{1}\cap R_{2})$, respectively. If $P_{2}$ is a linear
order and $P_{1}\subseteq P_{2}$, then $P_{2}$ is a \emph{linear extension}
of $P_{1}$. The orders $P_{1}$ and $P_{2}$ \emph{contradict each other} if
there exist two elements $u,v\in U$ such that $u<_{P_{1}}v$ and $v<_{P_{2}}u$%
. The \emph{linear-interval dimension} of a partial order $P$ is the
lexicographically smallest pair $(k,\ell )$ such that $P=%
\bigcap_{i=1}^{k}P_{i}$, where $\{P_{i}\}_{i=1}^{k}$ are interval orders and
exactly $\ell $ among them are not linear orders. In particular, $P$ is a 
\emph{linear-interval order} if its linear-interval dimension is at most $%
(2,1)$, i.e.~$P=P_{1}\cap P_{2}$, where $P_{1}$ is a linear order and $P_{2}$
is an interval order.

\subsection{Threshold graphs and alternating cycles\label{threshold-alternating-cycles-subsec}}

In this section we provide preliminary definitions and known results on alternating cycles and on threshold graphs, which will be useful for the remainder of the paper. 

\begin{definition}
\label{AC-2k-def}Let ${G=(V,E)}$ be a graph, $\widetilde{E} \subseteq E$ be
an edge subset, and ${k\geq 2}$. A set of~$2k$ (not necessarily distinct)
vertices $v_{1},v_{2},\ldots ,v_{2k}\in V$ builds an \emph{alternating cycle 
}$AC_{2k}$ \emph{in }$\widetilde{E}$, if $v_{i}v_{i+1}\in \widetilde{E}$
whenever $i$ is even and $v_{i}v_{i+1}\notin E$ whenever $i$ is odd (where
indices are \hspace{-0.2cm}$\mod 2k$). Furthermore, we say that $G$ has an
alternating cycle $AC_{2k}$, whenever $G$ has an $AC_{2k}$ in the edge set $%
\widetilde{E}=E$.
\end{definition}

For instance, for $k=3$, there exist two different possibilities for an $%
AC_{6}$, which are illustrated in Figures~\ref{AC6-1-fig} and~\ref{AC6-2-fig}%
. These two types of an $AC_{6}$ are called an \emph{alternating path of
length}~$5$ or \emph{of length}~$6$, respectively ($AP_{5}$ and $AP_{6}$ for
short, respectively). In an $AP_{6}$ on vertices $%
v_{1},v_{2},v_{3},v_{4},v_{5},v_{6}$, if there exist the edges $v_{1}v_{3}$
and $v_{2}v_{6}$ (or, symmetrically, the edges $v_{3}v_{5}$ and $v_{4}v_{2}$%
, or the edges $v_{5}v_{1}$ and $v_{6}v_{4}$), then this $AP_{6}$ is called
a \emph{double} $AP_{6}$, cf.~Figure~\ref{AC6-3-fig}.

\begin{definition}
\label{base-ceiling-AP6-def}Let $G=(V,E)$ be a graph and $v_{1},\ldots
,v_{6} $ be the vertices of an $AP_{6}$. Then the non-edge $v_{1}v_{2}$
(resp.~the non-edge $v_{3}v_{4}$, $v_{5}v_{6}$) is a \emph{base} of the $%
AP_{6}$ and the edge $v_{4}v_{5}$ (resp.~the edge $v_{6}v_{1}$, $v_{2}v_{3}$%
) is the corresponding \emph{ceiling} of this $AP_{6}$.
\end{definition}

\begin{figure}[tbh]
\centering 
\subfigure[] {\label{AC6-1-fig} 
\includegraphics[scale=0.6]{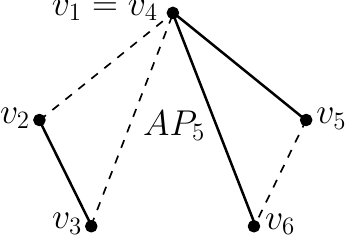}} \hspace{0.2cm} 
\subfigure[] {\label{AC6-2-fig} 
\includegraphics[scale=0.6]{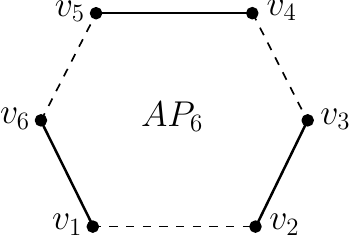}} \hspace{0.2cm} 
\subfigure[] {\label{AC6-3-fig} 
\includegraphics[scale=0.6]{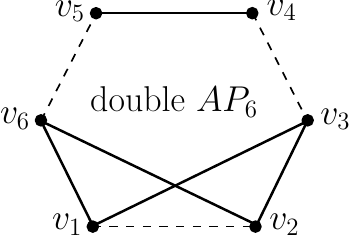}}
\caption{All possibilities for an $AC_{6}$: (a)~an alternating path $AP_{5}$
of length $5$, (b)~an alternating path $AP_{6}$ of length $6$, and (c)~a
double $AP_{6}$. The solid lines denote edges of the graph and the dashed
lines denote non-edges of the graph.}
\label{AC6-both-fig}
\end{figure}

Furthermore, note that for $k=2$, a set of four vertices $%
v_{1},v_{2},v_{3},v_{4}\in V$ builds an alternating cycle $AC_{4}$ if $%
v_{1}v_{2},v_{3}v_{4}\in E$ and $v_{1}v_{4},v_{2}v_{3}\notin E$. There are
three possible graphs on four vertices that build an alternating cycle $%
AC_{4}$, namely $2K_{2}$, $P_{4}$, and $C_{4}$, which are illustrated in
Figure~\ref{AC4-fig}.

\begin{figure}[tbh]
\centering 
\subfigure[] {\label{2K2-fig} 
\includegraphics[scale=0.6]{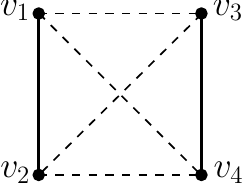}} \hspace{0.2cm} 
\subfigure[] {\label{P4-fig} 
\includegraphics[scale=0.6]{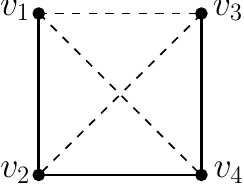}} \hspace{0.2cm} 
\subfigure[] {\label{C4-fig} 
\includegraphics[scale=0.6]{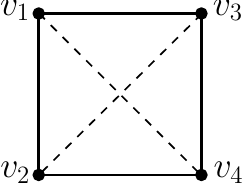}}
\caption{The three possible $AC_{4}$'s: (a)~a $2K_{2}$, (b)~a $P_{4}$, and
(c)~a $C_{4}$.}
\label{AC4-fig}
\end{figure}

Alternating cycles can be used to characterize threshold and chain graphs.
In particular, threshold graphs are the graphs with no induced $AC_{4}$, and chain
graphs are the bipartite graphs with no induced $2K_{2}$~\cite%
{MahadevPeled95}. We define now for any bipartite graph $G$ the associated
split graph of $G$, which we will use extensively in the remainder of the
paper.

\begin{definition}
\label{bipartite-split-def}Let $G=(U,V,E)$ be a bipartite graph. The \emph{%
associated split graph of} $G$ is the split graph $H_{G}=(U\cup V,E^{\prime
})$, where $E^{\prime }=E\cup (V\times V)$, i.e.~$H_{G}$ is the split graph
made by $G$ by replacing the independent set $V$ of $G$ by a clique.
\end{definition}

\begin{observation}
\label{bipartite-split-AC4-obs}Let $G$ be a bipartite graph and $H_{G}$ be
the associated split graph of $G$. Then, $G$ has an induced $2K_{2}$ if and
only if $H_{G}$ has an induced $AC_{4}$, and in this case this $AC_{4}$ is a 
$P_{4}$.
\end{observation}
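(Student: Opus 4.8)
The plan is to prove both directions of the biconditional by working directly from the definitions. Recall that $H_G = (U \cup V, E')$ is obtained from the bipartite graph $G=(U,V,E)$ by turning the independent set $V$ into a clique, so $E' = E \cup (V \times V)$. An induced $2K_2$ in $G$ consists of two edges $u_1 v_1, u_2 v_2 \in E$ with $u_1, u_2 \in U$ and $v_1, v_2 \in V$, such that the four ``cross'' pairs $u_1 v_2$ and $u_2 v_1$ are non-edges (note $u_1 u_2 \notin E$ and $v_1 v_2 \notin E$ automatically, since $U$ and $V$ are independent sets in $G$). Recall also that an $AC_4$ on vertices $w_1,w_2,w_3,w_4$ requires $w_1 w_2, w_3 w_4 \in E'$ and $w_1 w_4, w_2 w_3 \notin E'$.

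For the forward direction, I would take an induced $2K_2$ in $G$ on $u_1,v_1,u_2,v_2$ as above and exhibit it as an $AC_4$ in $H_G$. The natural ordering is $w_1 = u_1$, $w_2 = v_1$, $w_3 = u_2$, $w_4 = v_2$: then $w_1 w_2 = u_1 v_1 \in E \subseteq E'$ and $w_3 w_4 = u_2 v_2 \in E \subseteq E'$ give the two required edges, while $w_2 w_3 = v_1 u_2 \notin E'$ (it is a cross non-edge of the $2K_2$, and it is not added by $V \times V$ since $u_2 \in U$) and $w_1 w_4 = u_1 v_2 \notin E'$ (likewise) give the two required non-edges. So these four vertices induce an $AC_4$ in $H_G$. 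To see that this $AC_4$ is in fact a $P_4$ rather than a $2K_2$ or $C_4$, I would count the edges of $H_G$ among the four vertices beyond the two prescribed ones: the pair $u_1 u_2$ is a non-edge of $H_G$ (both lie in $U$, which stays independent), whereas the pair $v_1 v_2$ is always an edge of $H_G$ because it lies in $V \times V$. Thus exactly one extra edge $v_1 v_2$ is present, giving three edges total on four vertices in the pattern $u_1 - v_1 - v_2 - u_2$, which is precisely a $P_4$.

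For the reverse direction, I would start from an induced $AC_4$ in $H_G$ on some four vertices and argue that it projects back to an induced $2K_2$ in $G$. The key structural observation is that any edge of $H_G$ that is not already an edge of $G$ lies inside $V \times V$, and any $AC_4$ has two vertex-disjoint edges together with two non-edges forming the alternation. Since $V$ induces a clique in $H_G$, no non-edge of $H_G$ can have both endpoints in $V$; hence each of the two non-edges of the $AC_4$ has at least one endpoint in $U$. I would use this to pin down how the four vertices distribute between $U$ and $V$ and to conclude that the two $AC_4$-edges must actually be edges of the original bipartite graph $G$ (each running between $U$ and $V$), so that erasing the clique $V \times V$ recovers an induced $2K_2$ on the same vertex set.

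The main obstacle is the case analysis in the reverse direction: I must rule out the possibility that an $AC_4$ in $H_G$ uses an edge from $V\times V$ that does not come from $E$, which would break the correspondence with a bipartite $2K_2$. The clean way to handle this is to exploit the fact established in the forward analysis, namely that any $AC_4$ living in $H_G$ must be a $P_4$ (it cannot be a $2K_2$ or $C_4$). Indeed, a $2K_2$ in $H_G$ would need two disjoint non-adjacent edges, but any two vertices of $V$ are adjacent in $H_G$, forcing at most one of the four vertices into $V$ and thereby making one of the required non-edges impossible; and a $C_4$ would need all four boundary pairs to behave cyclically, which again conflicts with $V$ being a clique. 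Once the $AC_4$ is known to be a $P_4$ of the form $u_1 - v_1 - v_2 - u_2$ with $u_1,u_2 \in U$ and $v_1,v_2 \in V$, the two end-edges $u_1 v_1$ and $u_2 v_2$ lie between $U$ and $V$ and hence belong to $E$, while the cross pairs $u_1 v_2$ and $u_2 v_1$ are non-edges of $H_G$ and therefore of $G$; this is exactly an induced $2K_2$ of $G$, completing the proof.
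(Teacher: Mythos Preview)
Your argument is correct; the paper states this as an Observation without proof, so there is no ``paper's approach'' to compare against. Your forward direction is clean, and your reverse direction correctly pins down the distribution of the four $AC_4$-vertices between $U$ and $V$ by exploiting that $V$ is a clique and $U$ is independent in $H_G$.

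One small wording slip: in ruling out an induced $2K_2$ inside $H_G$ you write that having at most one vertex in $V$ ``mak[es] one of the required non-edges impossible''. What actually fails is one of the required \emph{edges}: with at least three of the four vertices in $U$, one of the two matching edges of the putative $2K_2$ would lie entirely inside the independent set $U$. Likewise, your $C_4$ case deserves a sentence of detail: for a $C_4$ on $a\!-\!b\!-\!c\!-\!d\!-\!a$ the two non-edges are the diagonals $ac$ and $bd$, so at most one of $\{a,c\}$ and at most one of $\{b,d\}$ can lie in $V$; but then some cycle edge has both endpoints in $U$, which is impossible. With these two clarifications the reverse direction is complete, and your final step (reading off the induced $2K_2$ in $G$ from the $P_4$ $u_1\!-\!v_1\!-\!v_2\!-\!u_2$) is exactly right.
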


The next lemma connects the chain cover number $ch(G)$ of a bipartite graph $%
G$ with the threshold cover number $t(H_{G})$ of the associated split graph $%
H_{G}$ of $G$. Recall that the problem of deciding whether a graph $G$ has
threshold cover number at most a given number $k$~is NP-complete for $k\geq
3 $~\cite{Yannakakis82}, while it is polynomial for $k=2$~\cite%
{RaschleSimon95}.

\begin{lemma}[\hspace{-0.001mm}\protect\cite{MahadevPeled95}]
\label{chain-threshold-number-lem}Let $G=(U,V,E)$ be a bipartite graph. Then 
$ch(G)=t(H_{G})$.
\end{lemma}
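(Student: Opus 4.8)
The plan is to prove the two inequalities $ch(G) \le t(H_G)$ and $t(H_G) \le ch(G)$ separately, using the characterizations via forbidden alternating cycles. Recall that a bipartite graph is a chain graph if and only if it has no induced $2K_2$, and a general graph is a threshold graph if and only if it has no induced $AC_4$; by Observation~\ref{bipartite-split-AC4-obs}, for the associated split graph $H_G$ the only possible $AC_4$ is a $P_4$, and such a $P_4$ in $H_G$ corresponds exactly to an induced $2K_2$ in $G$. The key structural fact I would exploit throughout is that any split graph has the form $(U \cup V, E')$ where $V$ is a clique, $U$ is an independent set, and the edges between $U$ and $V$ are precisely the edges of $G$; this lets me translate covers back and forth by only worrying about how edges incident to $U$ are distributed.

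First I would show $ch(G) \le t(H_G)$. Suppose $H_G$ has a threshold cover $H_1, \ldots, H_k$ with $E' = \bigcup_{i=1}^k E(H_i)$, each $H_i$ threshold on the vertex set $U \cup V$. For each $i$ I would restrict $H_i$ to the edges between $U$ and $V$, obtaining a bipartite graph $G_i = (U, V, E(H_i) \cap E)$. Since $G_i$ is an induced subgraph of the threshold graph $H_i$, and induced subgraphs of threshold graphs are threshold, $G_i$ has no induced $AC_4$; being bipartite, $G_i$ in particular has no induced $2K_2$, hence $G_i$ is a chain graph. Because every edge of $E$ is an edge of $E'$, it lies in some $E(H_i)$ and therefore in the corresponding $E(G_i)$, so the $G_i$ form a chain cover of $G$. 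This gives $ch(G) \le k = t(H_G)$.

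For the reverse inequality $t(H_G) \le ch(G)$, the obstacle — and the crux of the proof — is that taking a chain cover $G_1, \ldots, G_k$ of $G$ and simply adding the clique $V \times V$ to each layer need not yield threshold graphs, because new $AC_4$'s (specifically $P_4$'s with two endpoints in $V$ and two in $U$) may appear, and we have no control over which layer the clique edges go into. The plan is to build each threshold layer as the associated split graph of the corresponding chain layer: set $H_i = H_{G_i} = (U \cup V, E(G_i) \cup (V \times V))$. Each $G_i$ is a chain graph, so it has no induced $2K_2$; by Observation~\ref{bipartite-split-AC4-obs} applied to $G_i$, its associated split graph $H_{G_i}$ has no induced $AC_4$, hence $H_i$ is a threshold graph. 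Finally I would verify that $\{H_i\}$ covers $H_G$: the clique edges $V \times V$ are present in every $H_i$, and each edge of $E$ between $U$ and $V$ lies in some $E(G_i)$ and hence in that $H_i$, so $\bigcup_i E(H_i) = E \cup (V \times V) = E'$. This shows $t(H_G) \le k = ch(G)$, and combining the two inequalities yields $ch(G) = t(H_G)$. The main delicate point to get right is the direction of Observation~\ref{bipartite-split-AC4-obs}: I must be careful that the absence of $2K_2$ in each individual chain layer $G_i$ is exactly what guarantees, via that observation, the absence of any $AC_4$ in the split completion $H_{G_i}$, and that no $AC_4$ can arise from edges spanning different vertex roles once $V$ is made a clique.
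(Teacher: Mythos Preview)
The paper does not give its own proof of this lemma; it is stated with a citation to Mahadev and Peled~\cite{MahadevPeled95}. So there is no in-paper argument to compare against, and your task was really to supply a self-contained proof. Your overall strategy---proving the two inequalities by translating chain covers of $G$ into threshold covers of $H_G$ and vice versa, using Observation~\ref{bipartite-split-AC4-obs}---is the natural one and your second direction ($t(H_G)\le ch(G)$) is clean and correct.

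There is, however, a genuine slip in the first direction. You write that ``$G_i$ is an induced subgraph of the threshold graph $H_i$, and induced subgraphs of threshold graphs are threshold.'' This is false: $G_i=(U,V,E(H_i)\cap E)$ has the same vertex set $U\cup V$ as $H_i$, but it omits any edges of $H_i$ that lie inside $V$, so it is a spanning subgraph, not an induced one. The conclusion you want (that $G_i$ is a chain graph) is still correct, but it needs a direct argument. Suppose $G_i$ contains an induced $2K_2$ on $u_1,v_1,u_2,v_2$ with $u_1v_1,u_2v_2\in E(G_i)$ and $u_1v_2,u_2v_1\notin E(G_i)$. Since every $U$--$V$ edge of $H_i$ lies in $E(H_G)$ and hence in $E$, the non-membership $u_1v_2\notin E(G_i)=E(H_i)\cap E$ forces $u_1v_2\notin E(H_i)$, and similarly $u_2v_1\notin E(H_i)$; also $u_1u_2\notin E(H_i)$ since $U$ is independent in $H_G$. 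Then the four vertices induce either a $2K_2$ or a $P_4$ in $H_i$ (depending on whether $v_1v_2\in E(H_i)$), in either case an induced $AC_4$, contradicting that $H_i$ is threshold. Replacing your ``induced subgraph'' sentence with this short check makes the proof complete.
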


The next two definitions of a \emph{conflict} between two edges and the 
\emph{conflict graph} are essential for our results.

\begin{definition}
\label{conflict-edges-def}Let $G=(V,E)$ be a graph and $e_{1},e_{2}\in E$.
If the vertices of $e_{1}$ and $e_{2}$ build an $AC_{4}$ in $G$, then $e_{1}$
and $e_{2}$ are in \emph{conflict}, and in this case we denote $e_{1}||e_{2}$
in $G$. Furthermore, an edge $e\in E$ is \emph{committed} if there exists an
edge $e^{\prime }\in E$ such that $e||e^{\prime }$; otherwise $e$ is \emph{%
uncommitted}.
\end{definition}

\begin{definition}[\hspace{-0.001mm}\protect\cite{RaschleSimon95}]
\label{conflict-graph-def}Let ${G=(V,E)}$ be a graph. The \emph{conflict
graph} ${G^{\ast }=(V^{\ast },E^{\ast })}$ \emph{of} $G$ is defined~by%
\vspace{-0.1cm}

\begin{itemize}
\item $V^{\ast }=E$ and

\item for every $e_{1},e_{2}\in E$, $e_{1}e_{2}\in E^{\ast }$ if and only if 
$e_{1}||e_{2}$ in $G$.
\end{itemize}
\end{definition}

\begin{observation}
\label{uncommitted-isolated-obs}Let ${G=(V,E)}$ be a graph and let $e\in E$.
If $e$ is uncommitted, then $e$ is an isolated vertex in the conflict graph $%
G^{\ast }$ of $G$.
\end{observation}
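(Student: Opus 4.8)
The plan is to prove this purely by unwinding the two relevant definitions, as the statement is in essence a translation between the language of conflicts (Definition~\ref{conflict-edges-def}) and the language of the conflict graph (Definition~\ref{conflict-graph-def}). The first thing I would record is that the vertex set of $G^{\ast}$ is by construction $V^{\ast}=E$, so that ``$e$ is isolated in $G^{\ast}$'' means precisely that there is no $e'\in E$ with $ee'\in E^{\ast}$.

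Next I would assume that $e$ is uncommitted. By Definition~\ref{conflict-edges-def} this means that there is no edge $e'\in E$ whose endpoints, together with those of $e$, build an $AC_{4}$ in $G$; equivalently, there is no $e'\in E$ with $e\,||\,e'$ in $G$. By the second defining property in Definition~\ref{conflict-graph-def}, a vertex $e'$ of $G^{\ast}$ is adjacent to the vertex $e$ if and only if $e\,||\,e'$ in $G$. Combining the two statements, no vertex of $G^{\ast}$ is adjacent to $e$, which is exactly the assertion that $e$ is an isolated vertex of $G^{\ast}$.

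I do not expect any genuine obstacle here: the observation is a direct restatement of the definitions, and the only point requiring care is the bookkeeping induced by the identification $V^{\ast}=E$, which lets us treat the same object $e$ both as an edge of $G$ (for which we ask whether a conflicting partner exists) and as a vertex of $G^{\ast}$ (for which we ask whether an incident edge exists). Once this identification is fixed the argument is immediate, and reading the same two definitions in the opposite direction in fact yields the converse as well, so that the uncommitted edges of $G$ are exactly the isolated vertices of $G^{\ast}$.
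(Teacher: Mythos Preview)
Your proposal is correct and matches the paper's treatment: the paper states this as an observation without proof, since it follows immediately by unwinding Definitions~\ref{conflict-edges-def} and~\ref{conflict-graph-def}, which is exactly what you do.
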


\begin{observation}
\label{split-graph-clique-no-conflict-obs}Let $G=(V,E)$ be a split graph.
Let $K$ and $I$ be a partition of $V$, such that $K$ is a clique and $I$ is
an independent set (such a partition always exists for split graphs). Then,
every edge of $K$ is uncommitted.
\end{observation}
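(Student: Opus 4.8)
The plan is to argue by contradiction, showing that an edge whose two endpoints both lie in the clique $K$ cannot participate in any $AC_{4}$, and hence cannot be in conflict with any other edge. Let $e=ab$ be an edge of $G$ with $a,b\in K$, and suppose toward a contradiction that $e$ is committed. Then by Definition~\ref{conflict-edges-def} there exists an edge $e^{\prime}=cd\in E$ with $e||e^{\prime}$, which by definition means that the four vertices $a,b,c,d$ build an $AC_{4}$ in $G$.

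Next I would unwind the definition of an $AC_{4}$ for $k=2$ (the case spelled out just before Figure~\ref{AC4-fig}): among the four vertices, the two edges of the $AC_{4}$ must be exactly $ab$ and $cd$, while the two ``crossing'' pairs must be non-edges. Up to relabelling the two endpoints of $e^{\prime}$, this forces the non-edges to be $ad\notin E$ and $bc\notin E$.

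The key step is then to invoke the two structural properties of the split partition simultaneously. Since $a\in K$ and $ad\notin E$, the vertex $d$ cannot lie in $K$ (otherwise $ad$ would be an edge of the clique), so $d\in I$; symmetrically, from $b\in K$ and $bc\notin E$ we obtain $c\in I$. But then both endpoints of $e^{\prime}=cd$ lie in the independent set $I$, whence $cd\notin E$, contradicting $e^{\prime}\in E$. Therefore no such $e^{\prime}$ exists, and $e$ is uncommitted.

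I do not expect a genuine obstacle here; the argument is a short direct deduction from the definitions. The only point requiring care is the bookkeeping of the $AC_{4}$ labelling, namely verifying that whichever way the endpoints $c,d$ of $e^{\prime}$ are matched against $a,b$, the two required non-edges always force $\{c,d\}\subseteq I$, so that the same contradiction with the independence of $I$ is reached in every case.
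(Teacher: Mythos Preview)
Your argument is correct and is precisely the natural verification of this fact; the paper states it as an Observation without proof, so there is nothing to compare against beyond noting that your contradiction argument (forcing both endpoints of $e'$ into $I$ from the two non-edges of the $AC_4$) is exactly the intended one-line justification.
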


\begin{lemma}
\label{AC6-if-not-in-conflict-all-not-isolated-lem}Let $G$ be a graph and
let the vertices $v_{1},\ldots ,v_{6}$ of $G$ build an $AP_{6}$ (an
alternating path of length $6$). Assume that among the three edges $%
\{v_{2}v_{3},v_{4}v_{5},v_{6}v_{1}\}$ of this $AP_{6}$, no pair of edges is
in conflict. Then the edges $v_{3}v_{6},v_{4}v_{1},v_{5}v_{2}$ exist in $G$
and $v_{4}v_{5}||v_{3}v_{6}$, $v_{2}v_{3}||v_{4}v_{1}$, and $%
v_{6}v_{1}||v_{5}v_{2}$.
\end{lemma}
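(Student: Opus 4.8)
The plan is to unwind the lemma directly from the definition of conflict (Definition~\ref{conflict-edges-def}), since every assertion turns out to be an immediate consequence of the $AP_6$ edge/non-edge pattern. First I would record this pattern: by Definition~\ref{AC-2k-def} the three non-edges (bases) are $v_1v_2,v_3v_4,v_5v_6\notin E$ and the three edges (ceilings) are $v_2v_3,v_4v_5,v_6v_1\in E$, and the six vertices of an $AP_6$ are pairwise distinct. I would then make explicit the elementary reformulation of conflict for two disjoint edges $ab,cd$: their four endpoints induce an $AC_4$ precisely when one of the two cross-pairings between $\{a,b\}$ and $\{c,d\}$ consists of two non-edges, that is, $ab||cd$ iff ($ac\notin E$ and $bd\notin E$) or ($ad\notin E$ and $bc\notin E$). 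This is just a restatement of the $AC_4$ condition $w_1w_2,w_3w_4\in E$, $w_2w_3,w_1w_4\notin E$ with $\{w_1,w_2\}=\{a,b\}$ and $\{w_3,w_4\}=\{c,d\}$.

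Next I would derive the existence of the three diagonal edges, one from each cyclically consecutive pair of ceiling edges. Take $v_2v_3$ and $v_4v_5$: the two candidate cross-pairings are $\{v_2v_4,v_3v_5\}$ and $\{v_2v_5,v_3v_4\}$. Since $v_3v_4\notin E$ is a base, the second pairing would witness a conflict as soon as $v_2v_5\notin E$; the hypothesis that $v_2v_3$ and $v_4v_5$ are not in conflict therefore forces $v_2v_5\in E$. Applying the same argument to the pair $v_4v_5,v_6v_1$ (using the base $v_5v_6\notin E$) yields $v_4v_1\in E$, and to the pair $v_6v_1,v_2v_3$ (using the base $v_1v_2\notin E$) yields $v_3v_6\in E$. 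These are exactly the three diagonal edges claimed.

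Finally I would read off the three conflicts now that the diagonals are available. For $v_4v_5$ and the new edge $v_3v_6$, the cross-pairing $\{v_4v_3,v_5v_6\}$ consists of the two bases $v_3v_4,v_5v_6\notin E$, so the four (distinct) vertices $v_3,v_4,v_5,v_6$ induce an $AC_4$ and $v_4v_5||v_3v_6$. Symmetrically, $v_2v_3||v_4v_1$ via the bases $v_1v_2,v_3v_4\notin E$, and $v_6v_1||v_5v_2$ via the bases $v_5v_6,v_1v_2\notin E$, completing all three claimed conflicts.

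There is no genuine obstacle here: the statement is a bookkeeping exercise layered on top of the conflict definition. The only points that require care are correctly handling the two alternatives in the conflict criterion (so as not to conclude more than the hypothesis gives) and keeping track of which three of the $AP_6$'s relations are the non-edges, since it is precisely these bases that drive both the existence argument and the conflict argument. Distinctness of the six vertices, which is built into the notion of an $AP_6$, is what legitimises invoking the $AC_4$/conflict definition on the relevant four-vertex subsets.
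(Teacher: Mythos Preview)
Your argument is correct and follows essentially the same route as the paper's proof: both derive each diagonal edge by observing that its absence, together with the adjacent base non-edge, would put two of the ceiling edges in conflict, and then read off the three claimed conflicts from the remaining pairs of bases. The only difference is that you spell out the two-alternative conflict criterion and the three symmetric cases explicitly, whereas the paper handles one case and invokes symmetry.
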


\begin{proof}
Suppose that $v_{3}v_{6}$ is not an edge of $G$. Then the edges $v_{2}v_{3}$
and $v_{6}v_{1}$ are in conflict, since $v_{1}v_{2}$ is not an edge of $G$
(cf.~Figure~\ref{AC6-2-fig}), which is a contradiction to the assumption of
the lemma. Therefore $v_{3}v_{6}$ is an edge of $G$. By symmetry, it follows
that also $v_{4}v_{1}$ and $v_{5}v_{2}$ are edges in $G$. Note now that the
edges $v_{4}v_{5}||v_{3}v_{6}$ are in conflict, since $v_{3}v_{4}$ and $%
v_{5}v_{6}$ are not edges of $G$. By symmetry, it follows that also $%
v_{2}v_{3}||v_{4}v_{1}$, and $v_{6}v_{1}||v_{5}v_{2}$.\qed
\end{proof}

Note that the threshold cover number $t(G)$ of a graph $G=(V,E)$ equals the
smallest $k$, such that the edge set $E$ of $G$ can be partitioned into $k$
sets $E_{1},E_{2},\ldots ,E_{k}$, each having a \emph{threshold completion in%
} $G$ (that is, there exists for every $i=1,2,\ldots ,k$ an edge set $%
E_{i}^{\prime }$, such that $E_{i}\subseteq E_{i}^{\prime }\subseteq E$ and $%
(V,E_{i}^{\prime })$ is a threshold graph). The following characterization
of subgraphs that admit a threshold completion in a given graph $G$ has been
proved in~\cite{HIP81}.

\begin{lemma}[\hspace{-0.001cm}\protect\cite{HIP81}]
\label{threshold-completion-lem}Let $H$ be a subgraph of a graph $G=(V,E)$.
Then $H$ has a threshold completion in $G$ if and only if $G$ has no $%
AC_{2k} $, $k\geq 2$, on the edges of $H$.
\end{lemma}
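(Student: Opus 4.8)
The plan is to reformulate ``$H$ has a threshold completion in $G$'' as the feasibility of a system of linear inequalities in vertex weights, and then to read the forbidden $AC_{2k}$ off either a feasible solution (for one direction) or a dual certificate of infeasibility (for the other). Concretely, by the weight characterization of threshold graphs, a completion $E(H)\subseteq E_{H}'\subseteq E$ with $(V,E_{H}')$ threshold exists if and only if there are real weights $\{a_{v}\}_{v\in V}$ with $a_{u}+a_{v}\ge 1$ for every $uv\in E(H)$ and $a_{u}+a_{v}<1$ for every $uv\notin E$; given such weights one simply sets $E_{H}'=\{uv : a_{u}+a_{v}\ge 1\}$, which then satisfies $E(H)\subseteq E_{H}'\subseteq E$. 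So the lemma becomes: this weight system is feasible if and only if $G$ has no $AC_{2k}$ on the edges of $H$.

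For the necessity direction (feasibility $\Rightarrow$ no $AC_{2k}$), I would argue by contradiction using a summation trick. Suppose weights $\{a_{v}\}$ exist and that $v_{1},\dots,v_{2k}$ build an $AC_{2k}$ in $\widetilde{E}=E(H)$. Every even-indexed pair $v_{i}v_{i+1}$ lies in $E(H)$, so $a_{v_{i}}+a_{v_{i+1}}\ge 1$; summing over the $k$ even indices gives $S_{\mathrm{e}}\ge k$. Every odd-indexed pair is a non-edge of $G$, so $a_{v_{i}}+a_{v_{i+1}}<1$; summing over the $k$ odd indices gives $S_{\mathrm{o}}<k$. Since the cycle is closed, each vertex $v_{j}$ occurs in exactly one even-indexed and one odd-indexed consecutive pair, whence $S_{\mathrm{e}}=S_{\mathrm{o}}=\sum_{j=1}^{2k}a_{v_{j}}$, contradicting $S_{\mathrm{e}}\ge k>S_{\mathrm{o}}$. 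This step never uses distinctness of the $v_{j}$, matching Definition~\ref{AC-2k-def}.

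For the sufficiency direction (no $AC_{2k}$ $\Rightarrow$ feasibility) I would use a theorem of the alternative (Motzkin's transposition theorem) on the weight system. If it is infeasible, there exist nonnegative multipliers $\lambda_{uv}$ on the edges $uv\in E(H)$ and $\mu_{uv}$ on the non-edges $uv\notin E$, not all zero, whose combination cancels every variable and produces a numerical contradiction; cancellation of the coefficient of each $a_{v}$ says precisely that, at every vertex, the total $\lambda$-weight on incident $H$-edges equals the total $\mu$-weight on incident non-edges. Scaling to integers, I would form the multigraph with $\lambda_{uv}$ red copies of each $H$-edge and $\mu_{uv}$ blue copies of each non-edge; the balance condition says the red-degree equals the blue-degree at every vertex, so this multigraph decomposes into closed walks that alternate red and blue. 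Each such walk uses equally many red and blue edges, hence has even length $2k$ and is exactly an $AC_{2k}$ in $E(H)$; a walk of length $2$ is impossible, as a pair cannot be simultaneously an edge of $H$ and a non-edge of $G$, so $k\ge 2$. This produces the forbidden alternating cycle, completing the contrapositive.

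The main obstacle is the sufficiency direction, and within it the passage from the dual certificate to an honest $AC_{2k}$: one must (i) invoke the correct theorem of the alternative to handle the mixture of non-strict edge constraints with strict non-edge constraints, and (ii) carry out the Eulerian-type decomposition of the balanced red/blue multigraph into alternating closed walks and verify that at least one walk has length at least $4$. The necessity direction, by contrast, reduces to the immediate weight-summation argument above.
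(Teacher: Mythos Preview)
The paper does not prove this lemma; it is quoted from~\cite{HIP81} without argument. Your proposal is therefore being compared against the original source rather than the present paper, and in fact your outline \emph{is} the Hammer--Ibaraki--Peled argument: reduce ``threshold completion in $G$'' to the feasibility of the weight system $a_u+a_v\ge 1$ on $E(H)$ and $a_u+a_v<1$ on $\overline{E}$, dispose of necessity by the telescoping sum around an $AC_{2k}$, and obtain sufficiency from a theorem of the alternative. Both directions are correct as you state them.

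Two small points worth making explicit in the write-up. First, in the dual certificate you should note that once the balance condition $\sum_{e\ni v}\lambda_e=\sum_{f\ni v}\mu_f$ holds at every vertex, summing over $v$ gives $\sum_e\lambda_e=\sum_f\mu_f$, so the numerical side of Motzkin's alternative is automatically met and the content of infeasibility is exactly the existence of a nonzero, rational (hence, after scaling, integral) balanced $(\lambda,\mu)$. Second, the decomposition of the red/blue multigraph into alternating closed walks is the one step that deserves a sentence of proof: at each vertex pair the incident red edges bijectively with the incident blue edges; this pairing defines a transition system whose orbits are closed alternating walks, and a length-$2$ orbit would force a pair to lie simultaneously in $E(H)\subseteq E$ and in $\overline{E}$, which is impossible. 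With those two sentences added, your argument is complete and matches the cited proof.
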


If the conditions of Lemma~\ref{threshold-completion-lem} are satisfied,
then such a threshold completion of $H$ in $G$ can be computed in linear
time, as the next lemma states.

\begin{lemma}[\hspace{-0.001cm}\protect\cite{RaschleSimon95}]
\label{threshold-completion-construction-lem}If a subgraph $H$ of $G=(V,E)$
has a threshold completion in $G$, then it can be computed in $O(|V|+|E|)$
time.
\end{lemma}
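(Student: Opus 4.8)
My plan is to realise the completion by a greedy ``peeling'' procedure based on the classical fact that every threshold graph can be decomposed by repeatedly stripping off an isolated or a dominating vertex. Concretely, I maintain a set $R$ of still-undecided vertices (initially $R=V$) under the convention that a pair $\{u,v\}$ with $u,v\in R$ is \emph{forced} if $uv\in E(H)$, \emph{forbidden} if $uv\notin E$, and \emph{free} otherwise. Call a vertex $v\in R$ \emph{isolatable} if it is incident to no forced pair inside $R$ (equivalently, the restriction of its $H$-degree to $R$ is $0$) and \emph{universalizable} if it is incident to no forbidden pair inside $R$ (equivalently, $v$ is $G$-adjacent to every other vertex of $R$). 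At each step the algorithm selects any isolatable or universalizable vertex $v$, fixes all free pairs at $v$ accordingly (delete them if $v$ is made isolated, insert them if $v$ is made dominating), removes $v$ from $R$, and recurses; the recorded decisions together with $E(H)$ define the output edge set $E'$.

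Correctness rests on two observations I would prove by a short exchange argument. First, using the weight characterisation of threshold graphs from Section~\ref{notation-subsec}: sorting by weight, either the minimum-weight vertex is isolated or the maximum-weight vertex is dominating, so every nonempty threshold graph contains an isolated or a dominating vertex. Applying this to any completion $E^{*}$ of the current subproblem (which exists by hypothesis) shows that an isolatable or universalizable vertex always exists while $R\neq\emptyset$. Second, each single greedy move is safe: if $v$ is isolatable, then $E^{*}$ restricted to $R\setminus\{v\}$ is again a valid completion (induced subgraphs of threshold graphs are threshold, and no forced edge at $v$ is lost), and re-attaching $v$ as an isolated vertex to \emph{any} completion of $R\setminus\{v\}$ stays threshold and respects all constraints; the dominating case is symmetric, the only point to check being that the inserted edges lie in $E$, which is exactly the universalizable condition. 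By induction the peeling never gets stuck and returns a completion.

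The real work is the $O(|V|+|E|)$ implementation, and here the main obstacle is detecting universalizable vertices without ever scanning non-edges, of which there may be $\Theta(|V|^{2})$. I therefore never maintain forbidden-pair counts directly. Isolatable vertices are easy: I keep the restricted degree $d_H(v)=|N(v)\cap R|$ in $H$ and a queue of vertices with $d_H(v)=0$; deleting a vertex $w$ decrements $d_H(u)$ only along edges $uw\in E(H)$, for a total of $O(|E(H)|)=O(|E|)$ work. For universalizability I keep the restricted $G$-degree $d_G(v)=|N(v)\cap R|$ in $G$ bucketed by value, together with the current size $r=|R|$; a vertex is universalizable exactly when $d_G(v)=r-1$, i.e.\ when it sits in the top bucket. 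Deleting $w$ decrements $d_G(u)$ only along edges $uw\in E$ (moving each such $u$ down one bucket) and decreases $r$ by one, so after each deletion the newly universalizable vertices are precisely those occupying bucket index $r-1$ for the updated $r$.

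Finally I would bound the running time. The gap $r-1-d_G(v)$ is nonnegative and nonincreasing throughout (deleting a non-neighbour lowers it by one, deleting a neighbour leaves it unchanged), so each vertex becomes universalizable at a well-defined moment and is enqueued at most once; symmetrically each vertex is enqueued as isolatable at most once. Every deletion spends $O(\deg_G(w)+\deg_H(w))$ time on degree and bucket updates, the threshold bucket index $r-1$ only ever decreases, and each vertex is popped exactly once; summing yields $O(|V|+|E|)$ in total. The decisions recorded along the way give the promised threshold completion $E'$ with $E(H)\subseteq E'\subseteq E$, which is emitted within the same bound.
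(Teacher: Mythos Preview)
The paper does not prove this lemma; it is quoted as a black box from Raschle and Simon~\cite{RaschleSimon95}. So there is no in-paper argument to compare against, and your task was effectively to supply an independent proof.

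Your peeling argument is correct. The safety of an arbitrary isolatable/universalizable choice is exactly the exchange step you describe: restricting \emph{any} completion $E^{*}$ to $R\setminus\{v\}$ preserves the invariant regardless of which $v$ you peel, and re-attaching $v$ as isolated (resp.\ dominating) is legal precisely because isolatability (resp.\ universalizability) says there are no forced (resp.\ forbidden) pairs at $v$. The existence of some peelable vertex follows from the isolated/dominating decomposition of threshold graphs applied to $E^{*}$, as you note. This is the standard proof strategy and matches the spirit of the Raschle--Simon construction.

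One small point of presentation in the running-time analysis: your sentence ``the newly universalizable vertices are precisely those occupying bucket index $r-1$'' is not quite right as written, since the top bucket contains \emph{all} currently universalizable vertices, old and new (a vertex with gap $0$ is always adjacent to the deleted $w$ and hence moves down with the top bucket). This is harmless for the algorithm---you can simply \emph{use the top bucket itself as the universalizable queue} and pop its head in $O(1)$, with no separate enqueueing step needed---but if you insist on a separate queue and scan the top bucket each round, you must also remove enqueued vertices from the bucket structure to keep the total scan cost $O(|V|)$. Either fix is immediate; your monotone-gap observation already shows each vertex is selected at most once, and the rest of the $O(|V|+|E|)$ accounting is exactly as you wrote.
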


\begin{corollary}
\label{chain-number-2-char-cor}Let $G=(V,E)$ be a graph. Then, $t(G)=1$ if
and only if $G$ has no $AC_{2k}$, $k\geq 2$. Furthermore, $t(G)\leq 2$ if
and only if the set $E$ of edges can be partitioned into two sets $E_{1}$
and~$E_{2}$, such that $G$ has no $AC_{2k}$, $k\geq 2$, in each $E_{i}$, $%
i=1,2$.
\end{corollary}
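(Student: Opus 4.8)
The plan is to deduce Corollary~\ref{chain-number-2-char-cor} directly from Lemma~\ref{threshold-completion-lem}, using the reformulation of the threshold cover number $t(G)$ in terms of threshold completions that was stated just before that lemma. Recall that $t(G)$ equals the smallest $k$ for which $E$ partitions into sets $E_{1},\ldots,E_{k}$ each admitting a threshold completion in $G$. So the whole statement is really a matter of instantiating this definition at $k=1$ and $k=2$ and translating ``admits a threshold completion'' into the alternating-cycle language provided by Lemma~\ref{threshold-completion-lem}.

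First I would handle the $t(G)=1$ case. By the definition above, $t(G)=1$ means exactly that the single set $E_{1}=E$ admits a threshold completion in $G$; that is, the subgraph $H=G$ itself has a threshold completion in $G$. Applying Lemma~\ref{threshold-completion-lem} with $H=G$ gives that this holds if and only if $G$ has no $AC_{2k}$, $k\geq 2$, on the edges of $G$ — which by Definition~\ref{AC-2k-def} is precisely the statement that $G$ has no $AC_{2k}$ (since ``$G$ has an alternating cycle $AC_{2k}$'' was defined as having one in the edge set $\widetilde{E}=E$). This settles the first equivalence. I would note in passing that $t(G)\geq 1$ always holds for any graph with at least one edge, so $t(G)=1$ is the minimal nontrivial case.

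Next I would treat the $t(G)\leq 2$ case. By definition of $t(G)$, we have $t(G)\leq 2$ if and only if $E$ can be partitioned into two sets $E_{1}$ and $E_{2}$, each admitting a threshold completion in $G$. Applying Lemma~\ref{threshold-completion-lem} separately to the subgraphs $H_{1}=(V,E_{1})$ and $H_{2}=(V,E_{2})$, the condition ``$E_{i}$ admits a threshold completion in $G$'' is equivalent to ``$G$ has no $AC_{2k}$, $k\geq 2$, on the edges of $E_{i}$.'' Combining the two gives exactly the stated condition: $E$ partitions into $E_{1},E_{2}$ with $G$ having no $AC_{2k}$ in each $E_{i}$. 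The minor subtlety is that the definition of $t(G)$ uses a genuine partition (the $E_{i}$ are disjoint and cover $E$), which matches the corollary's phrasing ``$E$ of edges can be partitioned into two sets,'' so no adjustment is needed; one could even allow $E_{1}$ or $E_{2}$ to be empty to cover the degenerate situation $t(G)\leq 1$, and an empty edge set trivially has no alternating cycle.

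Since every step is a direct substitution into Lemma~\ref{threshold-completion-lem}, I do not expect any real obstacle here — the corollary is essentially a restatement of that lemma through the completion-based characterization of $t(G)$. The only point requiring a moment's care is aligning the phrase ``$AC_{2k}$ on the edges of $H$'' in Lemma~\ref{threshold-completion-lem} with ``$AC_{2k}$ in the edge set $\widetilde{E}$'' from Definition~\ref{AC-2k-def}, and confirming that restricting attention to the edge subset $E_{i}$ is exactly what Definition~\ref{AC-2k-def} means by an $AC_{2k}$ \emph{in} $\widetilde{E}$ (edges of the cycle at even positions must lie in $E_{i}$, while non-edges are taken with respect to all of $E$). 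Once this correspondence is made explicit, both equivalences follow immediately.
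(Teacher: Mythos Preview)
Your proposal is correct and follows essentially the same approach as the paper: both proofs invoke the reformulation of $t(G)$ via threshold completions and then apply Lemma~\ref{threshold-completion-lem} at $k=1$ and $k=2$. The only cosmetic difference is that the paper phrases the $t(G)=1$ case as ``$t(G)=1$ iff $G$ is a threshold graph'' before citing Lemma~\ref{threshold-completion-lem}, whereas you go directly through the completion formulation; the content is identical.
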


\begin{proof}
First note that $t(G)=1$ if and only if $G$ is a threshold graph. Therefore,
Lemma~\ref{threshold-completion-lem} implies that $t(G)=1$ if and only if $G$
has no $AC_{2k}$, $k\geq 2$.

Recall that the threshold cover number $t(G)$ of a graph $G=(V,E)$ equals
the smallest $k$, such that the edge set $E$ of $G$ can be partitioned into $%
k$ sets $E_{1},E_{2},\ldots ,E_{k}$, each having a \emph{threshold
completion in} $G$. Therefore, if $t(G)\leq 2$, Lemma~\ref%
{threshold-completion-lem} implies that $E$ can be partitioned into two sets 
$E_{1}$ and~$E_{2}$, such that $G$ has no $AC_{2k}$, $k\geq 2$, in each $%
E_{i}$, $i=1,2$. Note that, in the case where $t(G)=1$ (i.e.~$G$ is a
threshold graph), we can set $E_{1}=E$ and $E_{2}=\emptyset $. Conversely,
suppose that $E$ can be partitioned into two such sets $E_{1}$ and $E_{2}$.
Then Lemma~\ref{threshold-completion-lem} implies that both graphs $%
G_{1}=(V,E_{1})$ and $G_{2}=(V,E_{2})$ have a threshold completion in $G$,
where $G_{1}\cup G_{2}=G$. Therefore $t(G)\leq 2$.\qed
\end{proof}

For every graph $G$, it can be easily proved that the chromatic number $\chi
(G^{\ast })$ of its conflict graph $G^{\ast }$ provides a lower bound for
the threshold cover number $t(G)$ of $G$, as the text lemma states.

\begin{lemma}[\hspace{-0.001mm}\protect\cite{MahadevPeled95}]
\label{chromatic-number-conflict-lower-bound-lem}Let $G$ be a graph. Then $%
\chi (G^{\ast })\leq t(G)$.
\end{lemma}

Lemma~\ref{chromatic-number-conflict-lower-bound-lem} immediately implies
that a necessary condition for a graph $G$ to have threshold cover number $%
t(G)\leq 2$ is that $\chi (G^{\ast })\leq 2$, i.e.~that $G^{\ast }$ is a
bipartite graph. The main result of~\cite{RaschleSimon95} is the next
theorem, which proves that this is also a sufficient condition for graphs $G$
with $\chi (G^{\ast })\leq 2$. 

\begin{theorem}[\hspace{-0.001cm}\protect\cite{RaschleSimon95}]
\label{main-result-simon95-thm}If the conflict graph $G^{\ast }$ of a graph $%
G=(V,E)$ is bipartite (i.e.~$\chi (G^{\ast })\leq 2$), then $t(G)\leq 2$.
Moreover, $E$ can be partitioned in $O(|E|(|V|+|E|))$ time into two sets~$%
E_{1}$ and~$E_{2}$, such that $G$ has no $AC_{2k}$, $k\geq 2$, in each $%
E_{i} $, $i=1,2$.
\end{theorem}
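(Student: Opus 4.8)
The plan is to prove the nontrivial (sufficiency) direction, since $\chi(G^{\ast})\leq t(G)$ already gives necessity by Lemma~\ref{chromatic-number-conflict-lower-bound-lem}. Assuming $G^{\ast}$ is bipartite, I would fix a proper $2$-coloring of $G^{\ast}$ and let $E_{1},E_{2}$ be its two color classes, so that $E=E_{1}\cup E_{2}$; the uncommitted edges are isolated vertices of $G^{\ast}$ by Observation~\ref{uncommitted-isolated-obs}, so they may be placed in either class freely. By the definition of the conflict graph (Definition~\ref{conflict-graph-def}), no two edges of the same class are in conflict, i.e.\ $G$ has no $AC_{4}$ in $E_{1}$ and none in $E_{2}$. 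By Corollary~\ref{chain-number-2-char-cor}, to conclude $t(G)\leq 2$ it then suffices to upgrade ``no $AC_{4}$'' to ``no $AC_{2k}$ for every $k\geq 2$'' in each $E_{i}$, which is exactly the content I still have to establish.

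The first reduction is to the case $k=3$. Suppose, for contradiction, that some class, say $E_{1}$, contains an $AC_{2k}$ with $k\geq 3$, and take one of minimum length on vertices $v_{1},\ldots,v_{2k}$ with edges $v_{2}v_{3},v_{4}v_{5},\ldots,v_{2k}v_{1}\in E_{1}$ and non-edges $v_{1}v_{2},v_{3}v_{4},\ldots\notin E$. I would analyze a chord $v_{a}v_{b}$ of this cycle: depending on whether $v_{a}v_{b}$ is an edge or a non-edge of $G$, and (if an edge) on its color, the chord either closes a $4$-cycle witnessing an $AC_{4}$ inside $E_{1}$ --~impossible, since $E_{1}$ is an independent set in $G^{\ast}$~-- or it shortcuts the alternating cycle into a strictly shorter alternating cycle still lying entirely in $E_{1}$, contradicting minimality. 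Carrying out this case distinction systematically should force the minimum bad cycle to have length exactly $6$, i.e.\ to be an $AC_{6}$ (one of the configurations $AP_{5}$, $AP_{6}$, or a double $AP_{6}$ of Figure~\ref{AC6-both-fig}).

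It then remains to rule out an $AC_{6}$ inside a single class. Its three edges $v_{2}v_{3},v_{4}v_{5},v_{6}v_{1}$ lie in $E_{1}$ and are therefore pairwise non-conflicting, so Lemma~\ref{AC6-if-not-in-conflict-all-not-isolated-lem} applies: the diagonals $v_{3}v_{6},v_{4}v_{1},v_{5}v_{2}$ exist in $G$ and satisfy $v_{4}v_{5}||v_{3}v_{6}$, $v_{2}v_{3}||v_{4}v_{1}$, and $v_{6}v_{1}||v_{5}v_{2}$. Since each diagonal conflicts with an edge of $E_{1}$, the proper coloring forces all three diagonals into $E_{2}$. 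I would then run a short case analysis on the still-undetermined adjacencies among $\{v_{1},\ldots,v_{6}\}$ (for instance $v_{1}v_{3}$ and $v_{4}v_{6}$) to exhibit a further conflict among the three diagonals, or between a diagonal and another forced edge; any such conflict is either an $AC_{4}$ with both edges in $E_{2}$, or it closes an odd closed walk in $G^{\ast}$, in both cases contradicting the bipartiteness of $G^{\ast}$. The degenerate $AP_{5}$ and the double $AP_{6}$ are treated as special cases of the same analysis. This establishes the existence statement.

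For the algorithmic claim I would make this argument constructive: compute the conflicts (Definition~\ref{conflict-edges-def}), $2$-color the resulting bipartite conflict graph component by component, set $E_{1},E_{2}$ to the color classes (assigning uncommitted edges arbitrarily), and materialize the two threshold completions via Lemma~\ref{threshold-completion-construction-lem}; bounding the cost of detecting conflicts and of the colorings yields the stated $O(|E|(|V|+|E|))$ bound. The hard part will be the last two steps of the combinatorial argument: proving that a minimum-length in-class alternating cycle can always be shortened down to an $AC_{6}$ by a chord, and then squeezing an odd cycle out of the forced diagonals of that $AC_{6}$~-- this is where the bipartiteness of $G^{\ast}$, rather than mere $AC_{4}$-freeness, is genuinely used.
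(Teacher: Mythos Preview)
Your plan breaks at step~4. It is \emph{not} true that an arbitrary proper $2$-coloring of a bipartite $G^{\ast}$ is automatically free of monochromatic $AC_{6}$'s; the three edges of an $AP_{6}$ can lie in three distinct connected components of $G^{\ast}$, and then a $2$-coloring that places all three components in the same class produces a monochromatic $AP_{6}$ without any contradiction to bipartiteness. Lemma~\ref{AC6-if-not-in-conflict-all-not-isolated-lem} does force the three diagonals $v_{3}v_{6},v_{4}v_{1},v_{5}v_{2}$ into the opposite class, but nothing forces a conflict \emph{among} those diagonals or an odd closed walk in $G^{\ast}$; your hoped-for ``short case analysis'' simply will not close. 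Indeed, the paper builds the entire formula $\phi_{1}$ (Algorithm~\ref{alg-phi_1-constr} and Observation~\ref{phi-1-no-monochromatic-obs}) precisely to encode ``no monochromatic $AC_{6}$'' as a nontrivial constraint on the choice of $2$-coloring, and then proves (Lemma~\ref{phi-1-always-satisfiable-lem}) that \emph{some} coloring satisfies it --- not that every coloring does.

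The paper does not give its own proof of this theorem but cites~\cite{RaschleSimon95}; the mechanism there, as summarized in the proof of Lemma~\ref{disjoint-clauses-lem} and its footnote, is fundamentally different from yours. One starts from an arbitrary $2$-coloring (which may contain monochromatic $AP_{6}$'s), picks a monochromatic $AP_{6}$ with base $v_{1}v_{2}$ and front $v_{3}v_{6}$, and \emph{flips} the colors of the whole component $C^{\ast}(v_{3}v_{6})$; the nontrivial content is that after this flip no edge of $C^{\ast}(v_{3}v_{6})$ participates in a monochromatic $AP_{6}$, so iterating terminates. Your reduction from $AC_{2k}$ to $AC_{6}$ (your step~3) is the content of Theorem~\ref{AC6-suffices-thm} and is fine in spirit, but to finish you must replace the incorrect ``derive an odd cycle'' argument by this flipping/recoloring procedure and prove its progress and correctness.
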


Due to the next theorem, it suffices for bipartite conflict graphs $G^{\ast
} $ to consider only small alternating cycles $AC_{2k}$ with $k\leq 3$.

\begin{theorem}[\hspace{-0.001cm}\protect\cite{HIP81}]
\label{AC6-suffices-thm}Suppose that the conflict graph $G^{\ast }$ of a
graph ${G=(V,E)}$ is bipartite (i.e.~${\chi (G^{\ast })\leq 2}$), with
(vertex) color classes $E_{1}$ and $E_{2}$. If $G$ has an $AC_{2k}$ on the
edges of~$E_{1}$ (resp.~of~$E_{2}$), where $k\geq 3$, then $G$ has also an $%
AC_{6}$ in $E_{1}$ (resp.~of $E_{2}$).
\end{theorem}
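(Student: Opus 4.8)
The plan is to argue by induction on $k$, showing that every $AC_{2k}$ in $E_1$ with $k\geq 4$ forces a shorter alternating cycle $AC_{2(k-1)}$ in $E_1$; since an $AC_6$ is precisely an $AC_{2k}$ with $k=3$ (Definition~\ref{AC-2k-def}), iterating the reduction down to $k=3$ yields the claim, the case of $E_2$ being identical by symmetry. I write the given cycle on vertices $v_1,\dots,v_{2k}$, so that the edges $e_i=v_{2i}v_{2i+1}\in E_1$ occupy the even positions and the non-edges $v_{2i-1}v_{2i}\notin E$ occupy the odd positions (indices $\mathrm{mod}\ 2k$). Coincidences among the $v_j$ only produce shorter alternating cycles, which are absorbed by the induction, so I may treat the $v_j$ as distinct. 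The one structural fact used throughout is that, since $E_1$ and $E_2$ are the colour classes of a proper $2$-colouring of the conflict graph $G^{\ast}$ (Definition~\ref{conflict-graph-def}), no two edges of the same class are in conflict; equivalently, the four endpoints of two $E_1$-edges never induce an $AC_4$.

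The first step is to extract a forced chord from each consecutive pair of cycle edges. Fix $e_i=v_{2i}v_{2i+1}$ and $e_{i+1}=v_{2i+2}v_{2i+3}$, both in $E_1$, with the non-edge $v_{2i+1}v_{2i+2}\notin E$ between them. If the chord $v_{2i}v_{2i+3}$ were also a non-edge, then $v_{2i},v_{2i+1},v_{2i+2},v_{2i+3}$ would induce an $AC_4$ (the two opposite edges $e_i,e_{i+1}$ together with the two opposite non-edges), giving $e_i\,||\,e_{i+1}$, which contradicts that $e_i$ and $e_{i+1}$ share the class $E_1$. Hence $v_{2i}v_{2i+3}\in E$. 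If this chord in fact belongs to $E_1$, I splice it in: deleting $v_{2i+1},v_{2i+2}$ and joining $v_{2i}$ directly to $v_{2i+3}$ leaves a closed sequence that is again alternating, one edge shorter, with the chord landing in an edge-slot and all edge-slots occupied by $E_1$-edges. This is the desired $AC_{2(k-1)}$ in $E_1$, and the induction advances.

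The crux is the colour of this chord: the reduction goes through only when $v_{2i}v_{2i+3}\in E_1$, whereas a priori it may fall into $E_2$. I would first dispose of the easy case using three consecutive edges. For $e_i,e_{i+1},e_{i+2}\in E_1$ the non-edges $v_{2i+1}v_{2i+2}$ and $v_{2i+3}v_{2i+4}$ are present, so if in addition $v_{2i}v_{2i+5}\notin E$, then $v_{2i},\dots,v_{2i+5}$ already induce an $AC_6$ in $E_1$ (an $AP_6$ in the sense of Definition~\ref{base-ceiling-AP6-def}), and we are done. Therefore, in a minimal counterexample, every closing pair $v_{2i}v_{2i+5}$ is an edge and, applying the reduction step at every position, every short chord $v_{2i}v_{2i+3}$ lies in $E_2$. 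Contradicting this rigid configuration is the step I expect to be the main obstacle.

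To close the obstacle I would exploit the bipartiteness of $G^{\ast}$ together with Lemma~\ref{AC6-if-not-in-conflict-all-not-isolated-lem}. Since the short chords $v_{2i}v_{2i+3}$ now all lie in the single class $E_2$, no two of them can be in conflict; analysing the four endpoints of two overlapping chords, such as $v_{2i}v_{2i+3}$ and $v_{2i+2}v_{2i+5}$, forces still further edges of $G$ among the cycle vertices. I would track these forced adjacencies until either a triple of $E_1$-edges is exposed in the $AP_6$ pattern of Lemma~\ref{AC6-if-not-in-conflict-all-not-isolated-lem}, which yields the sought $AC_6$ in $E_1$, or a closed walk of odd length emerges in $G^{\ast}$, contradicting $\chi(G^{\ast})\leq 2$. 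The wrap-around positions need no separate treatment since all indices are read $\mathrm{mod}\ 2k$. The delicate point, and the heart of the argument, is to organise this conflict bookkeeping so that one of the two contradictions is always reached; the remainder is routine surgery on the alternating cycle.
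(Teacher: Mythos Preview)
First, note that the paper does not prove this theorem at all: it is quoted verbatim from \cite{HIP81} and used as a black box, so there is no ``paper's own proof'' to compare against.

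On the merits of your proposal: the opening moves are correct. Since $E_1$ is a colour class of $G^{\ast}$, no two $E_1$-edges are in conflict, and this does force the short chord $v_{2i}v_{2i+3}$ to lie in $E$; your splicing reduction when that chord happens to be in $E_1$ is valid and yields a genuine $AC_{2(k-1)}$ in $E_1$. Likewise, if some $v_{2i}v_{2i+5}$ is a non-edge you correctly read off an $AC_6$ on $e_i,e_{i+1},e_{i+2}$.

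However, the proof is not complete, and you say so yourself. The residual ``rigid'' case --- every short chord $v_{2i}v_{2i+3}\in E_2$ and every $v_{2i}v_{2i+5}\in E$ --- is precisely the heart of the theorem, and your final paragraph is a plan, not an argument. Two concrete problems: (i) Lemma~\ref{AC6-if-not-in-conflict-all-not-isolated-lem} is stated for an $AP_6$ whose three edges are pairwise non-conflicting; it says nothing about ``two overlapping $E_2$-chords'', so the invocation does not apply as written. (ii) The assertion that non-conflict of $c_i=v_{2i}v_{2i+3}$ and $c_{i+1}=v_{2i+2}v_{2i+5}$ ``forces still further edges'' is unsupported: the potential $AC_4$ between $c_i$ and $c_{i+1}$ would need $v_{2i+2}v_{2i+3}\notin E$, but $v_{2i+2}v_{2i+3}=e_{i+1}\in E_1\subseteq E$, so non-conflict of $c_i,c_{i+1}$ gives you nothing new. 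In short, the conflict bookkeeping you allude to has to be carried out explicitly; until the rigid configuration is actually shown to be impossible (or to contain an $AC_6$ in $E_1$), the induction does not close.
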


\begin{lemma}[\hspace{-0.001mm}\protect\cite{IbarakiPeled81}]
\label{split-AC6-alternating-lem}Let $G=(V,E)$ be a split graph. Let $K$ and 
$I$ be a partition~of~$V$~such~that~$K$ induces a clique and $I$ induces an
independent set in $G$. Assume that the vertices~${v_{1},\ldots,v_{6}}$
build an $AP_{6}$ in $G$. Then either $v_{1},v_{3},v_{5}\in K$ and $%
v_{2},v_{4},v_{6}\in I$, or $v_{1},v_{3},v_{5}\in I$ and~$%
v_{2},v_{4},v_{6}\in K$.
\end{lemma}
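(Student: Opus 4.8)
The plan is to convert the split-graph hypothesis into two elementary local constraints, one for each edge and one for each non-edge of the $AP_6$, and then to propagate these constraints once around the cycle. Recall from Definition~\ref{AC-2k-def} that in an $AP_6$ on the vertices $v_1,\ldots,v_6$ the three ceilings $v_2v_3,v_4v_5,v_6v_1$ are edges of $G$, while the three bases $v_1v_2,v_3v_4,v_5v_6$ are non-edges of $G$.

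First I would record the two facts that hold in any split graph with $V$ partitioned into a clique $K$ and an independent set $I$. Since $I$ induces an independent set, no edge of $G$ can have both endpoints in $I$; hence every ceiling has at least one endpoint in $K$. Dually, since $K$ induces a clique, no non-edge of $G$ can have both endpoints in $K$; hence every base has at least one endpoint in $I$. These observations turn each of the six edges and non-edges of the $AP_6$ into a constraint on the $K/I$-membership of its two endpoints.

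I would then distinguish the two cases $v_2\in K$ and $v_2\in I$ and walk around the $6$-cycle $v_1v_2v_3v_4v_5v_6$. If $v_2\in K$, then the base $v_1v_2$ forces $v_1\in I$; the ceiling $v_6v_1$ then forces $v_6\in K$; the base $v_5v_6$ forces $v_5\in I$; the ceiling $v_4v_5$ forces $v_4\in K$; and the base $v_3v_4$ forces $v_3\in I$. The remaining ceiling $v_2v_3$ is already satisfied by $v_2\in K$, so the assignment closes up with $v_2,v_4,v_6\in K$ and $v_1,v_3,v_5\in I$. The case $v_2\in I$ is completely symmetric: here the ceiling $v_2v_3$ forces $v_3\in K$, and the same alternating propagation yields $v_1,v_3,v_5\in K$ and $v_2,v_4,v_6\in I$. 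These are precisely the two alternatives of the lemma.

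The argument is entirely mechanical, so I do not anticipate a genuine obstacle; the only point that needs a line of verification is that the chain of forced memberships closes up consistently after six steps rather than looping into a contradiction, and it does because the sixth constraint is automatically compatible with the membership of $v_2$ that started the propagation. Note also that the reasoning uses only the clique/independent-set dichotomy together with the fixed edge/non-edge pattern of the $AP_6$, and hence does not depend on the six vertices being pairwise distinct.
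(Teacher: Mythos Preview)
Your argument is correct. The paper does not supply its own proof of this lemma; it is quoted from~\cite{IbarakiPeled81} and stated without proof. Your elementary propagation argument---using that an edge cannot lie inside the independent set~$I$ and a non-edge cannot lie inside the clique~$K$, then walking once around the alternating $6$-cycle---is exactly the natural way to verify the statement, and it closes up consistently as you noted. There is nothing further to compare.
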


\begin{lemma}
\label{split-no-AP5-double-AP6-lem}Any split graph $G$ does not contain any $%
AP_{5}$ or any double~$AP_{6}$.
\end{lemma}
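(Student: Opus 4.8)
The plan is to prove the two non-containment statements separately, handling the double $AP_6$ first (where the structural Lemma~\ref{split-AC6-alternating-lem} does almost all of the work) and the $AP_5$ second (where that lemma does not apply and a direct argument is needed). Throughout I fix a partition of $V$ into a clique $K$ and an independent set $I$, and I repeatedly use the two defining properties of a split partition: two vertices of $K$ are always adjacent, and two vertices of $I$ are never adjacent.

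For the double $AP_6$, I would first observe that it is in particular an $AP_6$, so Lemma~\ref{split-AC6-alternating-lem} forces either $v_1,v_3,v_5\in K$ and $v_2,v_4,v_6\in I$, or the reverse. Recall now that a double $AP_6$ additionally carries, say, the edges $v_1v_3$ and $v_2v_6$ (the two remaining variants being symmetric). The edge $v_1v_3$ lies inside the triple $\{v_1,v_3,v_5\}$ and the edge $v_2v_6$ lies inside $\{v_2,v_4,v_6\}$; for $v_1v_3\in E$ the first triple must be contained in $K$, while for $v_2v_6\in E$ the second triple must also be contained in $K$. Since Lemma~\ref{split-AC6-alternating-lem} places these two triples in opposite parts of the partition, they cannot both lie in $K$, so one of the two required edges is forced to be a non-edge, a contradiction. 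I expect this part to be short and routine.

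The main obstacle is the $AP_5$, because there the six cyclic vertices of the underlying $AC_6$ are \emph{not} distinct, so Lemma~\ref{split-AC6-alternating-lem} cannot be invoked directly. The first step is thus to pin down the structure: among the vertex identifications compatible with the $AC_6$ relations $v_1v_2,v_3v_4,v_5v_6\notin E$ and $v_2v_3,v_4v_5,v_6v_1\in E$, only the identification of an ``opposite'' pair ($v_1=v_4$, $v_2=v_5$, or $v_3=v_6$) is consistent, and these are exactly the $AP_5$'s; all three are interchanged by the rotation $v_i\mapsto v_{i+2}$, so it suffices to treat $v_3=v_6$. Reading off the relations under this identification yields, on the five distinct vertices $v_1,\dots,v_5$, the edges $v_1v_3,v_2v_3,v_4v_5\in E$ together with the non-edges $v_1v_2,v_3v_4,v_3v_5\notin E$.

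I would then finish by a case analysis on the location of $v_3$. If $v_3\in K$, then from $v_3v_4,v_3v_5\notin E$ both $v_4$ and $v_5$ must lie in $I$ (a clique vertex is adjacent to every other clique vertex), whence $v_4v_5\notin E$, contradicting $v_4v_5\in E$. If instead $v_3\in I$, then from $v_1v_3,v_2v_3\in E$ both $v_1$ and $v_2$ must lie in $K$ (an independent-set vertex has no neighbour in $I$), whence $v_1v_2\in E$, contradicting $v_1v_2\notin E$. Either way we reach a contradiction, so a split graph contains no $AP_5$, which completes the proof. The only genuinely delicate point is identifying which vertex coincidences realize an $AP_5$; once that is settled, both the $AP_5$ and the double $AP_6$ arguments follow immediately from the split partition.
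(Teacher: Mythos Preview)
Your proposal is correct. For the double $AP_6$ you argue exactly as the paper does: invoke Lemma~\ref{split-AC6-alternating-lem} to place $\{v_1,v_3,v_5\}$ and $\{v_2,v_4,v_6\}$ on opposite sides of the split partition, and then observe that the extra pair of edges required by a double $AP_6$ forces both triples into $K$, a contradiction.

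For the $AP_5$ your approach differs from the paper's in a useful way. The paper does not prove this case at all; it simply cites~\cite{IbarakiPeled81}. You instead give a self-contained argument: you first pin down (correctly) that among identifications of two vertices in an $AC_6$ only the ``opposite'' ones $v_1=v_4$, $v_2=v_5$, $v_3=v_6$ are consistent with the alternating edge/non-edge pattern (any distance-two identification forces some pair to be simultaneously an edge and a non-edge, and consecutive vertices must be distinct for the relations to make sense), and then dispose of the representative case $v_3=v_6$ by a short case split on whether $v_3\in K$ or $v_3\in I$. This is clean and avoids the external reference; the only thing worth tightening in a final write-up is to state explicitly the convention that consecutive vertices in an $AC_{2k}$ are distinct, since that is what rules out the ``adjacent'' identifications.
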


\begin{proof}
The fact that a split graph $G$ does not contain any $AP_{5}$ has been
proved in~\cite{IbarakiPeled81}. Let now $K$ and $I$ be a partition of the
vertices $V$ of $G$, such that $K$ induces a clique and $I$ induces an
independent set in $G$ (such a partition exists by definition, since $G$ is
a split graph). Suppose that $G$ has an $AP_{6}$ on the vertices $%
v_{1},v_{2},v_{3},v_{4},v_{5},v_{6}$, cf.~Figure~\ref{AC6-1-fig}. Then,
Lemma~\ref{split-AC6-alternating-lem} implies that either $%
v_{1},v_{3},v_{5}\in K$ and $v_{2},v_{4},v_{6}\in I$, or $%
v_{1},v_{3},v_{5}\in I$ and~$v_{2},v_{4},v_{6}\in K$. In both cases, none of
the pairs of edges $\{v_{1}v_{3},v_{2}v_{6}\}$, $\{v_{3}v_{5},v_{4}v_{2}\}$,
and $\{v_{5}v_{1},v_{6}v_{4}\}$ can exist simultaneously in $G$. Therefore, $%
G$ has no double $AP_{6}$. This completes the proof of the lemma.\qed
\end{proof}

\section{Linear-Interval covers of bipartite graphs\label{linear-interval-sec}}

In this section we introduce the crucial notion of a \emph{linear-interval cover} of bipartite graphs (cf.~Definition~\ref{linear-interval-coverable-def}). 
Then we use linear-interval covers to provide a new characterization of PI graphs (cf.~Theorem~\ref{linear-interval-coverable-char-thm}), 
which is one of the main tools for our PI graph recognition algorithm. 
First we provide in the next theorem the characterization of PI graphs using linear orders and interval orders.

\begin{theorem}
\label{PI-char-thm}Let $G=(V,E)$ be a cocomparability graph and $P$ be a
partial order of $\overline{G}$. Then~$G$ is a PI graph if and only if $%
P=P_{1}\cap P_{2}$, where $P_{1}$ is a linear order and $P_{2}$ is an
interval order.
\end{theorem}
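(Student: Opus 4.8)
The plan is to prove both implications through the geometric correspondence between disjointness of triangles and comparability in $P_{1}\cap P_{2}$, and then, for the forward direction, to upgrade a representation-induced orientation to the given orientation $P$ by an invariance argument. I first record the underlying geometric fact. Placing $L_{1}$ above $L_{2}$ and representing each vertex $v$ by a triangle $T_{v}$ with apex $a_{v}\in L_{1}$ and base interval $[\ell_{v},r_{v}]\subseteq L_{2}$, a direct computation of the horizontal cross sections of $T_{u}$ and $T_{v}$ at every height shows that $T_{u}$ and $T_{v}$ are disjoint if and only if one lies entirely to the left of the other, and that $T_{u}$ lies entirely to the left of $T_{v}$ exactly when $a_{u}\le a_{v}$ and $r_{u}\le\ell_{v}$. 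This identifies adjacency in the PI intersection graph with incomparability in the partial order that orients $u\to v$ precisely when $T_{u}$ is left of $T_{v}$. I also record that, since $P$ is a partial order of $\overline{G}$, the incomparability graph of $P$ is exactly $G$.

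For the direction $(\Leftarrow)$ I would argue constructively. Assuming $P=P_{1}\cap P_{2}$ with $P_{1}$ a linear order and $P_{2}$ an interval order, I use the positions of the vertices in $P_{1}$ to place the apexes $a_{v}$ on $L_{1}$, and an interval representation of $P_{2}$ to place the bases $[\ell_{v},r_{v}]$ on $L_{2}$ (perturbing to general position so that all apexes and all endpoints are distinct). By the geometric fact, $T_{u}$ and $T_{v}$ are disjoint if and only if $u<_{P_{1}}v$ and $u<_{P_{2}}v$, or symmetrically, i.e.\ if and only if $u$ and $v$ are comparable in $P_{1}\cap P_{2}=P$. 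Hence the resulting triangles realize $G$, so $G$ is a PI graph.

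For the direction $(\Rightarrow)$ I would start from any PI representation of $G$ and read off a linear order $P'_{1}$ (the left-to-right order of the apexes) and an interval order $P'_{2}$ (the base intervals). Setting $P'=P'_{1}\cap P'_{2}$, the geometric fact gives that the incomparability graph of $P'$ is $G$, so $P'$ is itself a partial order of $\overline{G}$ that is, by construction, linear-interval. It then remains to transfer this property from the specific orientation $P'$ to the given orientation $P$.

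This invariance step is where I expect the real work to lie: I would show that the property of being a linear-interval order is shared by all transitive orientations of the fixed comparability graph $\overline{G}$. Concretely, I would use that any two transitive orientations of a comparability graph are connected by global dualization and by reversals of autonomous modules, and then verify that a single such reversal preserves linear-intervalness. Reversing the order inside a module $M$ can be matched by dualizing the contiguous block occupied by $M$ in the linear factor and by dualizing the interval order inside the ``slot'' that $M$ occupies in the interval factor, using that the inverse of an interval order is again an interval order and that the relations of $M$ to the outside are uniform. Carrying this reversal argument out cleanly --- in particular arranging $M$ to be contiguous in the linear factor while keeping the intersection of the two factors equal to $P$ --- is the main obstacle; alternatively one may invoke the known comparability-invariance of interval dimension and adapt it to the mixed linear-interval setting. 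Once invariance is established, $P'$ being linear-interval forces $P$ to be linear-interval, which completes the proof.
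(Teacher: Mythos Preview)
Your geometric argument is correct on both sides, and the $(\Leftarrow)$ construction is exactly what the paper later spells out in Algorithm~\ref{alg-PI-repr-constr}. The paper's own proof, however, does none of this work: it defines a \emph{PI order} and cites~\cite{Cerioli08} for the equivalence with linear-interval dimension at most $(2,1)$, and the passage from ``$G$ is a PI graph'' to ``$P$ is a PI order'' for an \emph{arbitrary} transitive orientation $P$ is absorbed into the word ``Equivalently'' together with the citation. So your route is strictly more self-contained than the paper's, which simply defers to the literature.

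Your worry about the invariance step is well placed, and it is genuinely needed later: the order the algorithm computes from a trapezoid representation need not be the one coming from a hypothetical PI representation, so correctness of the ``not a PI graph'' branch relies on it. Your module-reversal plan can in fact be completed without ever forcing $M$ to be contiguous in the given linear factor. The class of linear-interval orders is closed under restriction, under taking the inverse order (invert both factors), and under substitution: if $P_i=L_i\cap I_i$ then $P_1[P_2]=L_1[L_2]\cap I_1[I_2]$, and both linear orders and interval orders are themselves closed under substitution. Now if $M$ is a module of the comparability graph of $P$, then $P/M$ is isomorphic to the restriction of $P$ to $(V\setminus M)\cup\{m\}$ for any fixed $m\in M$, hence linear-interval; $P|_M$ and its inverse are linear-interval as well. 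Since the order obtained from $P$ by reversing $M$ is the substitution of the inverse of $P|_M$ into $P/M$, it is linear-interval. Gallai's theorem then carries this across all transitive orientations of $\overline{G}$, completing your argument.
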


\begin{proof}
For the purposes of the proof, a partial order $P=(U,R)$ is called a PI
order~\cite{Cerioli08}, if there exists a PI representation (i.e.~a
simple-triangle representation) $R$, such that for any two $u,v\in U$, $%
u<_{P}v$ if and only if the triangle associated to $u$ lies in $R$ entirely
to the left of the triangle associated to $v$.

Suppose that $P=P_{1}\cap P_{2}$ for two partial orders $P_{1}$ and $P_{2}$,
where $P_{1}$ is a linear order and $P_{2}$ is an interval order. Then $P$
is a~PI~order~\cite{Cerioli08}, and thus $G$ is a~PI~graph. Conversely,
suppose that $G$ is a~PI~graph. Equivalently, $P$ is a~PI~order, and thus $%
lidim(P)\leq (2,1)$~\cite{Cerioli08}. That is, $P=P_{1}\cap P_{2}$ for two
partial orders $P_{1}$ and $P_{2}$, where $P_{1}$ is a linear order and $%
P_{2}$ is an interval order. Moreover, whenever we are given a partial order 
$P$ such that $P=P_{1}\cap P_{2}$, where $P_{1}$ is a linear order and $%
P_{2} $ is an interval order, it is straightforward to compute a PI model
for $P$ (cf.~\cite{Cerioli08}). Equivalently, we can easily construct in
this case a PI representation of the incomparability graph $G$~of $P$
(cf.~lines~\ref{alg-PI-10}-\ref{alg-PI-12} of Algorithm~\ref%
{alg-PI-repr-constr} below).\qed
\end{proof}

For every partial order $P$ we define now the domination bipartite graph $C(P)$, which has been used to characterize interval orders~\cite{MaSpinrad94}. 
Here ``C'' stands for ``Comparable'', since the definition of $C(P)$ uses the comparable elements~of~$P$.

\begin{definition}[\hspace{-0.001cm}\protect\cite{MaSpinrad94}]
\label{C(P)-def}Let ${P=(U,R)}$ be a partial order, where ${%
U=\{u_{1},u_{2},\ldots ,u_{n}\}}$. Furthermore let~${V=\{v_{1},v_{2},\ldots
,v_{n}\}}$. The \emph{domination bipartite graph} $C(P)=(U,V,E)$ is defined
such that ${u_{i}v_{j}\in E}$ if~and~only if~$u_{i}<_{P}u_{j}$.
\end{definition}

\begin{lemma}[\hspace{-0.001cm}\protect\cite{MaSpinrad94}]
\label{interval-order-C(P)-lem}Let $P=(U,R)$ be a partial order. Then, $P$
is an interval order if and only if $C(P)$ is a chain graph.
\end{lemma}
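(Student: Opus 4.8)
The plan is to translate the chain-graph condition on $C(P)$ into the classical combinatorial characterization of interval orders. The starting observation is that in $C(P)=(U,V,E)$ the neighborhoods carry a direct order-theoretic meaning: by Definition~\ref{C(P)-def}, for $u_{i}\in U$ we have $N(u_{i})=\{v_{j}:u_{i}<_{P}u_{j}\}$, which is (a copy of) the strict up-set of $u_{i}$, while for $v_{j}\in V$ we have $N(v_{j})=\{u_{i}:u_{i}<_{P}u_{j}\}$, the strict down-set of $u_{j}$. Thus, by the definition of a chain graph, $C(P)$ is a chain graph precisely when the up-sets $\{N(u_{i})\}_{i}$ and the down-sets $\{N(v_{j})\}_{j}$ are each linearly ordered by inclusion.

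I would then use the equivalent description of chain graphs as the bipartite graphs with no induced $2K_{2}$, recalled in Section~\ref{prelim-sec}, and match an induced $2K_{2}$ with the forbidden pattern of an interval order. An induced $2K_{2}$ in $C(P)$ is a pair $u_{i}\neq u_{k}$ in $U$ and a pair $v_{j}\neq v_{l}$ in $V$ with $u_{i}v_{j},u_{k}v_{l}\in E$ and $u_{i}v_{l},u_{k}v_{j}\notin E$; translating through Definition~\ref{C(P)-def} this says exactly that $u_{i}<_{P}u_{j}$, $u_{k}<_{P}u_{l}$, $u_{i}\not<_{P}u_{l}$ and $u_{k}\not<_{P}u_{j}$. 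This is precisely a violation of the interval-order axiom ``$a<_{P}b$ and $c<_{P}d$ imply $a<_{P}d$ or $c<_{P}b$''. Conversely, any violation of this axiom automatically satisfies $a\neq c$ and $b\neq d$ (if $a=c$ or $b=d$ an immediate contradiction follows), so it produces a genuine induced $2K_{2}$ in $C(P)$. Hence $C(P)$ is a chain graph if and only if $P$ satisfies this axiom.

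It remains to prove that the axiom is equivalent to $P$ being an interval order in the sense of the interval-representation definition given in Section~\ref{intro-sec}. The forward implication is a short endpoint comparison: if $x<_{P}y$ holds exactly when $I_{x}$ lies entirely to the left of $I_{y}$, and $a<_{P}b$, $c<_{P}d$, then whichever of the right endpoints of $I_{a}$ and $I_{c}$ is smaller lies to the left of both $I_{b}$ and $I_{d}$, yielding $a<_{P}d$ or $c<_{P}b$. The converse --~that the axiom forces an interval representation to exist~-- is the substantive step and the main obstacle. I would first note that the axiom makes both the up-sets and the down-sets linearly ordered by inclusion (an incomparable pair of up-sets, or of down-sets, directly exhibits an axiom violation). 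Enumerating these two nested chains, I would assign to each element $x$ a left endpoint $\ell(x)$ according to the position of its down-set in the down-set chain and a right endpoint $r(x)$ according to the position of its up-set in the up-set chain, placed on a common interleaved scale. The crux is to choose the interleaving so that $r(x)<\ell(y)$ holds exactly when $x<_{P}y$; this is the only place where a genuine verification is required, and it is where the simultaneous nesting of both chains is essential. Chaining the three equivalences --~chain graph iff no induced $2K_{2}$, iff the interval-order axiom, iff interval order~-- then yields the lemma.
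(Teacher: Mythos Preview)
The paper does not prove this lemma; it is quoted from~\cite{MaSpinrad94} without argument. So there is no ``paper's own proof'' to compare against, and your proposal is supplying a proof where the paper simply cites one.

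Your argument is correct and is essentially the standard route: identify an induced $2K_{2}$ in $C(P)$ with a $\mathbf{2}+\mathbf{2}$ pattern in $P$ (i.e.\ a violation of the axiom ``$a<_{P}b$ and $c<_{P}d$ imply $a<_{P}d$ or $c<_{P}b$''), and then invoke Fishburn's characterization of interval orders as the $\mathbf{2}+\mathbf{2}$--free partial orders. The translation step and the distinctness check ($a\neq c$, $b\neq d$) are handled correctly. The only part that remains a sketch rather than a full proof is the construction of the interval representation from the axiom; your outline via the nested chains of up-sets and down-sets is the right idea, but as written it defers the verification that the interleaved endpoints actually realise $P$. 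If you want the proof to be self-contained rather than an appeal to Fishburn, that paragraph needs to be made precise (e.g.\ set $r(x)$ to the number of distinct strict down-sets contained in the down-set of $x$, set $\ell(y)$ analogously, and check both directions of $r(x)<\ell(y)\Leftrightarrow x<_{P}y$). Otherwise the cleanest fix is simply to cite Fishburn's theorem at that point.
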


Extending the notion of $C(P)$, we now introduce the bipartite graph $NC(P)$ to characterize linear orders (cf.~Lemma~\ref{linear-order-NC(P)-lem}). 
Here ``NC'' stands for ``Non-strictly Comparable''. Namely, this graph
can be obtained by adding to the graph $C(P)$ the perfect matching $\{u_{i}v_{i}\ |\ i=1,2,\ldots ,n\}$ on the vertices of $U$ and $V$.

\begin{definition}
\label{NC(P)-def}Let ${P=(U,R)}$ be a partial order, where ${%
U=\{u_{1},u_{2},\ldots ,u_{n}\}}$. Furthermore let~${V=\{v_{1},v_{2},\ldots
,v_{n}\}}$. Then, $NC(P)=(U,V,E)$ is the bipartite graph, such that $%
u_{i}v_{j}\in E$ if~and~only if $u_{i}\leq _{P}u_{j}$.
\end{definition}

\begin{lemma}
\label{linear-order-NC(P)-lem}Let $P=(U,R)$ be a partial order. Then, $P$ is
a linear order if and only if $NC(P)$ is a chain graph.
\end{lemma}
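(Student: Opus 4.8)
The plan is to use the characterization recalled earlier in Section~\ref{threshold-alternating-cycles-subsec}, namely that a bipartite graph is a chain graph if and only if it contains no induced $2K_2$. I would therefore reduce both directions of the equivalence to the presence or absence of an induced $2K_2$ in $NC(P)$, exactly mirroring the proof of the analogous statement for interval orders (Lemma~\ref{interval-order-C(P)-lem}). The key feature I would exploit is that, by Definition~\ref{NC(P)-def}, $NC(P)$ contains the full diagonal matching $\{u_iv_i\}$ (since $u_i \leq_P u_i$ always holds); this is precisely what distinguishes $NC(P)$ from $C(P)$ and what lets incomparability surface as a $2K_2$.

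For the forward direction, I would assume $P$ is a linear order and suppose towards a contradiction that $NC(P)$ has an induced $2K_2$, say on edges $u_av_b, u_cv_d$ with the two cross pairs $u_av_d, u_cv_b$ being non-edges. Translating via Definition~\ref{NC(P)-def}, this means $u_a \leq_P u_b$ and $u_c \leq_P u_d$, while $u_a \not\leq_P u_d$ and $u_c \not\leq_P u_b$. Since $P$ is total, the two non-relations become strict reversed relations $u_d <_P u_a$ and $u_b <_P u_c$, and chaining these with the two given relations by transitivity yields $u_d <_P u_d$, contradicting irreflexivity. Hence $NC(P)$ is $2K_2$-free and therefore a chain graph. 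Equivalently, after relabelling so that $u_1 <_P \cdots <_P u_n$, one checks directly that $N(u_i) = \{v_i, v_{i+1}, \ldots, v_n\}$, so the neighborhoods of the color class $U$ are nested by inclusion.

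For the reverse direction I would argue by contraposition: if $P$ is not a linear order, there are two incomparable elements $u_i, u_j$ with $i \neq j$. Then $u_iv_i, u_jv_j \in E$ come from the diagonal, while $u_iv_j \notin E$ and $u_jv_i \notin E$ precisely because $u_i \not\leq_P u_j$ and $u_j \not\leq_P u_i$. Thus $\{u_i, v_i, u_j, v_j\}$ induces a $2K_2$ in $NC(P)$, so $NC(P)$ is not a chain graph, which completes the equivalence.

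There is no serious obstacle here; the argument is short once the $2K_2$ characterization is invoked. The only point that deserves care is the role of the diagonal matching: it is exactly the reflexive edges $u_iv_i$ that allow a pair of incomparable elements to supply the two disjoint edges of a $2K_2$. Without them, that is, in $C(P)$, incomparability alone need not create a $2K_2$, which is why $C(P)$ instead characterizes the strictly larger class of interval orders in Lemma~\ref{interval-order-C(P)-lem}.
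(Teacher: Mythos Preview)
Your proof is correct. Both directions go through cleanly: the forward direction's transitivity chain $u_d <_P u_a \leq_P u_b <_P u_c \leq_P u_d$ yields the needed contradiction, and the reverse direction's use of the diagonal edges $u_iv_i,u_jv_j$ to witness a $2K_2$ from any incomparable pair is exactly the right observation.

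Your route differs slightly from the paper's. The paper argues both directions via the \emph{neighborhood-inclusion} definition of chain graphs rather than the $2K_2$-free characterization: for the forward direction it relabels so that $u_1 <_P \cdots <_P u_n$ and reads off $N(u_i)=\{v_i,\ldots,v_n\}$ directly (the variant you mention in passing); for the converse it assumes $N(u_1)\subseteq\cdots\subseteq N(u_n)$, uses $v_i\in N(u_i)$ to deduce $v_i\in N(u_j)$ for $i<j$, and concludes $u_n <_P \cdots <_P u_1$. Your $2K_2$-based argument is marginally more symmetric between the two directions and makes the role of the diagonal matching more explicit, which is a nice conceptual point when contrasting with Lemma~\ref{interval-order-C(P)-lem}; the paper's argument is slightly more constructive in that it exhibits the linear order explicitly from the chain ordering of neighborhoods. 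Both are equally elementary.
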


\begin{proof}
Let $U=\{u_{1},u_{2},\ldots ,u_{n}\}$. Suppose that $P$ is a linear order,
i.e.~${u_{1}<_{P}u_{2}<_{P}\ldots <_{P}u_{n}}$. Then, by Definition~\ref%
{NC(P)-def}, the set of neighbors of a vertex $u_{i}\in U$ in the graph $%
NC(P)$ is~${N(u_{i})=\{v_{i},v_{i+1},\ldots ,v_{n}\}}$. Therefore, $%
N(u_{n})\subset N(u_{n-1})\subset \ldots \subset N(u_{1})$, and thus $NC(P)$
is a chain graph.

Suppose now that $NC(P)$ is a chain graph. Then the sets of neighbors of the
vertices of~$U$ in the graph $NC(P)$ can be linearly ordered by inclusion.
Let without loss of generality $N(u_{1})\subseteq N(u_{2})\subseteq \ldots
\subseteq N(u_{n})$. Therefore, since $v_{i}\in N(u_{i})$ in $NC(P)$ for
every $i=1,2,\ldots ,n$, it follows that $v_{i}\in N(u_{j})$ in $NC(P)$
whenever $i<j$. Therefore, by Definition~\ref{NC(P)-def}, $u_{j}<_{P}u_{i}$
whenever $i<j$. That is, $u_{n}<_{P}u_{n-1}<_{P}\ldots <_{P}u_{1}$, i.e.~$P$
is a linear order.\qed
\end{proof}

We introduce now the notion of a \emph{linear-interval cover} of a bipartite
graph. This notion is crucial for our main result of this section,
cf.~Theorem~\ref{linear-interval-coverable-char-thm}.

\begin{definition}
\label{linear-interval-coverable-def}Let $G=(U,V,E)$ be a bipartite graph,
where $U=\{u_{1},u_{2},\ldots ,u_{n}\}$ and $V=\{v_{1},v_{2},\ldots ,v_{n}\}$%
. Let $E_{0}=\{u_{i}v_{i}\ |\ 1\leq i\leq n\}$ and suppose that $%
E_{0}\subseteq E$. Then, $G$ is \emph{linear-interval coverable} if there
exist two chain graphs ${G_{1}=(U,V,E_{1})}$~and~${G_{2}=(U,V,E_{2})}$%
,~such~that ${G=G_{1}\cup G_{2}}$ and ${E_{0}\subseteq E_{2}\setminus E_{1}}$%
. In this case, the sets ${\{E_{1},E_{2}\}}$ are a~\emph{%
linear-interval~cover}~of~$G$.
\end{definition}

Before we proceed with Theorem~\ref{linear-interval-coverable-char-thm}, we
first provide the next auxiliary lemma.

\begin{lemma}
\label{common-linear-extension-lem}Let $Q_{1}=(U,R_{1})$ be an interval
order and $Q_{2}=(U,R_{2})$ be a partial order, such that $Q_{1}$ and $Q_{2}$
do not contradict each other. Then there exists a linear order $Q_{0}$ that
is a linear extension of both $Q_{1}$ and $Q_{2}$.
\end{lemma}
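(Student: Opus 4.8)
The plan is to show that the union relation $R_{1}\cup R_{2}$ generates a partial order, and then to invoke the standard order-extension principle: every (finite) partial order admits a linear extension. Concretely, I would let $\widehat{R}$ denote the transitive closure of $R_{1}\cup R_{2}$ on $U$ and reduce the whole statement to proving that $(U,\widehat{R})$ is a partial order, i.e.~that $\widehat{R}$ is irreflexive. Once this is established, any linear extension $Q_{0}$ of $(U,\widehat{R})$ satisfies $R_{1}\subseteq\widehat{R}\subseteq Q_{0}$ and $R_{2}\subseteq\widehat{R}\subseteq Q_{0}$, so $Q_{0}$ is simultaneously a linear extension of $Q_{1}$ and of $Q_{2}$, as required. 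Since $U$ is finite, existence of a linear extension of a partial order is immediate by topological sorting, so the entire difficulty lies in the irreflexivity of $\widehat{R}$.

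Irreflexivity of $\widehat{R}$ is equivalent to the absence of a directed cycle in the digraph on $U$ whose arcs are the pairs of $R_{1}\cup R_{2}$. Suppose for contradiction that such a cycle exists. Using transitivity of $Q_{1}$ and of $Q_{2}$ separately, I would first collapse each maximal run of consecutive arcs of the same type into a single arc of that type, obtaining a closed walk whose arcs strictly alternate between $R_{1}$ and $R_{2}$; since neither $Q_{1}$ nor $Q_{2}$ alone contains a cycle, at least one arc of each type survives, so the alternating cycle has even length $2m$ with $m\geq 1$. Choosing one of minimal length, I write it as $a_{1}<_{Q_{1}}b_{1}<_{Q_{2}}a_{2}<_{Q_{1}}b_{2}<_{Q_{2}}\cdots <_{Q_{1}}b_{m}<_{Q_{2}}a_{1}$ (indices of the $a$'s taken cyclically). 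For $m=1$ this reads $a_{1}<_{Q_{1}}b_{1}$ and $b_{1}<_{Q_{2}}a_{1}$, which is exactly a contradiction between $Q_{1}$ and $Q_{2}$, excluded by hypothesis.

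For $m\geq 2$ the crucial ingredient is the interval-order axiom for $Q_{1}$, which I would derive directly from the interval representation: if $a<_{Q_{1}}b$ and $c<_{Q_{1}}d$, then $a<_{Q_{1}}d$ or $c<_{Q_{1}}b$; indeed, assuming both fail forces the endpoints of the intervals $I_{a},I_{b},I_{c},I_{d}$ into the impossible chain $\ell_{d}\leq r_{a}<\ell_{b}\leq r_{c}<\ell_{d}$. Applying this to the two $Q_{1}$-arcs $a_{1}<_{Q_{1}}b_{1}$ and $a_{2}<_{Q_{1}}b_{2}$ of the cycle yields two cases. If $a_{1}<_{Q_{1}}b_{2}$, then bypassing $b_{1}$ and $a_{2}$ produces the strictly shorter alternating cycle $a_{1}<_{Q_{1}}b_{2}<_{Q_{2}}a_{3}<_{Q_{1}}\cdots <_{Q_{2}}a_{1}$ of length $2(m-1)$, contradicting minimality. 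If instead $a_{2}<_{Q_{1}}b_{1}$, then together with the cycle arc $b_{1}<_{Q_{2}}a_{2}$ this is once more a direct contradiction between $Q_{1}$ and $Q_{2}$. Either way we reach a contradiction, so no directed cycle exists and $\widehat{R}$ is irreflexive.

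The main obstacle is precisely the case $m\geq 2$: the non-contradiction hypothesis on its own only rules out length-$2$ cycles, and it is the interval-order structure of $Q_{1}$ (equivalently, the absence of an induced $2+2$) that permits longer alternating cycles to be shortened. In writing the details I would take care to justify that the collapsing step genuinely yields a strictly alternating closed walk (no monochromatic cycle can occur, since $Q_{1}$ and $Q_{2}$ are partial orders) and that in a minimal alternating cycle with $m\geq 2$ the two $Q_{1}$-arcs needed for the axiom are indeed present; these points are routine but worth stating explicitly.
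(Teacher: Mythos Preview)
Your proof is correct and follows essentially the same route as the paper: form the digraph of $R_{1}\cup R_{2}$, take a shortest cycle, reduce to an alternating cycle via transitivity, and then use the $(2{+}2)$-free property of the interval order $Q_{1}$ to either shortcut the cycle or produce a direct $Q_{1}$/$Q_{2}$ contradiction. The only cosmetic difference is that the paper phrases the key interval-order property via the chain graph $C(Q_{1})$ having no induced $2K_{2}$, whereas you derive it directly from the interval representation; the logical content is identical.
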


\begin{proof}
Let ${U=\{u_{1},u_{2},\ldots ,u_{n}\}}$ be the ground set of $Q_{1}$ and $%
Q_{2}$. Furthermore let $C(Q_{1})=(U,V,E_{1})$ be the domination bipartite
graph of $Q_{1}$, where $V=\{v_{1},v_{2},\ldots ,v_{n}\}$, cf.~Definition %
\ref{C(P)-def}. Since $Q_{1}$ is an interval order by assumption, $C(Q_{1})$
is a chain graph by Lemma~\ref{interval-order-C(P)-lem}, i.e.~$C(Q_{1})$
does not contain an induced $2K_{2}$. Consider now two edges $u_{i}v_{j}$
and $u_{k}v_{\ell }$ of $C(Q_{1})$, where $\{i,j\}\cap \{k,\ell \}=\emptyset 
$. Then $u_{i}<_{Q_{1}}u_{j}$ and $u_{k}<_{Q_{1}}u_{\ell }$ by Definition %
\ref{C(P)-def}. Furthermore, at least one of the edges $u_{i}v_{\ell }$ and $%
u_{k}v_{j}$ exists in $C(Q_{1})$, since otherwise the edges $u_{i}v_{j}$ and 
$u_{k}v_{\ell }$ induce a $2K_{2}$ in $C(Q_{1})$, which is a contradiction.
Therefore $u_{i}<_{Q_{1}}u_{\ell }$ or $u_{k}<_{Q_{1}}u_{j}$.

Since $Q_{1}$ and $Q_{2}$ do not contradict each other by assumption, we can
define the simple directed graph $G_{0}=(U,E)$, such that $\overrightarrow{%
u_{i}u_{j}}\in E$ if and only if $u_{i}<_{Q_{1}}u_{j}$ or $%
u_{i}<_{Q_{2}}u_{j}$. We will prove that $G_{0}$ is acyclic. Suppose
otherwise that $G_{0}$ has at least one directed cycle, and let $C $ be a
directed cycle of $G_{0}$ with the smallest possible length. Assume first
that $C$ has length $3$, and let its edges be $\overrightarrow{u_{i}u_{j}}$, 
$\overrightarrow{u_{j}u_{k}}$, and $\overrightarrow{u_{k}u_{i}} $. Then at
least two of these edges belong to $Q_{1}$ or to $Q_{2}$. Let without loss
of generality $\overrightarrow{u_{i}u_{j}}$ and $\overrightarrow{u_{j}u_{k}}$
belong to $Q_{1}$, i.e.~$u_{i}<_{Q_{1}}u_{j}$ and $u_{j}<_{Q_{1}}u_{k}$.
Then also $u_{i}<_{Q_{1}}u_{k}$, since $Q_{1}$ is transitive, and thus $%
\overrightarrow{u_{i}u_{k}}\in E$. This contradicts the assumption that $%
\overrightarrow{u_{k}u_{i}}$ is an edge of $C$. Assume now that $C$ has
length greater than $3$. Suppose that two consecutive edges $\overrightarrow{%
u_{i}u_{j}}$ and $\overrightarrow{u_{j}u_{k}}$ of $C$ belong to $Q_{1}$,
i.e.~$u_{i}<_{Q_{1}}u_{j}$ and $u_{j}<_{Q_{1}}u_{k}$. Then also $%
u_{i}<_{Q_{1}}u_{k}$, since $Q_{1}$ is transitive, and thus $\overrightarrow{%
u_{i}u_{k}}\in E$. Therefore we can replace in $C$ the edges $%
\overrightarrow{u_{i}u_{j}}$ and $\overrightarrow{u_{j}u_{k}}$ by the edge $%
\overrightarrow{u_{i}u_{k}}$, obtaining thus a smaller directed cycle than $%
C $, which is a contradiction by the assumption on $C$. Thus no two
consecutive edges of $C$ belong to $Q_{1}$. Similarly, no two consecutive
edges of $C$ belong to $Q_{2}$, and thus the edges of $C$ belong alternately
to $Q_{1}$ and $Q_{2}$. In particular, $C$ has even length.

Consider now three consecutive edges $\overrightarrow{u_{i}u_{j}},%
\overrightarrow{u_{j}u_{k}},\overrightarrow{u_{k}u_{\ell }}$ of $C$, where $%
\overrightarrow{u_{i}u_{j}}$ and $\overrightarrow{u_{k}u_{\ell }}$ belong to 
$Q_{1}$. Then $u_{i}<_{Q_{1}}u_{j}$ and $u_{k}<_{Q_{1}}u_{\ell }$, where $%
\{i,j\}\cap \{k,\ell \}=\emptyset $, and thus $u_{i}<_{Q_{1}}u_{\ell }$ or $%
u_{k}<_{Q_{1}}u_{j}$, as we proved above. That is, $\overrightarrow{%
u_{i}u_{\ell }}\in E$ or $\overrightarrow{u_{k}u_{j}}\in E$. Therefore,
since we assumed that $\overrightarrow{u_{j}u_{k}}$ is an edge of $C$, it
follows that $\overrightarrow{u_{k}u_{j}}\notin E$, and thus $%
\overrightarrow{u_{i}u_{\ell }}\in E$. Therefore, in particular, $%
\overrightarrow{u_{\ell }u_{i}}\notin E$, and thus $C$ does not have length $%
4$, i.e.~it has length at least $6$. Thus we can replace in $C$ the edges $%
\overrightarrow{u_{i}u_{j}},\overrightarrow{u_{j}u_{k}},\overrightarrow{%
u_{k}u_{\ell }}$ by the edge $\overrightarrow{u_{i}u_{\ell }}$, obtaining
thus a smaller directed cycle than $C$, which is a contradiction by the
assumption on $C$.

Therefore, there exists no directed cycle in $G_{0}$, i.e.~$G_{0}$ is a
directed acyclic graph. Thus any topological ordering of $G_{0}$ corresponds
to a linear order $Q_{0}=(U,R_{0})$ that is a linear extension of both $%
Q_{1} $ and $Q_{2}$. This completes the proof of the lemma.\qed
\end{proof}

\begin{theorem}
\label{linear-interval-coverable-char-thm}Let ${P=(U,R)}$ be a partial
order. In the bipartite complement $\widehat{C}(P)$ of the graph $C(P)$,
denote ${E_{0}=\{u_{i}v_{i}\ |\ 1\leq i\leq n\}}$. The following statements
are equivalent:\vspace{-0.1cm}

\begin{enumerate}
\item[(a)] $P=P_{1}\cap P_{2}$, where $P_{1}$ is a linear order and $P_{2}$
is an interval order.

\item[(b)] $\widehat{C}(P)=\widehat{NC}(P_{1})\cup \widehat{C}(P_{2})$ for
two partial orders $P_{1}$ and $P_{2}$ on $V$, where $\widehat{NC}(P_{1})$
and $\widehat{C}(P_{2})$ are chain graphs.

\item[(c)] $\widehat{C}(P)$ is linear-interval coverable, i.e.~$\widehat{C}%
(P)=G_{1}\cup G_{2}$ for two chain graphs $G_{1}=(U,V,E_{1})$ and $%
G_{2}=(U,V,E_{2})$, where $E_{0}\subseteq E_{2}\setminus E_{1}$.
\end{enumerate}
\end{theorem}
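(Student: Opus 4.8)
The plan is to prove the cycle of implications (a)$\Rightarrow$(b)$\Rightarrow$(c)$\Rightarrow$(a). Throughout I write $G_1=(U,V,E_1)$ and $G_2=(U,V,E_2)$, and I will use repeatedly that bipartite complementation is an involution satisfying $\widehat{A\cup B}=\widehat A\cap\widehat B$ and preserving chain graphs, together with Lemma~\ref{interval-order-C(P)-lem} ($P$ is an interval order iff $C(P)$ is a chain graph) and Lemma~\ref{linear-order-NC(P)-lem} ($P$ is a linear order iff $NC(P)$ is a chain graph). For (a)$\Rightarrow$(b) I would complement the target identity, reducing it to $C(P)=NC(P_1)\cap C(P_2)$, and check this edge by edge: $u_iv_j\in C(P)$ says $u_i<_Pu_j$, i.e.\ $u_i<_{P_1}u_j$ and $u_i<_{P_2}u_j$, which for $i\neq j$ coincides with $u_i\le_{P_1}u_j$ and $u_i<_{P_2}u_j$, that is $u_iv_j\in NC(P_1)\cap C(P_2)$ (the diagonal $i=j$ being excluded on both sides). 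Since $P_1$ is linear and $P_2$ is an interval order, the two lemmas and closure under bipartite complementation show $\widehat{NC}(P_1)$ and $\widehat C(P_2)$ are chain graphs. Then (b)$\Rightarrow$(c) is immediate with $G_1=\widehat{NC}(P_1)$ and $G_2=\widehat C(P_2)$: these are chain graphs covering $\widehat C(P)$, and $E_0\subseteq E_2\setminus E_1$ because $u_iv_i\in\widehat C(P_2)$ always (as $u_i\not<_{P_2}u_i$) while $u_iv_i\notin\widehat{NC}(P_1)$ always (as $u_i\le_{P_1}u_i$).

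The substance lies in (c)$\Rightarrow$(a). My first step is a converse to the chain-graph characterizations: any chain graph $H=(U,V,F)$ with $u_iv_i\notin F$ for all $i$ induces a partial order $Q$ via $u_i<_Qu_j\iff u_iv_j\in F$, with $C(Q)=H$. Irreflexivity is the no-diagonal hypothesis, while antisymmetry and transitivity fall out of neighbourhood nesting in $H$ (for instance $u_iv_j,u_jv_i\in F$ would force $v_j\in N(u_i)\subseteq N(u_j)$ and hence the forbidden diagonal $u_jv_j\in F$). Applying this to $\widehat G_2$, which has no diagonal since $E_0\subseteq E_2$, yields an interval order $P_2$ with $C(P_2)=\widehat G_2$ (interval by Lemma~\ref{interval-order-C(P)-lem}); moreover $G_2\subseteq\widehat C(P)$ gives $C(P)\subseteq\widehat G_2$, so $P\subseteq P_2$. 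Applying it to $G_1$, which has no diagonal since $E_0\cap E_1=\emptyset$, yields an interval order $Q_1$ with $C(Q_1)=G_1$.

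For the linear order $P_1$ I would deliberately avoid reading an order directly off $\widehat G_1$, since a chain graph containing the diagonal need not induce a transitive relation. Instead I pass through Lemma~\ref{common-linear-extension-lem}. The inverse $\overline{Q_1}$ is again an interval order, and I claim $\overline{Q_1}$ and $P$ do not contradict each other: a contradiction would require both $u_tv_s\in E_1$ and $u_t<_Pu_s$, but $u_t<_Pu_s$ gives $u_tv_s\in C(P)$, i.e.\ $u_tv_s\notin\widehat C(P)$, and since $E_1\subseteq E(\widehat C(P))$ this contradicts $u_tv_s\in E_1$. Hence Lemma~\ref{common-linear-extension-lem} supplies a common linear extension $P_1$ of $\overline{Q_1}$ and $P$, so $P_1$ is a linear order with $P\subseteq P_1$. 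It remains to check $P_1\cap P_2=P$. The inclusion $P\subseteq P_1\cap P_2$ is clear. Conversely, if $u_i<_{P_1}u_j$ and $u_i<_{P_2}u_j$ but $u_i\not<_Pu_j$, then $u_iv_j\notin E_2$ (from $u_i<_{P_2}u_j$ and $C(P_2)=\widehat G_2$) while $u_iv_j\in E_1\cup E_2$ (from $u_i\not<_Pu_j$), forcing $u_iv_j\in E_1=E(C(Q_1))$; thus $u_i<_{Q_1}u_j$, so $u_j<_{\overline{Q_1}}u_i$ and therefore $u_j<_{P_1}u_i$, contradicting linearity of $P_1$.

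The step I expect to be the main obstacle is precisely the construction of $P_1$: the key insight is that $G_1$ must be converted not into an order on its own, which in general fails transitivity, but into the auxiliary interval order $Q_1$ whose inverse is then merged with $P$ through Lemma~\ref{common-linear-extension-lem}. The delicate part is keeping all orientations consistent in the non-contradiction claim and in the final intersection identity, where a single reversed sign would break the argument.
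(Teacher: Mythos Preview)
Your proposal is correct and follows essentially the same route as the paper. The only cosmetic differences are that (i) the paper proves (c)$\Rightarrow$(b) rather than (c)$\Rightarrow$(a) directly, and (ii) in building the linear order the paper applies Lemma~\ref{common-linear-extension-lem} to the pair $(Q_1,\overline P)$ and then inverts the resulting linear extension, whereas you apply it to the dual pair $(\overline{Q_1},P)$ and obtain $P_1$ directly; these are trivially equivalent.
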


\begin{proof}
(a) $\Rightarrow $ (b). Since $P_{1}$ is a linear order, it follows by Lemma~%
\ref{linear-order-NC(P)-lem} that $NC(P_{1})$ is a chain graph. Furthermore,
sine $P_{2}$ is an interval order, it follows by Lemma~\ref%
{interval-order-C(P)-lem} that $C(P_{2})$ is a chain graph. Therefore, since
the class of chain graphs is closed under bipartite complementation, it
follows that $\widehat{NC}(P_{1})$ and $\widehat{C}(P_{2})$ are chain graphs.

Let $u_{i},u_{j}\in U$ such that $u_{i}v_{j}\in E(C(P))$. Then $%
u_{i}<_{P}u_{j}$ by Definition~\ref{C(P)-def}. Furthermore, since $%
P=P_{1}\cap P_{2}$ by assumption, it follows that $u_{i}<_{P_{1}}u_{j} $ and 
$u_{i}<_{P_{2}}u_{j}$, and thus also $u_{i}v_{j}\in E(NC(P_{1}))$ and $%
u_{i}v_{j}\in E(C(P_{2}))$ by Definitions~\ref{C(P)-def} and~\ref{NC(P)-def}%
, respectively. Therefore $C(P)\subseteq NC(P_{1})\cap C(P_{2})$.

Let now $u_{i},u_{j}\in U$ such that $u_{i}v_{j}\in E(NC(P_{1}))$ and $%
u_{i}v_{j}\in E(C(P_{2}))$. Then, it follows in particular that $u_{i}\neq
u_{j}$ (since otherwise $u_{i}v_{j}\notin E(C(P_{2}))$, a contradiction).
Thus, $u_{i}<_{P_{1}}u_{j}$ and $u_{i}<_{P_{2}}u_{j}$ by Definitions~\ref%
{C(P)-def} and~\ref{NC(P)-def}. Therefore, since $P=P_{1}\cap P_{2}$ by
assumption, it follows that $u_{i}<_{P}u_{j}$, and thus $u_{i}v_{j}\in
E(C(P))$ by Definition~\ref{C(P)-def}. That is, $NC(P_{1})\cap
C(P_{2})\subseteq C(P)$. Summarizing, $C(P)=NC(P_{1})\cap C(P_{2})$, and
thus also $\widehat{C}(P)=\widehat{NC}(P_{1})\cup \widehat{C}(P_{2})$.

\medskip

(b) $\Rightarrow $ (a). Since $\widehat{C}(P)=\widehat{NC}(P_{1})\cup 
\widehat{C}(P_{2})$, it follows that $C(P)=NC(P_{1})\cap C(P_{2})$. Let $%
u_{i},u_{j}\in U$ such that $u_{i}<_{P}u_{j}$. Then $u_{i}v_{j}\in E(C(P))$
by Definition~\ref{C(P)-def}. Therefore, since $C(P)=NC(P_{1})\cap C(P_{2})$%
, it follows that also $u_{i}v_{j}\in E(NC(P_{1}))$ and $u_{i}v_{j}\in
E(C(P_{2}))$. Thus, in particular, $u_{i}\neq u_{j}$ (since otherwise $%
u_{i}v_{j}\notin E(C(P_{2}))$, a contradiction). Therefore $%
u_{i}<_{P_{1}}u_{j}$ and $u_{i}<_{P_{2}}u_{j}$ by Definitions~\ref{C(P)-def}
and~\ref{NC(P)-def}. That is, $P\subseteq P_{1}\cap P_{2}$.

Let now $u_{i},u_{j}\in U$ such that $u_{i}<_{P_{1}}u_{j}$ and $%
u_{i}<_{P_{2}}u_{j}$. Then $u_{i}v_{j}\in E(NC(P_{1}))$ and $u_{i}v_{j}\in
E(C(P_{2}))$ by Definitions~\ref{C(P)-def} and~\ref{NC(P)-def}. Therefore,
since $C(P)=NC(P_{1})\cap C(P_{2})$, it follows that also $u_{i}v_{j}\in
E(C(P))$. Thus $u_{i}<_{P}u_{j}$ by Definition~\ref{C(P)-def}. That is, $%
P_{1}\cap P_{2}\subseteq P$. Summarizing, $P=P_{1}\cap P_{2}$. Furthermore,
since by assumption $\widehat{NC}(P_{1})$ and $\widehat{C}(P_{2})$ are chain
graphs, it follows that also $NC(P_{1})$ and $C(P_{2})$ are chain graphs.
Therefore $P_{1}$ is a linear order and $P_{2}$ is an interval order by
Lemmas~\ref{linear-order-NC(P)-lem} and~\ref{interval-order-C(P)-lem},
respectively.

\medskip

(b) $\Rightarrow $ (c). Define $G_{1}=\widehat{NC}(P_{1})$ and $G_{2}=%
\widehat{C}(P_{2})$. Then, it follows by (b) that $G_{1}$ and $G_{2}$ are
chain graphs and that $\widehat{C}(P)=G_{1}\cup G_{2}$. Note now by
Definitions~\ref{C(P)-def} and~\ref{NC(P)-def} that $E_{0}\cap
E(C(P_{2}))=\emptyset $ and that $E_{0}\subseteq E(NC(P_{1}))$,
respectively. Therefore $E_{0}\subseteq E(\widehat{C}(P_{2}))\setminus E(%
\widehat{NC}(P_{1}))$. Thus, since $E_{2}=E(G_{2})=E(\widehat{C}(P_{2}))$
and $E_{1}=E(G_{1})=E(\widehat{NC}(P_{1}))$, it follows that $E_{0}\subseteq
E_{2}\setminus E_{1}$. That is, $\widehat{C}(P)$ is linear-interval
coverable by Definition~\ref{linear-interval-coverable-def}.

\medskip

(c) $\Rightarrow $ (b). We will construct from the edge sets $E_{1}$ and $%
E_{2}$ of $G_{1}$ and $G_{2}$, respectively, a linear order $P_{1}$ and an
interval order~$P_{2}$, such that $\widehat{C}(P)=\widehat{NC}(P_{1})\cup 
\widehat{C}(P_{2})$. Denote first the bipartite complement $\widehat{G}_{2}$
of $G_{2}$ as $\widehat{G}_{2}=(U,V,\widehat{E}_{2})$. Note that $\widehat{G}%
_{2}$ is a chain graph, since $G_{2}$ is also a chain graph by assumption.

\vspace{0.1cm}

\textbf{The interval order }$P_{2}$\textbf{.} We define $P_{2}$, such that $%
u_{i}<_{P_{2}}u_{j}$ if and only if $u_{i}v_{j}\in \widehat{E}_{2}$. We will
now prove that $P_{2}$ is a partial order. Recall that $E_{0}\subseteq E_{2}$
by assumption, and thus $E_{0}\cap \widehat{E}_{2}=\emptyset $. That is, $%
u_{i}v_{i}\notin \widehat{E}_{2}$ for every $i=1,2,\ldots ,n$. Furthermore, $%
\widehat{G}_{2}$ is a chain graph, since $G_{2}$ is a chain graph by
assumption. Therefore, for two distinct indices $i,j$, at most one of the
edges $u_{i}v_{j}$ and $u_{j}v_{i}$ belongs to $\widehat{E}_{2}$, since
otherwise these two edges would induce a $2K_{2}$ in $\widehat{G}_{2}$,
which is a contradiction. Thus, according to our definition of $P_{2}$,
whenever $i\neq j$, it follows that either $u_{i}<_{P_{2}}u_{j}$, or $%
u_{j}<_{P_{2}}u_{i}$, or $u_{i}$ and $u_{j}$ are incomparable in $P_{2}$.
Suppose that $u_{i}<_{P_{2}}u_{j}$ and $u_{j}<_{P_{2}}u_{k}$ for three
indices $i,j,k$. That is, $u_{i}v_{j},u_{j}v_{k}\in \widehat{E}_{2}$ by
definition of $P_{2}$. Since $\widehat{G}_{2}=(U,V,\widehat{E}_{2})$ is a
chain graph, the edges $u_{i}v_{j}$ and $u_{j}v_{k}$ do not build a $2K_{2}$
in $\widehat{G}_{2}$. Therefore, since $u_{j}v_{j}\notin \widehat{E}_{2}$,
it follows that $u_{i}v_{k}\in \widehat{E}_{2}$, i.e.~$u_{i}<_{P_{2}}u_{k}$.
That is, $P_{2}$ is transitive, and thus $P_{2}$ is a partial order.
Furthermore, note by the definition of $P_{2}$ and by Definition~\ref%
{C(P)-def} that $\widehat{G}_{2}=C(P_{2})$. Therefore, since $\widehat{G}%
_{2} $ is a chain graph, it follows by Lemma~\ref{interval-order-C(P)-lem}
that $P_{2}$ is an interval order.

\vspace{0.1cm}

In order to define the linear order $P_{1}$, we first define two auxiliary
orders $Q_{1}$ and $Q_{2}$, as follows.

\vspace{0.1cm}

\textbf{The interval order }$Q_{1}$\textbf{.} We define $Q_{1}$, such that $%
u_{i}<_{Q_{1}}u_{j}$ if and only if $u_{i}v_{j}\in E_{1}$. We will prove
that $Q_{1}$ is a partial order. Recall that $E_{0}\cap E_{1}=\emptyset $ by
assumption. That is, $u_{i}v_{i}\notin E_{1}$ for every $i=1,2,\ldots ,n$.
Furthermore, for two distinct indices $i,j$, at most one of the edges $%
u_{i}v_{j}$ and $u_{j}v_{i}$ belongs to $E_{1}$. Indeed, otherwise these two
edges would induce a $2K_{2}$ in $G_{1}$, which is a contradiction since $%
G_{1}$ is a chain graph by assumption. Thus, according to our definition of $%
Q_{1}$, whenever $i\neq j$, it follows that either $u_{i}<_{Q_{1}}u_{j}$, or 
$u_{j}<_{Q_{1}}u_{i}$, or $u_{i}$ and $u_{j}$ are incomparable in $Q_{1}$.
Suppose that $u_{i}<_{Q_{1}}u_{j}$ and $u_{j}<_{Q_{1}}u_{k}$ for three
indices $i,j,k$. That is, $u_{i}v_{j},u_{j}v_{k}\in E_{1}$ by definition of $%
Q_{1}$. Since $G_{1}$ is a chain graph by assumption, the edges $u_{i}v_{j}$
and $u_{j}v_{k}$ do not build a $2K_{2}$ in $G_{1}$. Therefore, since $%
u_{j}v_{j}\notin E_{1}$, it follows that $u_{i}v_{k}\in E_{1}$, i.e.~$%
u_{i}<_{Q_{1}}u_{k}$. That is, $Q_{1}$ is transitive, and thus $Q_{1}$ is a
partial order. Furthermore, note by the definition of $Q_{1}$ and by
Definition~\ref{C(P)-def} that $G_{1}=C(Q_{1})$. Therefore $Q_{1}$ is an
interval order by Lemma~\ref{interval-order-C(P)-lem}, since $G_{1}$ is a
chain graph by assumption.

\vspace{0.1cm}

\textbf{The partial order }$Q_{2}$\textbf{.} We define the partial order $%
Q_{2}$ as the inverse partial order $\overline{P}$ of $P$. That is, $%
u_{i}<_{Q_{2}}u_{j}$ if and only if $u_{j}<_{P}u_{i}$. Note that $Q_{2}$ is
transitive, since $P$ is transitive.

\vspace{0.1cm}

Before we define the linear order $P_{1}$, we first prove that the partial
orders $Q_{1}$ and $Q_{2}$ do not contradict each other. Suppose otherwise
that $u_{i}<_{Q_{1}}u_{j}$ and $u_{j}<_{Q_{2}}u_{i}$, for some pair $%
u_{i},u_{j}$. Then, since $u_{i}<_{Q_{1}}u_{j}$, it follows that $%
u_{i}v_{j}\in E_{1}$ by definition of $Q_{1}$. Therefore $u_{i}v_{j}\in E(%
\widehat{C}(P))$, since $\widehat{C}(P)=G_{1}\cup G_{2}$ by assumption. On
the other hand, since $u_{j}<_{Q_{2}}u_{i}$, it follows that $%
u_{i}<_{P}u_{j} $ by definition of $Q_{2}$. Therefore $u_{i}v_{j}\in E(C(P))$
by Definition~\ref{C(P)-def}, and thus $u_{i}v_{j}\notin E(\widehat{C}(P))$,
which is a contradiction. Therefore the partial orders $Q_{1}$ and $Q_{2}$
do not contradict each other.

\vspace{0.1cm}

\textbf{The linear order }$P_{1}$\textbf{.} Since the interval order $Q_{1}$
and the partial order $Q_{2}$ do not contradict each other, we can construct
by Lemma~\ref{common-linear-extension-lem} a common linear extension $Q_{0}$
of $Q_{1}$ and $Q_{2}$. That is, if $u_{i}<_{Q_{1}}u_{j}$ or $%
u_{i}<_{Q_{2}}u_{j}$, then $u_{i}<_{Q_{0}}u_{j}$. We define now the linear
order $P_{1}$ as the inverse linear order $\overline{Q_{0}}$ of $Q_{0}$.
Note that $P_{1}$ is also a linear extension of $P$, since $u_{i}<_{P}u_{j}$
implies that $u_{j}<_{Q_{2}}u_{i}$, which in turn implies that $%
u_{i}<_{P_{1}}u_{j}$.

\vspace{0.1cm}Now we prove that $\widehat{C}(P)\subseteq \widehat{NC}%
(P_{1})\cup \widehat{C}(P_{2})$. Let $u_{i}v_{j}\in E_{1}$. Then ${%
u_{i}<_{Q_{1}}u_{j}}$ by the definition of $Q_{1}$, and thus $%
u_{j}<_{P_{1}}u_{i}$ by the definition of $P_{1}$. Therefore $u_{i}\nleq
_{P_{1}}u_{j}$, and thus ${u_{i}v_{j}\notin E(NC(P_{1}))}$ by Definition~\ref%
{NC(P)-def}. Therefore $u_{i}v_{j}\in E(\widehat{NC}(P_{1}))$. Thus $%
E_{1}\subseteq E(\widehat{NC}(P_{1}))$, i.e.~${G_{1}\subseteq \widehat{NC}%
(P_{1})}$. Recall now that $\widehat{C}(P)=G_{1}\cup G_{2}$ by assumption.
Furthermore recall that $\widehat{G}_{2}=C(P_{2})$ as we proved above, and
thus $G_{2}=\widehat{C}(P_{2})$. Therefore, since $G_{1}\subseteq \widehat{NC%
}(P_{1})$, it follows that~$\widehat{C}(P) \subseteq \widehat{NC}(P_{1})
\cup \widehat{C}(P_{2})$.

Finally we prove that $C(P)\subseteq NC(P_{1})\cap C(P_{2})$. Consider now
an edge $u_{i}v_{j}\in E(C(P))$. Then $u_{i}<_{P}u_{j}$ by Definition~\ref%
{C(P)-def}, and thus $u_{j}<_{Q_{2}}u_{i}$ by the definition of~$Q_{2}$.
Furthermore $u_{i}<_{P_{1}}u_{j}$ by the definition of $P_{1}$, and thus $%
u_{i}v_{j}\in E(NC(P_{1}))$ by Definition~\ref{NC(P)-def}. Note now that $%
C(P)=\widehat{G}_{1}\cap \widehat{G}_{2}$, since $\widehat{C}(P)=G_{1}\cup
G_{2}$ by assumption. Therefore, since $u_{i}v_{j}\in E(C(P))$ by
assumption, it follows that also $u_{i}v_{j}\in \widehat{E}_{2}$. That is,
if $u_{i}v_{j}\in E(C(P))$ then $u_{i}v_{j}\in E(NC(P_{1}))$ and $%
u_{i}v_{j}\in \widehat{E}_{2}$. Therefore, since $\widehat{G}_{2}=C(P_{2})$,
it follows that $C(P)\subseteq NC(P_{1})\cap C(P_{2})$.

Summarizing, since $\widehat{C}(P)\subseteq \widehat{NC}(P_{1})\cup \widehat{%
C}(P_{2})$ and $C(P)\subseteq NC(P_{1})\cap C(P_{2})$, it follows that~$%
\widehat{C}(P)=\widehat{NC}(P_{1})\cup \widehat{C}(P_{2})$. This completes
the proof of the theorem.\qed
\end{proof}

The next corollary follows now easily by Theorems~\ref{PI-char-thm} and~\ref%
{linear-interval-coverable-char-thm}.

\begin{corollary}
\label{linear-interval-col}Let $G=(V,E)$ be a cocomparability graph and $P$
be a partial order of $\overline{G}$. Then, $G$ is a PI graph if and only if
the bipartite graph $\widehat{C}(P)$ is linear-interval coverable.
\end{corollary}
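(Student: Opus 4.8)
The plan is to prove Corollary~\ref{linear-interval-col} by a direct
chaining of the two characterization theorems already established, namely
Theorem~\ref{PI-char-thm} and
Theorem~\ref{linear-interval-coverable-char-thm}. The corollary asserts an
equivalence between two statements about the fixed cocomparability graph $G$
and the fixed partial order $P$ of $\overline{G}$: that $G$ is a PI graph,
and that $\widehat{C}(P)$ is linear-interval coverable. Since both target
theorems are phrased for exactly this setup (a cocomparability graph $G$
together with a partial order $P$ of its complement), no additional
construction is required; the proof is essentially a two-line logical
composition.

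First I would invoke Theorem~\ref{PI-char-thm}, which states that $G$ is a
PI graph if and only if $P=P_{1}\cap P_{2}$ for some linear order $P_{1}$ and
some interval order $P_{2}$. This is precisely statement~(a) of
Theorem~\ref{linear-interval-coverable-char-thm}. Next I would apply the
equivalence (a)~$\Leftrightarrow$~(c) of
Theorem~\ref{linear-interval-coverable-char-thm}, which states that
$P=P_{1}\cap P_{2}$ with $P_{1}$ linear and $P_{2}$ an interval order holds
if and only if $\widehat{C}(P)$ is linear-interval coverable. Composing these
two equivalences yields immediately that $G$ is a PI graph if and only if
$\widehat{C}(P)$ is linear-interval coverable, which is exactly the claim.

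I anticipate no genuine obstacle here, since all the substantive work has
already been done in the two cited theorems; the only point requiring minor
care is confirming that the hypotheses match. Both
Theorem~\ref{PI-char-thm} and
Theorem~\ref{linear-interval-coverable-char-thm} presuppose that $P$ is a
partial order of $\overline{G}$ (respectively, simply a partial order $P$ on
the ground set $U$), so I would verify that the same $P$ serving as the
complement order in Theorem~\ref{PI-char-thm} is the one fed into
Theorem~\ref{linear-interval-coverable-char-thm}, and that the notation
$\widehat{C}(P)$ and the set $E_{0}=\{u_{i}v_{i}\mid 1\leq i\leq n\}$
coincide across the two statements. These agree by inspection, so the
corollary follows directly by combining the two theorems, with no further
case analysis or computation needed.
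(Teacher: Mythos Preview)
Your proposal is correct and matches the paper's own justification exactly: the paper states that the corollary follows easily by Theorems~\ref{PI-char-thm} and~\ref{linear-interval-coverable-char-thm}, which is precisely the two-step composition you outline.
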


We now present Algorithm~\ref{alg-PI-repr-constr}, which constructs a PI
representation $R$ of a cocomparability graph~$G$ by a linear-interval cover 
$\{E_{1},E_{2}\}$ of the bipartite graph $\widehat{C}(P)$ (cf.~Definition~%
\ref{linear-interval-coverable-def}). Since~${E_{0}\subseteq E_{2}\setminus
E_{1}}$ by Definition~\ref{linear-interval-coverable-def}, where~${%
E_{0}=\{u_{i}v_{i}\ |\ 1\leq i\leq n\}}$ and $n$ is the number of vertices
of~$G$, note that $i\neq j$ during the execution of each of the lines~\ref%
{alg-PI-5b},~\ref{alg-PI-6b}, and~\ref{alg-PI-7b} of Algorithm~\ref%
{alg-PI-repr-constr}.

\begin{algorithm}[thb]
\caption{Construction of a PI representation, given a linear-interval cover} \label{alg-PI-repr-constr}
\begin{algorithmic}[1]
\REQUIRE{A cocomparability graph $G$, a partial order $P$ of $\overline{G}$, the domination bipartite graph ${C(P) = (U,V,E)}$, 
and a linear-interval cover~${\{E_{1},E_{2}\}}$ of~${\widehat{C}(P)}$}
\ENSURE{A PI representation $R$ of $G$}

\medskip

\STATE{Let $U=\{u_{1},u_{2},\ldots ,u_{n}\}$, ${V=\{v_{1},v_{2},\ldots ,v_{n}\}}$} \label{alg-PI-1}

\STATE{$Q_{1}\leftarrow \emptyset$; $Q_{2}\leftarrow \emptyset$; $P_{2}\leftarrow \emptyset$} \label{alg-PI-2}

\vspace{0.1cm}

\FOR[construction of the partial orders $Q_{1},Q_{2},P_{2}$]{$i = 1,2,\ldots,n$} \label{alg-PI-3}
     \FOR{$j = 1,2,\ldots,n$} \label{alg-PI-4}
          
          \IF[$i \neq j$]{$u_{i}v_{j} \notin E_{2}$} \label{alg-PI-5a}
               \STATE{$u_{i} <_{P_{2}} u_{j}$} \label{alg-PI-5b}
          \ENDIF
          
          \vspace{0.1cm}
          
          \IF[$i \neq j$]{$u_{i}v_{j}\in E_{1}$} \label{alg-PI-6a}
               \STATE{$u_{i}<_{Q_{1}}u_{j}$} \label{alg-PI-6b}
          \ENDIF
          
          \vspace{0.1cm}
          
          \IF[$i \neq j$]{$u_{j} <_{P} v_{i}$} \label{alg-PI-7a}
               \STATE{$u_{i} <_{Q_{2}} u_{j}$} \label{alg-PI-7b}
          \ENDIF
          
     \ENDFOR
\ENDFOR

\vspace{0.1cm}

\STATE{Compute a linear extension $Q_{0}$ of $Q_{1} \cup Q_{2}$} \label{alg-PI-8}

\STATE{$P_{1} \leftarrow \overline{Q_{0}}$} \label{alg-PI-9}

\vspace{0.1cm}

\STATE{Place the elements of $U$ on a line $L_1$ according to the linear order $P_1$} \label{alg-PI-10}

\STATE{Place a set of $n$ intervals on a line $L_2$ (parallel to $L_1$) according to the interval order $P_{2}$} \label{alg-PI-11}

\STATE{Build the PI representation $R$ of $G$ by connecting the endpoints of the intervals on $L_2$ with the 
corresponding points on $L_1$} \label{alg-PI-12}

\RETURN{$R$} \label{alg-PI-13}
\end{algorithmic}
\end{algorithm}

\begin{theorem}
\label{PI-repr-constr-thm}Let $G$ be a cocomparability graph with $n$
vertices and $P$ be the partial order of $\overline{G}$. Let $%
\{E_{1},E_{2}\} $ be a linear-interval cover of $\widehat{C}(P)$. Then
Algorithm~\ref{alg-PI-repr-constr} constructs in $O(n^{2})$ time
a~PI~representation~$R$ of~$G$.
\end{theorem}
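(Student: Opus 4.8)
The plan is to establish both the correctness and the running time of Algorithm~\ref{alg-PI-repr-constr}. For correctness, I would argue that the algorithm faithfully reconstructs the three partial orders $Q_1$, $Q_2$, and $P_2$ exactly as they are defined in the proof of the implication (c)~$\Rightarrow$~(b) of Theorem~\ref{linear-interval-coverable-char-thm}. The key observation is that the loop in lines~\ref{alg-PI-3}--\ref{alg-PI-4} is simply a direct translation of those definitions into a double iteration over all index pairs $(i,j)$: line~\ref{alg-PI-5b} encodes $u_i <_{P_2} u_j \iff u_i v_j \in \widehat{E}_2$, line~\ref{alg-PI-6b} encodes $u_i <_{Q_1} u_j \iff u_i v_j \in E_1$, and line~\ref{alg-PI-7b} encodes $Q_2 = \overline{P}$. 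Since $E_0 \subseteq E_2 \setminus E_1$ guarantees $i \neq j$ whenever any of these edges is present, the side conditions flagged in the comments are automatically satisfied, so no diagonal pair is ever mishandled.

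Next I would invoke Theorem~\ref{linear-interval-coverable-char-thm} itself to supply the substance of the correctness argument rather than re-deriving it. That theorem already proves that $Q_1$ is an interval order, that $Q_1$ and $Q_2$ do not contradict each other, that their common linear extension $Q_0$ (computed in line~\ref{alg-PI-8} and inverted in line~\ref{alg-PI-9}) yields a linear order $P_1$, and that $P_2$ is an interval order with $P = P_1 \cap P_2$. Hence by Theorem~\ref{PI-char-thm} the pair $(P_1, P_2)$ certifies that $G$ is a PI graph, and the geometric placement in lines~\ref{alg-PI-10}--\ref{alg-PI-12}---placing the points of $U$ on $L_1$ in the order $P_1$ and the intervals on $L_2$ in the order $P_2$, then joining each interval's endpoints to its point---produces a valid PI representation $R$. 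The only genuinely new step here is verifying that this geometric reading-off is legitimate, i.e.~that two triangles intersect exactly when the corresponding elements are incomparable in $P = P_1 \cap P_2$; this follows directly from the definition of a PI order recalled in the proof of Theorem~\ref{PI-char-thm}.

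For the running time, I would account for each block separately. Initialization (lines~\ref{alg-PI-1}--\ref{alg-PI-2}) is trivial. The double loop (lines~\ref{alg-PI-3}--\ref{alg-PI-7b}) runs over all $n^2$ ordered pairs $(i,j)$, and each membership test such as $u_i v_j \in \widehat{E}_2$ or $u_i v_j \in E_1$ can be answered in constant time using an adjacency-matrix representation of the bipartite graphs, so this block costs $O(n^2)$. Computing a common linear extension $Q_0$ of $Q_1 \cup Q_2$ in line~\ref{alg-PI-8} amounts to a topological sort of a directed acyclic graph on $n$ vertices with $O(n^2)$ edges---acyclicity being guaranteed by Lemma~\ref{common-linear-extension-lem}---which takes $O(n^2)$ time, and inverting it in line~\ref{alg-PI-9} is $O(n)$. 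Finally, the placement steps~\ref{alg-PI-10}--\ref{alg-PI-12} read off $O(n)$ points and intervals from the orders $P_1$ and $P_2$; constructing the intervals for an interval order from its relation, and sorting them along $L_2$, again costs at most $O(n^2)$. Summing these contributions gives the claimed $O(n^2)$ bound.

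I expect the main obstacle to be not any single computation but the bookkeeping needed to confirm that the algorithm's variables coincide exactly with the abstractly-defined orders of Theorem~\ref{linear-interval-coverable-char-thm}, so that its conclusions can be imported wholesale; in particular, care is required because the algorithm builds $P_1$ as the inverse of the common extension $Q_0$, matching the definition $P_1 = \overline{Q_0}$ in the theorem. A secondary subtlety is ensuring that the interval-order $P_2$ can actually be realized by concrete intervals on $L_2$ within the time budget, but this is standard since an interval order on $n$ elements admits an interval representation computable in $O(n^2)$ time from its relation.
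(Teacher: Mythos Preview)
Your proposal is correct and follows essentially the same approach as the paper: both invoke the proof of the implication (c)~$\Rightarrow$~(b) in Theorem~\ref{linear-interval-coverable-char-thm} to certify that the orders $P_1,P_2$ computed by the algorithm are a linear and an interval order with $P=P_1\cap P_2$, then argue that the geometric placement is straightforward and that each block of the algorithm runs in $O(n^2)$ time. Your write-up is in fact somewhat more explicit than the paper's (e.g.\ spelling out the topological sort and the adjacency-matrix lookups), but the underlying argument is the same.
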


\begin{proof}
Since $\widehat{C}(P)$ admits a linear-interval cover $\{E_{1},E_{2}\}$,
Corollary~\ref{linear-interval-col} implies that~$G$ is a~PI~graph.
Furthermore, it follows by the proof of the implication ((c) $\Rightarrow $
(b)) in Theorem~\ref{linear-interval-coverable-char-thm} that the partial
orders~$P_{1}$ and~$P_{2}$ that are constructed in lines~\ref{alg-PI-3}-\ref%
{alg-PI-9} of Algorithm~\ref{alg-PI-repr-constr} are a linear order and an
interval order, respectively, such that $\widehat{C}(P)=\widehat{NC}%
(P_{1})\cup \widehat{C}(P_{2})$. Furthermore, it follows by the proof of the
implication ((b) $\Rightarrow $ (a)) in Theorem~\ref%
{linear-interval-coverable-char-thm} that $P=P_{1}\cap P_{2}$ for these two
partial orders. Once we have computed in lines~\ref{alg-PI-3}-\ref{alg-PI-9}
the linear order $P_{1}$ and the interval order $P_{2}$, for which $%
P=P_{1}\cap P_{2}$, it is now straightforward to construct a PI
representation~$R$ of~$G$ as follows (cf.~also~\cite{Cerioli08} and the
proof of Theorem~\ref{PI-char-thm}). We arrange a set of $n$ points (resp.~$%
n $ intervals) on a line~$L_{1}$ (resp.~on a line~$L_{2}$, parallel to $%
L_{1} $) according to the linear order~$P_{1}$ (resp.~to the interval order~$%
P_{2}$). Then we connect the endpoints of the intervals on $L_{2}$ with the
corresponding points on $L_{1}$. Regarding the time complexity, each of the
lines~\ref{alg-PI-5a}-\ref{alg-PI-7b} of Algorithm~\ref{alg-PI-repr-constr}
can be executed in constant time, and thus the lines~\ref{alg-PI-3}-\ref%
{alg-PI-7b} can be executed in total~$O(n^{2})$ time. Furthermore, since the
lines~\ref{alg-PI-8}-\ref{alg-PI-12} can be executed in a trivial way in at
most~$O(n^{2})$ time each, it follows that the running time of Algorithm~\ref%
{alg-PI-repr-constr} is~$O(n^{2})$.\qed
\end{proof}

\section{Detecting linear-interval covers using Boolean satisfiability\label{linear-interval-satisfiability-sec}}

The natural algorithmic question that arizes from the characterization of PI
graphs using linear-interval covers in Corollary~\ref{linear-interval-col},
is the following: ``Given a cocomparability graph $G$ and a
partial order $P$ of $\overline{G}$, can we efficiently decide whether the
bipartite graph $\widehat{C}(P)$ has a linear-interval
cover?'' We will answer this algorithmic question in the
affirmative in Section~\ref{recognition-sec}. In this section we translate 
\emph{every} instance of this decision problem (i.e.~whether the bipartite
graph $\widehat{C}(P)$ has a linear-interval cover) to a restricted instance
of $3$SAT (cf.~Theorem~\ref{linear-interval-cover-by-satisfiability-thm}).
That is, for every such a bipartite graph~$\widehat{C}(P)$, we construct a
Boolean formula $\phi $ in conjunctive normal form (CNF), with size
polynomial on the size of~$\widehat{C}(P)$ (and thus also on $G$), such that~%
$\widehat{C}(P)$ has a linear-interval cover if and only if $\phi $ is
satisfiable. In particular, this formula $\phi $ can be written as $\phi
=\phi _{1}\wedge \phi _{2}$, where $\phi _{1}$ has three literals in every
clause and $\phi _{2}$ has two literals in every clause. Moreover, as we
will prove in Section~\ref{recognition-sec}, the satisfiability problem can
be efficiently decided on the formula $\phi $, by exploiting an appropriate
sub-formula of $\phi $ which is gradually mixed (cf.~Definition~\ref%
{gradually-mixed-def}).%

In the remainder of the paper, given a cocomparability graph $G$ and a
partial ordering $P$ of its complement $\overline{G}$, we denote by $%
\widetilde{G}=\widehat{C}(P)$ the bipartite complement of the domination
bipartite graph $C(P)$ of~$P$. Furthermore we denote by $H$ the associated
split graph of $\widetilde{G}$ and by $H^{\ast }$ the conflict graph of~$H$.
Moreover, we assume in the remainder of the paper without loss of generality
that $\chi (H^{\ast })\leq 2$, i.e.~that $H^{\ast }$ is bipartite. Indeed,
as we formally prove in Lemma~\ref{conflict-bilocor-necessity-lem}, if $\chi
(H^{\ast })>2$ then $\widetilde{G}$ does not have a linear-interval cover,
i.e.~$G$ is not a PI graph. Note that every proper $2$-coloring of the
vertices of the conflict graph $H^{\ast }$ corresponds to exactly one $2$%
-coloring of the edges of $H$ that includes no monochromatic $AC_{4}$. We
assume in the following that a proper $2$-coloring (with colors blue and
red) of the vertices of $H^{\ast }$ is given as input; note that $\chi _{0}$
can be computed in polynomial time.

\begin{lemma}
\label{conflict-bilocor-necessity-lem}Let $G$ be a cocomparability graph and 
$P$ be a partial order of $\overline{G}$. Let~$\widetilde{G}=\widehat{C}(P)$%
, $H$~be~the associated split graph of $\widetilde{G}$, and $H^{\ast }$ be
the conflict graph of $H$. If ${\widetilde{G}}$ is linear-interval
coverable, then ${\chi (H^{\ast })\leq 2}$.
\end{lemma}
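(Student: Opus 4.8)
The plan is to prove the contrapositive: assuming $\chi(H^{\ast}) > 2$, I show that $\widetilde{G}$ is not linear-interval coverable. Equivalently, and more directly, I assume that $\widetilde{G}$ \emph{is} linear-interval coverable and deduce that $\chi(H^{\ast}) \leq 2$. The key observation is that a linear-interval cover $\{E_1, E_2\}$ of $\widetilde{G}$ is in particular a chain-cover of $\widetilde{G}$ by two chain graphs $G_1 = (U,V,E_1)$ and $G_2 = (U,V,E_2)$. By Lemma~\ref{chain-threshold-number-lem}, a bipartite graph has chain cover number equal to the threshold cover number of its associated split graph; since here $H = H_{\widetilde{G}}$ is exactly the associated split graph of $\widetilde{G}$, having a chain-cover of size $2$ for $\widetilde{G}$ yields $t(H) \leq 2$.

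Having established $t(H) \leq 2$, I invoke Lemma~\ref{chromatic-number-conflict-lower-bound-lem}, which states $\chi(G^{\ast}) \leq t(G)$ for any graph $G$. Applying this with $G = H$ gives $\chi(H^{\ast}) \leq t(H) \leq 2$, which is precisely the desired conclusion. So the whole argument reduces to chaining together two already-proved lemmas, once I verify that the cover number bound is correctly transferred through the associated split graph.

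The main point requiring care is the first step: I must confirm that a linear-interval cover indeed witnesses $ch(\widetilde{G}) \leq 2$. By Definition~\ref{linear-interval-coverable-def}, a linear-interval cover consists of two chain graphs $G_1, G_2$ with $\widetilde{G} = G_1 \cup G_2$ (and the additional matching condition $E_0 \subseteq E_2 \setminus E_1$, which I do not even need here). Thus $\{G_1, G_2\}$ is by definition a chain-cover of $\widetilde{G}$, so $ch(\widetilde{G}) \leq 2$. Then Lemma~\ref{chain-threshold-number-lem} gives $t(H) = ch(\widetilde{G}) \leq 2$.

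I do not expect any genuine obstacle here; the lemma is essentially a bookkeeping consequence of the correspondence between chain-covers of bipartite graphs and threshold-covers of their associated split graphs, combined with the chromatic lower bound on threshold cover number. The only subtlety worth stating explicitly is that $H$ is the associated split graph of $\widetilde{G}$ (as fixed in the paragraph preceding the lemma), so that Lemma~\ref{chain-threshold-number-lem} applies directly with $G = \widetilde{G}$ and $H_G = H$.
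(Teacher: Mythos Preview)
Your proof is correct and in fact more direct than the paper's. The paper argues by contradiction: from $\chi(H^{\ast})>2$ it deduces $t(H)>2$ (Lemma~\ref{chromatic-number-conflict-lower-bound-lem}), then $ch(\widetilde{G})>2$ (Lemma~\ref{chain-threshold-number-lem}), then that $G$ is not a trapezoid graph (Theorem~\ref{trapezoid-chain-number-2-thm}), hence not a PI graph, and finally invokes Corollary~\ref{linear-interval-col} to conclude that $\widetilde{G}$ is not linear-interval coverable. You instead observe directly from Definition~\ref{linear-interval-coverable-def} that a linear-interval cover is already a chain cover of size two (the extra condition $E_0\subseteq E_2\setminus E_1$ is simply ignored), so $ch(\widetilde{G})\leq 2$, and then apply the same two lemmas in the forward direction. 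Your route avoids the detour through the trapezoid/PI characterizations (Theorem~\ref{trapezoid-chain-number-2-thm} and Corollary~\ref{linear-interval-col}) altogether, which is both shorter and conceptually cleaner; the paper's version has the minor advantage of situating the lemma within the broader narrative that linear-interval coverability is a refinement of the trapezoid-graph condition.
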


\begin{proof}
Suppose otherwise that $\chi (H^{\ast })>2$. Then $t(H)>2$, since $\chi
(H^{\ast })\leq t(H)$ by Lemma~\ref%
{chromatic-number-conflict-lower-bound-lem}. Therefore, Lemma~\ref%
{chain-threshold-number-lem} implies that $ch({\widetilde{G}})>2$, and thus $%
G$ is not a trapezoid graph~\cite{MaSpinrad94}. Therefore $G$ is clearly not
a PI graph, and thus ${\widetilde{G}}$ is not linear-interval coverable by
Corollary~\ref{linear-interval-col}, which is a contradiction to the
assumption of the lemma. Therefore~$\chi (H^{\ast })\leq 2$.\qed
\end{proof}

Let $C_{1},C_{2},\ldots ,C_{k}$ be the connected components of $H^{\ast }$.
Some of these components of $H^{\ast }$ may be isolated vertices, which
correspond to uncommitted edges in $H$. We assign to every component $C_{i}$%
, where $1\leq i\leq k$, the Boolean variable $x_{i}$. Since $H^{\ast }$ is
bipartite by assumption, the vertices of each connected component $C_{i}$ of 
$H^{\ast }$ can be partitioned into two color classes $S_{i,1}$ and $S_{i,2}$%
. Without loss of generality, we assume that $S_{i,1}$ (resp.~$S_{i,2}$)
contains the vertices of $C_{i}$ that are colored red (resp.~blue) in $\chi
_{0}$. Note that, since vertices of $H^{\ast }$ correspond to edges of $H$
(cf.~Definition~\ref{conflict-graph-def}), for every two edges $e$ and $%
e^{\prime }$ of $H$ that are in conflict (i.e.~$e||e^{\prime }$) there
exists an index $i\in \{1,2,\ldots ,k\}$ such that one of these edges
belongs to $S_{i,1}$ and the other belongs to $S_{i,2}$. We now assign a 
\emph{literal} $\ell _{e}$ to~every~edge~$e$~of~$H$ as follows: if $e\in
S_{i,1}$ for some $i\in \{1,2,\ldots ,k\}$, then $\ell _{e}=x_{i}$;
otherwise, if $e\in S_{i,2}$, then~$\ell _{e}=\overline{x_{i}}$. Note that,
by construction, whenever two edges are in conflict in $H$, their assigned
literals are one the negation of the other.

\begin{observation}
\label{truth-assignment-coloring-obs}Every truth assignment $\tau $ of the
variables $x_{1},x_{2},\ldots ,x_{k}$ corresponds bijectively to a proper $2$%
-coloring $\chi _{\tau }$ (with colors blue and red) of the vertices of $%
H^{\ast }$, as follows: $x_{i}=0$ in $\tau $ (resp.~$x_{i}=1$ in $\tau $),
if and only if all vertices of the component $C_{i}$ have in $\chi _{\tau }$
the same color as in $\chi _{0}$ (resp.~opposite color than in $\chi _{0}$).
In particular, $\tau =(0,0,\ldots ,0)$ corresponds to the coloring $\chi
_{0} $.
\end{observation}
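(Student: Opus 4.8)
The plan is to exploit the elementary fact that a connected bipartite graph has exactly two proper $2$-colorings with a fixed palette of colors, together with the fact that colorings of distinct connected components may be chosen independently. First I would note that $H^{\ast}$ is bipartite by our standing assumption, so each of its connected components $C_i$ is itself a connected bipartite graph (possibly a single isolated vertex corresponding to an uncommitted edge, cf.~Observation~\ref{uncommitted-isolated-obs}). A connected bipartite graph admits exactly two proper $2$-colorings with the colors blue and red, and these two colorings are obtained from one another by globally swapping blue and red on the vertices of $C_i$; for an isolated-vertex component this is the trivial statement that the single vertex may be colored either blue or red. In particular, relative to the reference coloring $\chi_0$ restricted to $C_i$, there are precisely two possibilities for the coloring of $C_i$: it either agrees with $\chi_0$ on all vertices of $C_i$, or it is the opposite of $\chi_0$ on all vertices of $C_i$.

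Next I would observe that, since there are no edges of $H^{\ast}$ between two distinct components $C_i$ and $C_j$, a function assigning colors to the vertices of $H^{\ast}$ is a proper $2$-coloring if and only if its restriction to each component $C_i$ is a proper $2$-coloring of $C_i$. Consequently, a proper $2$-coloring of $H^{\ast}$ is uniquely determined by, and may be freely specified through, an independent choice for each $i\in\{1,2,\ldots,k\}$ of which of the two colorings of $C_i$ is used. Combining this with the previous paragraph, each proper $2$-coloring of $H^{\ast}$ corresponds to a vector of $k$ binary choices, where the $i$-th choice records whether the coloring of $C_i$ agrees with $\chi_0$ or is the opposite of $\chi_0$.

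Finally I would define the map $\tau\mapsto\chi_{\tau}$ by exactly this rule: given a truth assignment $\tau=(x_1,x_2,\ldots,x_k)$, let $\chi_{\tau}$ color the vertices of $C_i$ as in $\chi_0$ when $x_i=0$ and with the colors of $\chi_0$ swapped when $x_i=1$. By the first paragraph this is a proper $2$-coloring on each component, and hence by the second paragraph a proper $2$-coloring of $H^{\ast}$; the stated equivalence (that $x_i=0$ if and only if all vertices of $C_i$ keep their $\chi_0$-color, and $x_i=1$ if and only if they all take the opposite color) is immediate from the definition. Bijectivity follows because the two choices per component are mutually exclusive and exhaustive, so distinct assignments $\tau$ yield distinct colorings and every proper $2$-coloring arises from exactly one $\tau$. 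The special case $\tau=(0,0,\ldots,0)$ forces every component to retain its $\chi_0$-color, giving $\chi_{\tau}=\chi_0$.

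Since the statement is only an observation, no serious obstacle is expected; the single point that I would make fully explicit is the treatment of the isolated-vertex (uncommitted-edge) components, for which the phrase ``two colorings related by a swap'' must be read as the choice of coloring the lone vertex blue or red. Once that degenerate case is folded into the general argument, the correspondence is a direct consequence of the structure of proper $2$-colorings of bipartite graphs.
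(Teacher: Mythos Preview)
Your argument is correct and is exactly the natural justification the paper leaves implicit: the observation is stated without proof in the paper, and your use of the standard fact that each connected bipartite component has precisely two proper $2$-colorings (related by swapping colors), combined with independence across components, is the intended reasoning. Your explicit handling of the isolated-vertex components is a welcome clarification but not strictly necessary, since it is just the trivial instance of the same fact.
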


We now present the construction of the Boolean formulas~$\phi _{1}$ and~$%
\phi _{2}$ from the graphs~$H$ and~$H^{\ast }$, cf.~Algorithms~\ref%
{alg-phi_1-constr} and~\ref{alg-phi_2-constr}, respectively.

\medskip

\textbf{Description of the }$3$\textbf{-CNF formula }$\phi _{1}$\textbf{:}
Consider an $AC_{6}$ in the split graph $H$, and let~$e,e^{\prime
},e^{\prime \prime }$ be its three edges in $H$, such that no two literals
among $\{\ell _{e},\ell _{e^{\prime }},\ell _{e^{\prime \prime }}\}$ are one
the negation of the other. According to Algorithm~\ref{alg-phi_1-constr},
the Boolean formula $\phi _{1}$ has for this triple~$\{e,e^{\prime
},e^{\prime \prime }\}$ of edges exactly the two clauses $\alpha =(\ell
_{e}\vee \ell _{e^{\prime }}\vee \ell _{e^{\prime \prime }})$ and $\alpha
^{\prime }=(\overline{\ell _{e}}\vee \overline{\ell _{e^{\prime }}}\vee 
\overline{\ell _{e^{\prime \prime }}})$. It is easy to check by the
assignment of literals to edges that the clause $\alpha $ (resp.~the clause $%
\alpha ^{\prime }$) of $\phi _{1}$ is false in a truth assignment~$\tau $ of
the variables if and only if all edges $\{e,e^{\prime },e^{\prime \prime }\}$
are colored red (resp.~blue) in the $2$-edge-coloring $\chi _{\tau }$ of $H$
(cf.~Observation~\ref{truth-assignment-coloring-obs}), as the following
observation states.

\begin{observation}
\label{monochromatic-coloring-assignment-obs}Let $\tau $ be any truth
assignment of the variables $x_{1},x_{2},\ldots ,x_{k}$. Let $%
\{e_{1},e\,_{2},e_{3}\}$ be the edges of an $AC_{6}$ in $H$ and let $\alpha
=(\ell _{e_{1}}\vee \ell _{e_{2}}\vee \ell _{e_{3}})$ and $\alpha ^{\prime
}=(\overline{\ell _{e_{1}}}\vee \overline{\ell _{e_{2}}}\vee \overline{\ell
_{e_{3}}})$ be a the corresponding clauses in $\phi _{1}$. This $AC_{6}$ is
monochromatic in the coloring $\chi _{\tau }$ if and only if ${\alpha =0}$
or ${\alpha ^{\prime }=0}$~in~$\tau $.
\end{observation}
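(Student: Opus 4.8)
The plan is to reduce the statement to a single clean correspondence between the truth value of the literal $\ell_e$ and the color that the edge $e$ receives in $\chi_\tau$, and then to read off both sides of the claimed equivalence directly from that correspondence.

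First I would establish the following key fact: for every edge $e$ of $H$ and every truth assignment $\tau$, the edge $e$ is colored red in $\chi_\tau$ if and only if $\ell_e = 0$ in $\tau$ (equivalently, $e$ is blue in $\chi_\tau$ if and only if $\ell_e = 1$). This is proved by a short case analysis on the color class of $e$, using the assignment of literals to edges together with Observation~\ref{truth-assignment-coloring-obs}. Suppose $e \in S_{i,1}$, so that $e$ is red in $\chi_0$ and $\ell_e = x_i$. By Observation~\ref{truth-assignment-coloring-obs}, if $x_i = 0$ then $C_i$ keeps its $\chi_0$-colors, so $e$ stays red, matching $\ell_e = 0$; if $x_i = 1$ then $C_i$ flips, so $e$ becomes blue, matching $\ell_e = 1$. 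Symmetrically, if $e \in S_{i,2}$, then $e$ is blue in $\chi_0$ and $\ell_e = \overline{x_i}$; again by Observation~\ref{truth-assignment-coloring-obs} one checks that $e$ is red in $\chi_\tau$ exactly when $x_i = 1$, i.e.~exactly when $\ell_e = \overline{x_i} = 0$. In both cases the stated rule ``$e$ red $\Longleftrightarrow \ell_e = 0$'' holds.

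Given this correspondence, the remainder is immediate. The clause $\alpha = (\ell_1 \vee \ell_2 \vee \ell_3)$ evaluates to $0$ in $\tau$ precisely when $\ell_1 = \ell_2 = \ell_3 = 0$, which by the correspondence means that all three edges $e_1, e_2, e_3$ are red in $\chi_\tau$. Likewise $\alpha' = (\overline{\ell_1} \vee \overline{\ell_2} \vee \overline{\ell_3})$ evaluates to $0$ precisely when $\ell_1 = \ell_2 = \ell_3 = 1$, i.e.~when all three edges are blue. Since the $AC_6$ is monochromatic in $\chi_\tau$ exactly when its three edges receive the same color --~all red or all blue~-- this is equivalent to $\alpha = 0$ or $\alpha' = 0$ in $\tau$, which is the desired statement.

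I expect no serious obstacle here: the content is the bookkeeping of the first step, and the only point requiring care is to keep straight the two layers of the encoding, namely how $\tau$ determines the recoloring of each component $C_i$ via Observation~\ref{truth-assignment-coloring-obs}, and how the sign of $\ell_e$ (whether $e$ lies in $S_{i,1}$ or in $S_{i,2}$) interacts with that recoloring. Once both the $e \in S_{i,1}$ and the $e \in S_{i,2}$ cases are shown to yield the same rule ``$e$ red $\Longleftrightarrow \ell_e = 0$'', the equivalence follows mechanically, with no appeal to the $AC_6$ structure itself beyond the fact that its three edges carry the three literals $\ell_1,\ell_2,\ell_3$.
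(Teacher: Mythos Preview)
Your proposal is correct and takes essentially the same approach as the paper. The paper does not give a separate proof of this observation; it merely notes beforehand that ``it is easy to check by the assignment of literals to edges'' that $\alpha=0$ (resp.\ $\alpha'=0$) in $\tau$ iff all three edges are red (resp.\ blue) in $\chi_\tau$, and your write-up is exactly this check spelled out via the rule ``$e$ red $\Longleftrightarrow \ell_e=0$''.
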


\begin{algorithm}[t!]
\caption{Construction of the $3$-CNF Boolean formula $\phi_{1}$} \label{alg-phi_1-constr}
\begin{algorithmic}[1]
\REQUIRE{The bipartite graph ${\widetilde{G} = \widehat{C}(P)}$, the associated split graph~$H$~of~$\widetilde{G}$, 
its conflict graph~$H^{\ast}$, and a~proper $2$-coloring $\chi_{0}$ of the vertices of $H^{\ast}$}
\ENSURE{The $3$-CNF Boolean formula $\phi_{1}$}

\medskip

\STATE{$\phi_{1} \leftarrow \emptyset$} \label{algorithm-phi_1-1}

\vspace{0.1cm}

\FOR[note that this is an $AC_{6}$ in the graph $H$ itself and not in a color subclass of its edges]{all triples of edges $\{e,e^{\prime},e^{\prime\prime}\}\subseteq E(H)$, 
     such that $\{e,e^{\prime},e^{\prime\prime}\}$ build an $AC_{6}$ in $E(H)$} \label{algorithm-phi_1-2}

\vspace{0.1cm}

     \IF{$\ell_{e} \neq \overline{\ell_{e^{\prime}}}$, $\ell_{e^{\prime}} \neq \overline{\ell_{e^{\prime\prime}}}$, and $\ell_{e} \neq \overline{\ell_{e^{\prime\prime}}}$} \label{algorithm-phi_1-3}
          
          \vspace{0.1cm}
          
          \IF{$\phi_{1}$ does not contain $(\ell_{e} \vee \ell_{e^{\prime}} \vee \ell_{e^{\prime\prime}})$ 
              and $(\overline{\ell_{e}} \vee \overline{\ell_{e^{\prime}}} \vee \overline{\ell_{e^{\prime\prime}}})$} \label{algorithm-phi_1-4}
               \STATE{$\phi_{1} \leftarrow \phi_{1} \wedge (\ell_{e} \vee \ell_{e^{\prime}} \vee \ell_{e^{\prime\prime}}) 
                      \wedge (\overline{\ell_{e}} \vee \overline{\ell_{e^{\prime}}} \vee \overline{\ell_{e^{\prime\prime}}})$} \label{algorithm-phi_1-5}
          \ENDIF
     \ENDIF
\ENDFOR

\vspace{0.1cm}
\RETURN{$\phi_{1}$} \label{algorithm-phi_1-6}
\end{algorithmic}
\end{algorithm}

Consider now another $AC_{6}$ of $H$ on the edges $\{e_{1},e_{2},e_{3}\}$,
in which at least one literal among $\{\ell _{e_{1}},\ell _{e_{2}},\ell
_{e_{3}}\}$ is the negation of another literal, for example $\ell _{e_{1}}=%
\overline{\ell _{e_{2}}}$. Then, for \emph{any} proper $2$-coloring of the
vertices of $H^{\ast }$, the edges $e$ and $e^{\prime }$ of $H$ receive
different colors, and thus this $AC_{6}$ is not monochromatic. Thus the next
observation follows by Observation~\ref%
{monochromatic-coloring-assignment-obs}.

\begin{observation}
\label{phi-1-no-monochromatic-obs}The formula $\phi _{1}$ is satisfied by a
truth assignment $\tau $ if and only if the corresponding $2$-coloring $\chi
_{\tau }$ of the edges of $H$ does not contain any monochromatic~$AC_{6}$.
\end{observation}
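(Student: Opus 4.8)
The plan is to derive the equivalence directly by combining Observation~\ref{monochromatic-coloring-assignment-obs} with the construction rule of Algorithm~\ref{alg-phi_1-constr}, quantifying over all $AC_6$'s of $H$. The key point is that the genuine analytic content --- relating the monochromaticity of a \emph{single} $AC_6$ to the falsity of its associated pair of clauses --- has already been isolated in Observation~\ref{monochromatic-coloring-assignment-obs}, so what remains is a clean case analysis that also accounts for exactly those $AC_6$'s that do \emph{not} contribute any clauses to $\phi_1$.

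For the forward direction, I would assume that $\tau$ satisfies $\phi_1$ and fix an arbitrary $AC_6$ in $H$ on its three edges $\{e_1,e_2,e_3\}$; the goal is to show it is non-monochromatic in $\chi_\tau$. I would split into two cases according to the test in line~\ref{algorithm-phi_1-3} of Algorithm~\ref{alg-phi_1-constr}. If some two of the literals $\ell_{e_1},\ell_{e_2},\ell_{e_3}$ are one the negation of the other, then those two edges belong to a common component of $H^{\ast}$ in opposite color classes, so by the definition of $\chi_\tau$ (Observation~\ref{truth-assignment-coloring-obs}) they receive different colors and the $AC_6$ cannot be monochromatic --- this is precisely the remark preceding the statement. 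Otherwise no two literals are negations of each other, so by construction $\phi_1$ contains the clauses $\alpha=(\ell_{e_1}\vee\ell_{e_2}\vee\ell_{e_3})$ and $\alpha'=(\overline{\ell_{e_1}}\vee\overline{\ell_{e_2}}\vee\overline{\ell_{e_3}})$; since $\tau$ satisfies $\phi_1$, both evaluate to $1$, and Observation~\ref{monochromatic-coloring-assignment-obs} then forces the $AC_6$ to be non-monochromatic. As the $AC_6$ was arbitrary, $\chi_\tau$ contains no monochromatic $AC_6$.

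For the converse, I would assume $\chi_\tau$ has no monochromatic $AC_6$ and verify every clause of $\phi_1$. By the construction every such clause is one of a pair $\{\alpha,\alpha'\}$ arising from an $AC_6$ whose three literals are pairwise non-complementary. Applying the non-monochromaticity hypothesis to that particular $AC_6$ and invoking Observation~\ref{monochromatic-coloring-assignment-obs} in the contrapositive direction gives that neither $\alpha=0$ nor $\alpha'=0$ in $\tau$, i.e.~both clauses are satisfied; ranging over all clauses shows that $\tau$ satisfies $\phi_1$.

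I expect the only subtlety --- rather than a genuine obstacle --- to be the careful treatment of the $AC_6$'s excluded by the test in line~\ref{algorithm-phi_1-3}: these contribute no clauses, so their non-monochromaticity cannot be certified by $\phi_1$ and must instead be read off directly from the structure of $\chi_\tau$ via the preceding remark. Making sure both directions handle this excluded class correctly is the one place where a slip could occur.
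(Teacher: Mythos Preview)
Your proposal is correct and follows exactly the approach the paper intends: the paper states this observation immediately after the remark handling the excluded $AC_6$'s (those with a complementary pair of literals) and explicitly says it ``follows by Observation~\ref{monochromatic-coloring-assignment-obs}.'' Your two-direction case analysis is precisely the unpacking of that one-line justification.
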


\textbf{Description of the }$2$\textbf{-CNF formula }$\phi _{2}$\textbf{:}
Denote for simplicity $H=(U,V,E_{H})$, where $U=\{u_{1},u_{2},\ldots
,u_{n}\} $ and $V=\{v_{1},v_{2},\ldots ,v_{n}\}$. Furthermore denote $%
E_{0}=\{u_{i}v_{i}~|~1\leq i\leq n\}$. Let $E^{\prime }=E_{H}\setminus E_{0}$
and $H^{\prime }=H-E_{0}$, i.e.~$H^{\prime }$ is the split graph that we
obtain if we remove from $H$ all edges of $E_{0}$. Consider now a pair of
edges $e=u_{i}v_{t}$ and $e^{\prime }=u_{t}v_{j}$ of $E^{\prime }$, such
that $u_{i}v_{j}\notin E^{\prime }$. Note that $i$ and $j$ may be equal.
However, since $E^{\prime }\cap E_{0}=\emptyset $, it follows that $i\neq t$
and $t\neq j$. Moreover, since the edge $u_{t}v_{t}$ belongs to $E_{H}$ but
not to $E^{\prime }$, it follows that the edges $e$ and $e^{\prime }$ are in
conflict in $H^{\prime }$ but not in $H$ (for both cases where $i=j$ and $%
i\neq j$). That is, although $e$ and $e^{\prime }$ are two non-adjacent
vertices in the conflict graph $H^{\ast }$ of $H$, they are adjacent
vertices in the conflict graph of $H^{\prime }$. For both cases where $i=j$
and $i\neq j$, an example of such a pair of edges $\{e,e^{\prime }\}$ is
illustrated in Figure~\ref{phi-2-fig}. According to Algorithm~\ref%
{alg-phi_2-constr}, for every such pair~$\{e,e^{\prime }\}$ of edges in $H$,
the Boolean formula $\phi _{2}$ has the clause $(\ell _{e}\vee \ell
_{e^{\prime }})$. It is easy to check by the assignment of literals to edges
of $H$ that this clause $(\ell _{e}\vee \ell _{e^{\prime }})$ of $\phi _{2}$
is false in the truth assignment $\tau $ if and only if both $e$ and $%
e^{\prime }$ are colored \emph{red} in the $2$-edge coloring $\chi _{\tau }$
of $H$. 

\begin{algorithm}[t!]
\caption{Construction of the $2$-CNF Boolean formula $\phi_{2}$} \label{alg-phi_2-constr}
\begin{algorithmic}[1]
\REQUIRE{The bipartite graph ${\widetilde{G} = \widehat{C}(P)}$, the associated split graph~$H$~of~$\widetilde{G}$, 
its conflict graph~$H^{\ast}$, and a~proper $2$-coloring $\chi_{0}$ of the vertices of $H^{\ast}$}
\ENSURE{The $2$-CNF Boolean formula $\phi_{2}$}

\medskip

\STATE{Let $H = (U,V,E_{H})$, where $U=\{u_{1}, u_{2}, \ldots, u_{n}\}$ and $V=\{v_{1}, v_{2}, \ldots, v_{n}\}$} \label{algorithm-phi_2-1}
\STATE{$ E_{0} \leftarrow \{u_{i}v_{i}\ |\ 1\leq i\leq n\}$; \ \ $E^{\prime} \leftarrow E_{H} \setminus E_{0}$; \ \ $H^{\prime} \leftarrow H - E_{0}$} \label{algorithm-phi_2-2}

\vspace{0.1cm}

\STATE{$\phi_{2} \leftarrow \emptyset$} \label{algorithm-phi_2-3}

\vspace{0.1cm}

\FOR{every pair $\{i,j\}\subseteq \{1,2,\ldots,n\}$ with $u_{i}v_{j}\notin E^{\prime}$} \label{algorithm-phi_2-4}
     \FOR{$t=1,2,\ldots,n$} \label{algorithm-phi_2-5}
          \IF[the edges $u_{i}v_{t},u_{t}v_{j}$ are in conflict in $H^{\prime}$ but not in $H$]{$u_{i}v_{t},u_{t}v_{j}\in E^{\prime}$} \label{algorithm-phi_2-6}
               
               \vspace{0.1cm}
                      
               \STATE{$e \leftarrow u_{i}v_{t}$; \ $e^{\prime} \leftarrow u_{t}v_{j}$; \ $\phi_{2} \leftarrow \phi_{2} \wedge (\ell_{e} \vee \ell_{e^{\prime}})$} \label{algorithm-phi_2-7}
          \ENDIF
     \ENDFOR
\ENDFOR
\vspace{0.1cm}
\RETURN{$\phi_{2}$} \label{algorithm-phi_2-8}
\end{algorithmic}
\end{algorithm}

\begin{figure}[tbh]
\centering 
\subfigure[] {\label{phi-2-fig-1} 
\includegraphics[scale=0.65]{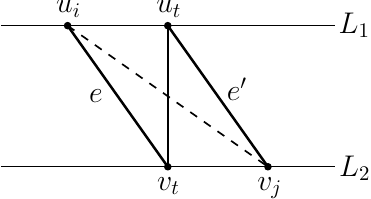}} \hspace{1cm} 
\subfigure[] {\label{phi-2-fig-2} 
\includegraphics[scale=0.65]{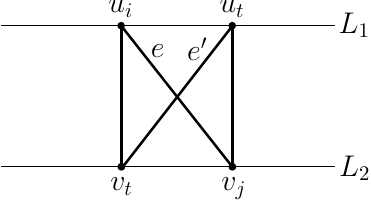}}
\caption{Two edges $e=u_{i}v_{t}$ and $e^{\prime }=u_{t}v_{j}$ of $H$, for
which the formula $\protect\phi _{2}$ has the clause $(\ell _{e}\vee \ell
_{e^{\prime }})$, in the case where (a) ${i\neq j}$ and (b)~$i=j$.}
\label{phi-2-fig}
\end{figure}

Now we provide the main result of this section in Theorem~\ref%
{linear-interval-cover-by-satisfiability-thm}, which relates the existence
of a linear-interval cover in $\widetilde{G}=\widehat{C}(P)$ with the
Boolean satisfiability of the formula $\phi _{1}\wedge \phi _{2}$. Before we
present Theorem~\ref{linear-interval-cover-by-satisfiability-thm}, we first
provide two auxiliary lemmas.

\begin{lemma}
\label{ei-isolated-vertices-lem}Let $G$ be a cocomparability graph and $P$
be a partial order of $\overline{G}$. Let~${\widetilde{G}=}${$\widehat{C}$}${%
(P)}$, $H$~be~the associated split graph of $\widetilde{G}$, and $H^{\ast }$
be the conflict graph of $H$. Denote~${\widetilde{G}=(U,V,}\widetilde{{E}}{)}
$ and ${E_{0}=\{u_{i}v_{i}\ |\ 1\leq i\leq n\}}$. Then, every $e\in E_{0}$
is an isolated vertex of $H^{\ast }$.
\end{lemma}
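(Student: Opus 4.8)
The plan is to reduce the statement to Observation~\ref{uncommitted-isolated-obs} by showing that every $e=u_iv_i\in E_0$ is \emph{uncommitted} in $H$, i.e.\ that it is in conflict with no other edge of $H$; the isolation of $e$ in $H^{\ast}$ then follows at once. First I would check that $u_iv_i$ is genuinely an edge of $H$ (and hence a vertex of $H^{\ast}$): since $P$ is irreflexive we never have $u_i<_Pu_i$, so by Definition~\ref{C(P)-def} the pair $u_iv_i$ is \emph{not} an edge of $C(P)$, whence $u_iv_i\in\widetilde{E}\subseteq E(H)$.

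The core of the argument is to assume for contradiction that $u_iv_i$ is in conflict with some edge $e'=ab$ of $H$, so that $\{u_i,v_i,a,b\}$ induces an $AC_{4}$ in $H$, and then to exploit the split structure of $H=H_{\widetilde{G}}$, in which $V$ is a clique and $U$ is an independent set. In any $AC_{4}$ whose two present edges are $u_iv_i$ and $ab$, the two absent edges each join one endpoint of $u_iv_i$ to one endpoint of $ab$; in particular exactly one absent edge is incident to $v_i$. Since $v_i$ lies in the clique $V$, all of its non-neighbours in $H$ lie in $U$, so the other endpoint of that absent edge is a vertex of $U$, forcing one endpoint of $e'$ to lie in $U$, say $a=u_k$. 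Because $U$ is independent, the remaining endpoint of $e'$ must lie in $V$, say $b=v_l$, and distinctness of the four vertices of the $AC_{4}$ gives $k\neq i$ and $l\neq i$. Thus $e'=u_kv_l$, and the $AC_{4}$ conditions become $u_kv_l\in E(H)$, $u_iv_l\notin E(H)$, and $u_kv_i\notin E(H)$.

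The final step is to translate these three incidences back into the order $P$ using $\widetilde{G}=\widehat{C}(P)$: a bipartite pair $u_av_b$ is an edge of $H$ exactly when $u_av_b\notin E(C(P))$, i.e.\ when $\neg(u_a<_Pu_b)$. Hence the three conditions read $\neg(u_k<_Pu_l)$, $u_i<_Pu_l$, and $u_k<_Pu_i$. But $u_k<_Pu_i$ together with $u_i<_Pu_l$ yields $u_k<_Pu_l$ by transitivity of $P$, contradicting $\neg(u_k<_Pu_l)$. Therefore no such $e'$ exists, $u_iv_i$ is uncommitted, and the lemma follows from Observation~\ref{uncommitted-isolated-obs}.

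I expect the only delicate point to be the short case analysis that locates $e'$: identifying the two absent edges of the $AC_{4}$, using the clique membership of $v_i$ to push one endpoint of $e'$ into $U$, and using the independence of $U$ to push the other into $V$. Once $e'$ is pinned down as $u_kv_l$ with one endpoint on each side, the transitivity contradiction is immediate. (One could alternatively dispose of the subcase in which $e'$ is a clique edge directly via Observation~\ref{split-graph-clique-no-conflict-obs}, but the structural argument above already excludes it, since both endpoints of a clique edge lie in $V$ whereas one absent edge of the $AC_{4}$ must reach a non-neighbour of $v_i$ in $U$.)
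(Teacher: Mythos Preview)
Your proposal is correct and follows essentially the same approach as the paper: both argue by contradiction that an edge $u_iv_i$ in conflict with some $e'$ forces $e'=u_kv_l$ to be a bipartite edge, and then the two non-edges of the $AC_4$ translate via $C(P)$ into $u_k<_Pu_i$ and $u_i<_Pu_l$, whose transitive consequence $u_k<_Pu_l$ contradicts $u_kv_l\in\widetilde{E}$. The only cosmetic difference is that the paper first invokes Observations~\ref{uncommitted-isolated-obs} and~\ref{split-graph-clique-no-conflict-obs} to dismiss clique edges and then appeals implicitly to Observation~\ref{bipartite-split-AC4-obs}, whereas you locate $e'$ directly from the clique/independent split structure of $H$.
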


\begin{proof}
Note by Definition~\ref{bipartite-split-def} that $H=(U\cup V,E_{H})$, where 
$E_{H}=\widetilde{{E}}\cup (V\times V)$. Furthermore all edges of $V\times V$
in $E_{H}$ correspond to isolated vertices in the conflict graph $H^{\ast }$
of $H$ by Observations~\ref{uncommitted-isolated-obs} and~\ref%
{split-graph-clique-no-conflict-obs}. Therefore all non-isolated vertices in 
$H^{\ast }$ correspond to edges of $\widetilde{G}$ (i.e.~they do not belong
to $V\times V$). Consider now an edge $e_{i}=u_{i}v_{i}\in E_{0}\subseteq 
\widetilde{{E}}$, where $1\leq i\leq n$. Suppose that $e_{i}$ is not an
isolated vertex in the conflict graph $H^{\ast }$. Then the edge $e_{i}$ of~$%
\widetilde{G}$ builds with another edge $e=u_{j}v_{k}$ an induced $AC_{4}$
in $H$, i.e.~$e_{i}=u_{i}v_{i}$ and $e=u_{j}v_{k}$ induce a $2K_{2}$ in~$%
\widetilde{G}$. Therefore $u_{j}v_{i},u_{i}v_{k}\notin \widetilde{{E}}$,
i.e.~$u_{j}v_{i},u_{i}v_{k}\in E(C(P))$. Thus $u_{j}<_{P}u_{i}$ and~$%
u_{i}<_{P}u_{k}$ by Definition~\ref{C(P)-def}. Therefore, since $P$ is
transitive (as a partial order), it follows that $u_{j}<_{P}u_{k}$, and thus 
$u_{j}v_{k}\in E(C(P))$, i.e.~$u_{j}v_{k}\notin \widetilde{{E}}$. This is a
contradiction, since we assumed that $e=u_{j}v_{k}$ is an edge of ${%
\widetilde{G}}$, i.e.~$u_{j}v_{k}\in \widetilde{{E}}$. Therefore, $%
e_{i}=u_{i}v_{i}$ is an isolated vertex of $H^{\ast}$.\qed
\end{proof}

\begin{lemma}
\label{AC6-if-monochromatic-all-not-isolated-lem}Let $H$ be a split graph
and $H^{\ast }$ be the conflict graph of $H$, where $H^{\ast }$ is bipartite
with color classes $E_{1}$ and $E_{2}$. Let the vertices $v_{1},\ldots
,v_{6} $ of $H$ build an $AC_{6}$ on the edges~of~$E_{i}$, where $i\in
\{1,2\}$. Then the edges $v_{3}v_{6},v_{4}v_{1},v_{5}v_{2}$ exist in $H$ and 
$v_{4}v_{5}||v_{3}v_{6}$, $v_{2}v_{3}||v_{4}v_{1}$, and $%
v_{6}v_{1}||v_{5}v_{2}$.
\end{lemma}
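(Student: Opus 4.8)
The plan is to reduce this statement to the already-proved Lemma~\ref{AC6-if-not-in-conflict-all-not-isolated-lem}, whose conclusion is word-for-word the same. To invoke it I only need to verify that its two hypotheses hold in the present setting: that the $AC_{6}$ is in fact an alternating path of length~$6$ (an $AP_{6}$), and that the three edges $\{v_{2}v_{3},v_{4}v_{5},v_{6}v_{1}\}$ of this $AP_{6}$ are pairwise not in conflict.

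For the first hypothesis, I would recall from the discussion accompanying Figure~\ref{AC6-both-fig} that every $AC_{6}$ is one of three types: an $AP_{5}$, an ordinary $AP_{6}$, or a double~$AP_{6}$. Since $H$ is a split graph, Lemma~\ref{split-no-AP5-double-AP6-lem} excludes both the $AP_{5}$ and the double~$AP_{6}$, forcing the $AC_{6}$ on $v_{1},\ldots,v_{6}$ to be an ordinary $AP_{6}$.

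For the second hypothesis, I would unwind the phrase ``$AC_{6}$ on the edges of~$E_{i}$'': by Definition~\ref{AC-2k-def} this means that the three edges $v_{2}v_{3},v_{4}v_{5},v_{6}v_{1}$ all belong to the color class $E_{i}$. By Definition~\ref{conflict-graph-def} the vertices of $H^{\ast}$ are precisely the edges of $H$, with adjacency in $H^{\ast}$ meaning conflict in $H$. Since $E_{i}$ is one of the two color classes of the bipartite graph $H^{\ast}$, it is an independent set there, and hence no two of $v_{2}v_{3},v_{4}v_{5},v_{6}v_{1}$ are in conflict in $H$.

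With both hypotheses verified, applying Lemma~\ref{AC6-if-not-in-conflict-all-not-isolated-lem} yields the full conclusion at once: the edges $v_{3}v_{6},v_{4}v_{1},v_{5}v_{2}$ exist in $H$ and $v_{4}v_{5}||v_{3}v_{6}$, $v_{2}v_{3}||v_{4}v_{1}$, and $v_{6}v_{1}||v_{5}v_{2}$. I do not expect a genuine obstacle here: the essential content is merely translating ``lies in a common color class of $H^{\ast}$'' into ``pairwise non-conflicting,'' which is immediate from the definition of the conflict graph, together with the use of the split-graph property to rule out the $AP_{5}$ and double~$AP_{6}$ cases. The only place demanding mild care is keeping the vertex indexing consistent, so that the three edges singled out in the earlier lemma ($v_{2}v_{3},v_{4}v_{5},v_{6}v_{1}$) coincide exactly with the edges of $E_{i}$ in the present $AC_{6}$.
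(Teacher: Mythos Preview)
Your proposal is correct and follows essentially the same approach as the paper: both arguments use Lemma~\ref{split-no-AP5-double-AP6-lem} to conclude the $AC_{6}$ is an $AP_{6}$, observe that edges in a common color class of the bipartite conflict graph $H^{\ast}$ are pairwise non-conflicting, and then invoke Lemma~\ref{AC6-if-not-in-conflict-all-not-isolated-lem} to obtain the conclusion verbatim.
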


\begin{proof}
Since $H$ is a split graph, Lemma~\ref{split-no-AP5-double-AP6-lem} implies
that $H$ does not contain any $AP_{5}$ or any double~$AP_{6}$. Therefore,
the $AC_{6}$ of $H$ is an $AP_{6}$, i.e.~an alternating path of length $6$,
cf.~Figure~\ref{AC6-2-fig}. Since $E_{1}$ and $E_{2}$ are the two color
classes of $H^{\ast }$, any two vertices $e$ and $e^{\prime }$ of $H^{\ast }$
in the set $E_{i}$, where $i\in \{1,2\}$, are not adjacent in $H^{\ast }$.
Equivalently, any two edges $e$ and $e^{\prime }$ of $H$ in the set $E_{i}$
are not in conflict, where $i\in \{1,2\}$. Therefore, since by assumption,
all edges $\{v_{2}v_{3},v_{4}v_{5},v_{6}v_{1}\}$ of this $AC_{6}$ belong to
the same color class $E_{i}$ for some $i\in \{1,2\}$, it follows that no
pair of these edges is in conflict in $H$. Thus Lemma~\ref%
{AC6-if-not-in-conflict-all-not-isolated-lem} implies that the edges $%
v_{3}v_{6},v_{4}v_{1},v_{5}v_{2}$ exist in $H$ and that $%
v_{4}v_{5}||v_{3}v_{6}$, $v_{2}v_{3}||v_{4}v_{1}$, and $%
v_{6}v_{1}||v_{5}v_{2}$.\qed
\end{proof}

We are now ready to provide Theorem~\ref%
{linear-interval-cover-by-satisfiability-thm}.

\begin{theorem}
\label{linear-interval-cover-by-satisfiability-thm}%
${\widetilde{G} = \widehat{C}(P)}$ is linear-interval colorable if and only
if ${\phi _{1}\wedge \phi _{2}}$ is satisfiable. Given a satisfying
assignment $\tau $ of ${\phi _{1}\wedge \phi _{2}}$, Algorithm~\ref%
{linear-interval-cover-from-assignment-alg} computes a linear-interval cover
of~${\widetilde{G}}$~in~$O(n^{2})$~time.
\end{theorem}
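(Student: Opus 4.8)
The plan is to read a linear-interval cover of $\widetilde{G}=\widehat{C}(P)$ as a $2$-edge-colouring of its associated split graph $H$ in which the \emph{red} colour plays the role of the first chain graph $G_{1}$ (which must avoid $E_{0}$) and the \emph{blue} colour plays the role of the second chain graph $G_{2}$ (which must contain $E_{0}$). By Observation~\ref{truth-assignment-coloring-obs} such colourings correspond bijectively to truth assignments $\tau$, so I would prove the two implications of the equivalence separately and then read off the running time.

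For \emph{necessity}, suppose $\{E_{1},E_{2}\}$ is a linear-interval cover with chain graphs $G_{1},G_{2}$. First I would pass to the associated split graphs $H_{G_{1}},H_{G_{2}}$: since $G_{i}$ is a chain graph, Observation~\ref{bipartite-split-AC4-obs} shows $H_{G_{i}}$ has no induced $AC_{4}$, i.e.\ $H_{G_{i}}$ is a threshold graph, and $H=H_{G_{1}}\cup H_{G_{2}}$. I then colour every edge of $H$ \emph{red} if it lies in $E(H_{G_{1}})$ and \emph{blue} otherwise, so the blue edges lie in the threshold graph $H_{G_{2}}$. Because each colour class is contained in a single threshold graph, Corollary~\ref{chain-number-2-char-cor} guarantees that no colour class contains any $AC_{2k}$; in particular there is no monochromatic $AC_{4}$ (so the colouring is a proper colouring of $H^{\ast}$, hence equals $\chi_{\tau}$ for a unique $\tau$) and no monochromatic $AC_{6}$ (so $\tau$ satisfies $\phi_{1}$ by Observation~\ref{phi-1-no-monochromatic-obs}). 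Since $E_{0}\cap E_{1}=\emptyset$, every edge of $E_{0}$ is blue; and for any clause $(\ell_{e}\vee\ell_{e'})$ of $\phi_{2}$, arising from $e=u_{i}v_{t}$, $e'=u_{t}v_{j}$ with $u_{i}v_{j}\notin E'$, colouring both $e,e'$ red would place both in $E_{1}=E(G_{1})$ while $u_{t}v_{t}\in E_{0}$ and $u_{i}v_{j}$ lie outside $E_{1}$, creating an induced $2K_{2}$ in the chain graph $G_{1}$ --- impossible. Hence $\tau$ satisfies $\phi_{2}$ as well.

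For \emph{sufficiency}, I start from a satisfying $\tau$ and its colouring $\chi_{\tau}$, and recolour every edge of $E_{0}$ \emph{blue}; this is harmless because each such edge is an isolated vertex of $H^{\ast}$ by Lemma~\ref{ei-isolated-vertices-lem}, so properness is preserved, and by Lemma~\ref{AC6-if-monochromatic-all-not-isolated-lem} an uncommitted edge cannot sit in a monochromatic $AC_{6}$, so no monochromatic $AC_{6}$ is created. Now the blue class has no monochromatic $AC_{4}$ or $AC_{6}$ in $H$, so Theorem~\ref{AC6-suffices-thm} yields no blue $AC_{2k}$ at all, and Lemma~\ref{threshold-completion-lem} produces a threshold completion of the blue edges inside $H$; its bipartite part is a chain graph $G_{2}$ with $E_{0}\subseteq E_{2}$. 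For $G_{1}$ I would instead complete the red edges inside $H'=H-E_{0}$, so that the resulting chain graph automatically satisfies $E_{0}\cap E_{1}=\emptyset$. Together the two bipartite parts cover all red and all blue bipartite edges, i.e.\ all of $\widetilde{E}$, giving a linear-interval cover by Definition~\ref{linear-interval-coverable-def}.

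The hard part will be justifying this last completion, namely that the red edges admit a threshold completion in $H'$, equivalently (Lemma~\ref{threshold-completion-lem}) that there is \emph{no} red $AC_{2k}$ in $H'$ for any $k\ge 2$. Deleting $E_{0}$ turns the diagonal edges into non-edges and thereby creates exactly the new red $AC_{4}$'s of the form $u_{i}v_{t},u_{t}v_{j}$ (with $u_{i}v_{j}\notin E'$ and the blocking diagonal $u_{t}v_{t}$ now absent); these are precisely the pairs for which $\phi_{2}$ forbids both literals from being red, so $\phi_{2}$ together with properness rules out all red $AC_{4}$ in $H'$. Extending this from $k=2$ to all $k$ is the crux: since $H'$ is again a split graph, Lemma~\ref{split-no-AP5-double-AP6-lem} forces a minimal red alternating cycle to be an $AP_{6}$, and I would apply Lemma~\ref{AC6-if-not-in-conflict-all-not-isolated-lem} in $H'$ to reduce a red $AC_{6}$ in $H'$ either to a red $AC_{6}$ already present in $H$ (excluded by $\phi_{1}$) or, through its diagonals and $E_{0}$-isolation (Lemma~\ref{ei-isolated-vertices-lem}), to a red $AC_{4}$ in $H'$ (excluded by $\phi_{2}$). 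I expect the main difficulty to be that $\chi_{\tau}$ is proper on $H^{\ast}$ but \emph{not} on the conflict graph of $H'$, because $\phi_{2}$ deliberately permits the new conflicts to be blue--blue; hence Theorem~\ref{AC6-suffices-thm} cannot be applied verbatim to $H'$, and the argument must exploit the asymmetry between the two colours (mirroring $E_{0}\subseteq E_{2}\setminus E_{1}$): a blue--blue new conflict is harmless because the diagonal $u_{t}v_{t}\in E_{0}\subseteq E_{2}$ fills it in, whereas a red--red one is fatal. Finally, reading the colours off $\tau$ costs $O(|\widetilde{E}|)$, each threshold completion is linear by Lemma~\ref{threshold-completion-construction-lem}, and extracting the orders as in Algorithm~\ref{alg-PI-repr-constr} is $O(n^{2})$, which gives the claimed $O(n^{2})$ bound.
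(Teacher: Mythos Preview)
Your forward direction ($\Rightarrow$) is correct and essentially identical to the paper's. The backward direction has the right overall shape, but the heart of the argument --- showing that the red edges carry no $AC_{2k}$ in $H'=H-E_{0}$ for $k\geq 3$ --- contains two genuine gaps.

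First, the sentence ``Lemma~\ref{split-no-AP5-double-AP6-lem} forces a minimal red alternating cycle to be an $AP_{6}$'' is not justified: that lemma only excludes $AP_{5}$ and double $AP_{6}$; it says nothing about reducing an $AC_{2k}$ with $k\geq 4$ to an $AC_{6}$. The reduction you want is Theorem~\ref{AC6-suffices-thm}, but as you yourself note, your colouring is not proper on $(H')^{\ast}$, so that theorem does not apply to $H'$. Second, even for $k=3$ your proposed dichotomy fails. Take a red $AP_{6}$ in $H'$ on $v_{1},\dots,v_{6}$ in which $v_{1}v_{2}\in E_{0}$ but $v_{3}v_{4},v_{5}v_{6}\notin E_{H}$. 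Then Lemma~\ref{AC6-if-not-in-conflict-all-not-isolated-lem} (in $H'$) and the $\phi_{2}$-clauses force all three diagonals $v_{3}v_{6},v_{4}v_{1},v_{5}v_{2}$ to be blue, and one checks that no two of the three red edges are in conflict in $H'$. So there is neither a red $AC_{6}$ in $H$ nor a red $AC_{4}$ in $H'$ to point to; your reduction stalls.

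The paper closes both gaps with one extra idea that you are missing: it recolours \emph{all} uncommitted edges blue (not just $E_{0}$), so that every remaining red edge is committed. It then treats a minimal red $AC_{2k}$ in $H'$ ($k\geq 3$) directly: some non-edge $w_{1}w_{2}$ lies in $E_{0}$, the $\phi_{2}$-clause for $(w_{2}w_{3},w_{2k}w_{1})$ forces the shortcut $w_{3}w_{2k}$ to be an edge of $H'$, minimality makes it blue, and --- crucially using committedness --- each of the red edges $w_{2}w_{3}$ and $w_{2k}w_{1}$ has a blue conflict-partner in $H$. Together with $w_{1}w_{2}\in E_{0}$ (blue) and $w_{3}w_{2k}$ (blue), these four blue edges form an $AC_{8}$ in $H$, contradicting the already-established absence of blue $AC_{2k}$'s in $H$. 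Your weaker recolouring (only $E_{0}\to$ blue) does not guarantee that $w_{2}w_{3},w_{2k}w_{1}$ are committed, so this $AC_{8}$ cannot be built. Adopting the stronger recolouring and this $AC_{8}$ trick is exactly what is needed to finish the proof.
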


\begin{proof}
Denote ${\widetilde{G}=(U,V,}\widetilde{{E}}{)}$, where ${%
U=\{u_{1},u_{2},\ldots ,u_{n}\}}$ and ${V=\{v_{1},v_{2},\ldots ,v_{n}\}}$.
Furthermore denote $H=(U,V,E_{H})$, where $E_{H}=\widetilde{{E}}\cup
(V\times V)$, cf.~Definition~\ref{bipartite-split-def}. Let $%
E_{0}=\{u_{i}v_{i}\ |\ 1\leq i\leq n\}$. Since ${\widetilde{G}=}${$\widehat{C%
}$}${(P)}$, note by Definition~\ref{C(P)-def} that $E_{0}\subseteq 
\widetilde{{E}}\subseteq E_{H}$. Let $\chi _{0}$ be the $2$-coloring of the
vertices of $H^{\ast }$ (i.e.~the edges of $H$) that is given as input to
Algorithms~\ref{alg-phi_1-constr} and~\ref{alg-phi_2-constr}. Moreover, let $%
C_{1},C_{2},\ldots ,C_{k}$ be the connected components of $H^{\ast }$.

\medskip

($\Rightarrow $) Suppose that ${\widetilde{G}}$ is linear-interval
colorable. That is, there exist by Definition~\ref%
{linear-interval-coverable-def} two chain graphs $G_{1}=(U,V,E_{1})$ and $%
G_{2}=(U,V,E_{2})$, such that $\widetilde{G}=G_{1}\cup G_{2}$ and $%
E_{0}\subseteq E_{2}\setminus E_{1}$. Let $H_{1}=(U,V,E_{H_{1}})$ and $%
H_{2}=(U,V,E_{H_{2}})$ be the associated split graphs of $G_{1}$ and $G_{2}$%
, respectively. Note that $H=H_{1}\cup H_{2}$ and $E_{0}\subseteq
E_{H_{2}}\setminus E_{H_{1}}$. Since $G_{1}$ and $G_{2}$ are chain graphs,
i.e.~$ch(G_{1})=ch(G_{2})=1$, Lemma~\ref{chain-threshold-number-lem} implies
that $t(H_{1})=t(H_{2})=1$, i.e.~$H_{1}$ and $H_{2}$ are threshold graphs.
Therefore, neither $H_{1}$ nor $H_{2}$ includes an $AC_{4}$.

Recall that the formulas $\phi _{1}$ and $\phi _{2}$ have one Boolean
variable $x_{i}$ for every connected component $C_{i}$ of $H^{\ast }$, $%
i=1,2,\ldots ,k$. We construct a $2$-coloring $\chi _{H}$ of the edges of $H$
as follows. For every edge $e$ of $H$ (i.e.~a vertex of $H^{\ast }$), if $%
e\in E_{H_{1}}$ then we color $e$ red in $\chi _{H}$; otherwise, if $e\in
E_{H_{2}}\setminus E_{H_{1}}$ then we color $e$ blue in $\chi _{H}$. Recall
that $E_{0}\subseteq E_{H_{2}}\setminus E_{H_{1}}$, and thus all edges of $%
E_{0}$ are colored blue in $\chi _{H}$. Since both $H_{1}$ and $H_{2}$ do
not include any $AC_{4}$, it follows by the definition of $\chi _{H}$ that
there exists no monochromatic $AC_{4}$ in $\chi _{H}$. Therefore, every two
edges $e$ and $e^{\prime }$ of $H$, which correspond to adjacent vertices in 
$H^{\ast }$, have different colors in $\chi _{H}$, and thus $\chi _{H}$
constitutes a proper $2$-coloring of the vertices of $H^{\ast }$. Therefore
the coloring $\chi _{H}$ of the edges of $H$ (i.e.~vertices of $H^{\ast }$)
defines a truth assignment $\tau $ of the variables $x_{1},x_{2},\ldots
,x_{k}$ as follows (cf.~Observation~\ref{truth-assignment-coloring-obs}).
For every connected component $C_{i}$ of $H^{\ast }$, where $1\leq i\leq k$,
we define $x_{i}=1$ (resp.~$x_{i}=0$) in $\tau $ if all vertices of $C_{i}$
have in $\chi _{H}$ different (resp.~the same) color as in $\chi _{0}$. We
will now prove that $\tau $ satisfies both formulas $\phi _{1}$ and $\phi
_{2}$.

\vspace{0.1cm}

\textbf{Satisfaction of the Boolean formula }$\phi _{1}$\textbf{.} Let $%
\alpha $ be a clause of $\phi _{1}$. Recall that $\alpha $ corresponds to
some triple $\{e,e^{\prime },e^{\prime \prime }\}$ of edges of $H$ that
builds an $AC_{6}$ in $H$ (cf.~lines~\ref{algorithm-phi_1-2}-\ref%
{algorithm-phi_1-5} of Algorithm~\ref{alg-phi_1-constr}). In particular,
either $\alpha =(\ell _{e}\vee \ell _{e^{\prime }}\vee \ell _{e^{\prime
\prime }})$ or $\alpha =(\overline{\ell _{e}}\vee \overline{\ell _{e^{\prime
}}}\vee \overline{\ell _{e^{\prime \prime }}})$, where $\ell _{e},\ell
_{e^{\prime }},\ell _{e^{\prime \prime }}$ are the literals that have been
assigned to the edges $e,e^{\prime },e^{\prime \prime }$, respectively.
Then, it follows from the description of the formula $\phi _{1}$ (cf.~also
Observation~\ref{monochromatic-coloring-assignment-obs}) that the clause $%
(\ell _{e}\vee \ell _{e^{\prime }}\vee \ell _{e^{\prime \prime }})$
(resp.~the clause $(\overline{\ell _{e}}\vee \overline{\ell _{e^{\prime }}}%
\vee \overline{\ell _{e^{\prime \prime }}})$) is not satisfied in the truth
assignment $\tau $ if and only if the edges $e,e^{\prime },e^{\prime \prime
} $ of $H$ are all red (resp.~all blue) in~$\chi _{H}$.

Let $\alpha =(\ell _{e}\vee \ell _{e^{\prime }}\vee \ell _{e^{\prime \prime
}})$ (resp.~$\alpha =(\overline{\ell _{e}}\vee \overline{\ell _{e^{\prime }}}%
\vee \overline{\ell _{e^{\prime \prime }}})$). Suppose that $\alpha $ is not
satisfied by $\tau $, and thus the edges $e,e^{\prime },e^{\prime \prime }$
of $H$ are all red (resp.~blue) in $\chi _{H}$. Therefore all edges $%
e,e^{\prime },e^{\prime \prime }$ belong to $E_{H_{1}}$ (resp.~to $%
E_{H_{2}}\setminus E_{H_{1}}$, and thus to $E_{H_{2}}$) by the definition of 
$\chi _{H}$. Thus $H$ has an $AC_{6}$ on the edges $e,e^{\prime },e^{\prime
\prime }$, which belong to $H_{1}$ (resp.~to $H_{2}$). Therefore $H_{1}$
(resp.~$H_{2}$) does not have a threshold completion in $H$ by Lemma~\ref%
{threshold-completion-lem}. This is a contradiction, since $H_{1}$ (resp.~$%
H_{2}$) is a threshold graph. Therefore the clause $\alpha =(\ell _{e}\vee
\ell _{e^{\prime }}\vee \ell _{e^{\prime \prime }})$ (resp.~$\alpha =(%
\overline{\ell _{e}}\vee \overline{\ell _{e^{\prime }}}\vee \overline{\ell
_{e^{\prime \prime }}})$) of $\phi _{1}$ is satisfied by the truth
assignment $\tau $, and thus $\tau $ satisfies $\phi _{1}$.

\vspace{0.1cm}

\textbf{Satisfaction of the Boolean formula }$\phi _{2}$\textbf{.} Let $%
\alpha =(\ell _{e}\vee \ell _{e^{\prime }})$ be a clause of $\phi _{2}$.
Recall that $\alpha $ corresponds to some pair of edges $e=u_{i}v_{t}$ and $%
e^{\prime }=u_{t}v_{j}$ of $E_{H}\setminus E_{0}$, such that $%
u_{i}v_{j}\notin E_{H}\setminus E_{0}$ (cf.~lines~\ref{algorithm-phi_2-4}-%
\ref{algorithm-phi_2-7} of Algorithm~\ref{alg-phi_2-constr}). Therefore,
since $u_{t}v_{t}\in E_{0}$, it follows that the edges $\{e,e^{\prime }\}$
build an $AC_{4}$ in $H-E_{0}$ but not in $H$. Suppose that the clause $%
\alpha =(\ell _{e}\vee \ell _{e^{\prime }})$ of $\phi _{2}$ is not satisfied
by the truth assignment $\tau $, i.e.~$\ell _{e}=\ell _{e^{\prime }}=0$ in $%
\tau $. Then, it follows from the description of the formula $\phi _{2}$
that both $e$ and $e^{\prime }$ are colored red in the $2$-edge coloring $%
\chi _{H}$ of $H$. Therefore both edges $e$ and $e^{\prime }$ belong to $%
H_{1}$ by the definition of $\chi _{H}$. However, as we noticed above, the
edges $\{e,e^{\prime }\}$ build an $AC_{4}$ in $H-E_{0}$, and thus they also
build an $AC_{4}$ in $H_{1}\subseteq H-E_{0}$. This is a contradiction by
Corollary~\ref{chain-number-2-char-cor}, since $H_{1}$ is a threshold graph.
Therefore the clause $\alpha =(\ell _{e}\vee \ell _{e^{\prime }})$ of $\phi
_{2}$ is satisfied by the truth assignment $\tau $, and thus $\tau $
satisfies $\phi _{2}$.

\medskip

($\Leftarrow $) Suppose that $\phi _{1}\wedge \phi _{2}$ is satisfiable, and
let $\tau $ be a satisfying truth assignment of $\phi _{1}\wedge \phi _{2}$.
Recall that the formulas $\phi _{1}$ and $\phi _{2}$ have one Boolean
variable $x_{i}$ for every connected component $C_{i}$ of $H^{\ast }$, $%
i=1,2,\ldots ,k$. First, given the truth assignment $\tau $, we construct
the $2$-coloring $\chi _{\tau }$ of the vertices of $H^{\ast }$ according to
Observation~\ref{truth-assignment-coloring-obs}. This $2$-coloring of the
vertices of $H^{\ast }$ defines also a corresponding $2$-coloring of the
edges of $H$. Since $\phi _{1}$ is satisfied by $\tau $, it follows by
Observation~\ref{phi-1-no-monochromatic-obs} that, in the coloring $\chi
_{\tau }$ of its edges, $H$ does not contain any monochromatic~$AC_{6}$.
Therefore Theorem~\ref{AC6-suffices-thm} implies that $H$ does not contain
any monochromatic~$AC_{2k}$ in $\chi_{\tau}$, where $k\geq 3$.

\medskip

\textbf{The vertex coloring }$\chi _{\tau }^{\prime }$ \textbf{of} $H^{\ast
} $\textbf{.} Now we modify the coloring $\chi _{\tau }$ to the coloring $%
\chi _{\tau }^{\prime }$, as follows. For every trivial connected component $%
C_{i} $ of $H^{\ast }$ (i.e.~when $C_{i}$ has exactly one vertex), we color
the vertex of $C_{i}$ blue in $\chi _{\tau }^{\prime }$, regardless of the
color of $C_{i}$ in $\chi _{\tau }$. On the other hand, for every
non-trivial connected component $C_{i}$ of $H^{\ast }$ (i.e.~when $C_{i}$
has at least two vertices), the vertices of $C_{i}$ have the same color in
both $\chi _{\tau }$ and $\chi _{\tau }^{\prime }$. This new $2$-coloring of
the vertices of $H^{\ast }$ defines also a corresponding $2$-coloring of the
edges of $H$. Note in particular by Lemma~\ref{ei-isolated-vertices-lem}
that all edges of $E_{0}$ are colored blue in $\chi _{\tau }^{\prime }$.
Denote by $E_{H_{1}}$ and $E_{H_{2}}$ the sets of red and blue edges of $H$
in $\chi _{\tau }^{\prime }$, respectively. Note that $E_{0}\subseteq
E_{H_{2}}$. Moreover note that $H$ does not have any $AC_{4}$ on the
vertices of $E_{H_{1}}$, or on the vertices of $E_{H_{2}}$, since $\chi
_{\tau }^{\prime }$ is a proper $2$-coloring of the vertices of $H^{\ast }$.
Define the subgraphs $H_{1}=(U,V,E_{H_{1}})$ and $H_{2}=(U,V,E_{H_{2}})$ of $%
H$. Note that $H=H_{1}\cup H_{2}$.

\medskip

$H_{2}$ \textbf{has a threshold completion in }$H$\textbf{.} Suppose now
that $H$ has an $AC_{2k}$ on the edges of $E_{H_{2}}$, for some $k\geq 3$.
Then Theorem~\ref{AC6-suffices-thm} implies that $H$ has also an $AC_{6}$ on
the edges of $E_{H_{2}}$, i.e.~$H$ has an $AC_{6}$, in which all three edges
are blue in $\chi _{\tau }^{\prime }$. Since $H$ does not have any
monochromatic $AC_{6}$ in $\chi _{\tau }$, it follows that for at least one
of the edges $e$ of the blue $AC_{6}$ of $H$ in $\chi _{\tau }^{\prime }$,
the color of $e$ is different in $\chi _{\tau }$ and in $\chi _{\tau
}^{\prime }$. Therefore, it follows by the construction of $\chi _{\tau
}^{\prime }$ from $\chi _{\tau }$ that the vertex of $H^{\ast }$ that
corresponds to $e$ is an isolated vertex in $H^{\ast }$. That is, the edge $%
e $ is uncommitted in $H$. This is a contradiction by Lemma~\ref%
{AC6-if-monochromatic-all-not-isolated-lem}, since $e$ has been assumed to
be an edge of a monochromatic $AC_{6}$ of $H$ in $\chi _{\tau }^{\prime }$.
Therefore $H$ does not have any $AC_{2k}$ on the edges of $E_{H_{2}}$, where 
$k\geq 3$. Thus, since $H$ does not have any $AC_{4}$ on the vertices of $%
E_{H_{2}}$, it follows that $H$ does not have any $AC_{2k}$ on the edges of $%
E_{H_{2}}$, where $k\geq 2$. Therefore $H_{2}$ has a threshold completion in 
$H$ by Lemma~\ref{threshold-completion-lem}.

\medskip

$H_{1}$ \textbf{has a threshold completion in }$H-E_{0}$\textbf{.} Denote
now $H^{\prime }=H-E_{0}$. We will prove that $H_{1}$ has a threshold
completion in $H^{\prime }$. To this end, it suffices to prove by Lemma~\ref%
{threshold-completion-lem} that $H^{\prime }$ does not have any $AC_{2k}$ on
the edges of $E_{H_{1}}$, where $k\geq 2$.

For the sake of contradiction, suppose that $H^{\prime }$ includes an $%
AC_{4} $ on the edges of $E_{H_{1}}$. That is, there exist two edges $%
e,e^{\prime }\in E_{H_{1}}$ that are in conflict in $H^{\prime }$. Note by
the definition of $E_{H_{1}}$ that the edges $e$ and $e^{\prime }$ are
colored red in $\chi _{\tau }^{\prime }$, and thus they are also colored red
in $\chi _{\tau }$. If the edges $\{e,e^{\prime }\}$ also build an $AC_{4}$
in $H $ (i.e.~before the removal of $E_{0}$), then the vertices $e$ and $%
e^{\prime }$ of $H^{\ast }$ are adjacent in $H^{\ast }$, and thus the edges $%
e$ and $e^{\prime }$ of $H$ have different colors in $\chi _{\tau }$, which
is a contradiction. Thus the edges $\{e,e^{\prime }\}$ are in conflict in $%
H^{\prime }$ but not in $H$. Recall now that for every such a pair $%
\{e,e^{\prime }\}$ of edges of $H^{\prime }$ there exists a clause $\alpha
=(\ell _{e}\vee \ell _{e^{\prime }})$ in the formula $\phi _{2}$ (cf.~lines~%
\ref{algorithm-phi_2-4}-\ref{algorithm-phi_2-7} of Algorithm~\ref%
{alg-phi_2-constr}). It follows from the description of the formula $\phi
_{2}$ that the clause $\alpha $ is not satisfied by the truth assignment $%
\tau $ if and only if both edges $e,e^{\prime }$ in $H$ are red in~$\chi
_{\tau }$. However, since $\tau $ is a satisfying assignment of $\phi _{2}$,
every clause of $\phi _{2}$ is satisfied by $\tau $. Therefore at least one
of the edges $e$ and $e^{\prime }$ is colored blue in $\chi _{\tau }$, which
is a contradiction. Therefore $H^{\prime }$ does not include any $AC_{4}$ on
the edges of $E_{H_{1}}$.

Suppose now that $H^{\prime }$ includes an $AC_{2k}$ on the edges of $%
E_{H_{1}}$, where $k\geq 3$. Consider the smallest such $AC_{2k}$ on the
edges of $E_{H_{1}}$, i.e.~an $AC_{2k}$ with the smallest $k\geq 3$. Let $%
w_{1},w_{2},\ldots ,w_{2k}$ be the vertices of $H^{\prime }$ that build this 
$AC_{2k}$. Note by the definition of $E_{H_{1}}$ that all edges of this $%
AC_{2k}$ are colored red in the coloring $\chi _{\tau }^{\prime }$, and thus
they are also colored red in the coloring $\chi _{\tau }$. However, as we
proved above, in the coloring $\chi _{\tau }$ of its edges, $H$ does not
contain any monochromatic~$AC_{2k}$, where $k\geq 3$. Therefore, at least
one of the non-edges of the $AC_{2k}$ in the graph $H^{\prime }$ is an edge
of $E_{0}$ in the graph $H$. Assume without loss of generality that this
edge of $E_{0}$ is $w_{1}w_{2}$. That is, assume that $w_{1}w_{2}\in E_{0}$,
i.e.~$w_{1}w_{2}=u_{i}v_{i}$ for some $i\in \{1,2,\ldots ,n\}$.

Suppose that $w_{3}w_{2k}$ is not an edge of $H^{\prime }$. Then, since $%
w_{1}w_{2}\in E_{0}$, there exists (similarly to above) a clause $\alpha $
in the formula $\phi _{2}$ such that $\alpha $ is not satisfied by the truth
assignment $\tau $ if and only if both edges $w_{2}w_{3}$ and $w_{2k}w_{1}$
are colored red in $\chi _{\tau }$. However, $\tau $ is a satisfying truth
assignment of $\phi _{2}$ by assumption, and thus at least one edge of $%
w_{2}w_{3}$ and $w_{2k}w_{1}$ is colored blue in $\chi _{\tau }$, which is a
contradiction. Therefore $w_{3}w_{2k}$ is an edge of $H^{\prime }$. Suppose
now that the edge $w_{3}w_{2k}$ of $H^{\prime }$ is colored red in $\chi
_{\tau }^{\prime }$, and thus $w_{3}w_{2k}\in E_{H_{1}}$ by the definition
of $E_{H_{1}}$. Then the vertices $w_{3},w_{4},\ldots ,w_{2k}$ build an $%
AC_{2k-2}$ in $H^{\prime }$ on the edges of $E_{H_{1}}$, which is a
contradiction to the minimality assumption of the $AC_{2k}$ in $H^{\prime }$%
. Therefore the edge $w_{3}w_{2k}$ of~$H^{\prime }$ is colored blue in $\chi
_{\tau }^{\prime }$, and thus $w_{3}w_{2k}\in E_{H_{2}}$.

Recall now that both the edges $w_{2}w_{3}$ and $w_{2k}w_{1}$ of $H^{\prime
} $ are red in $\chi _{\tau }^{\prime }$. Therefore, by the definition of
the coloring $\chi _{\tau }^{\prime }$ from $\chi _{\tau }$, it follows that
each of the edges $w_{2}w_{3}$ and $w_{2k}w_{1}$ participates to at least
one $AC_{4}$ in $H$ (or equivalently the corresponding vertices of $%
w_{2}w_{3}$ and $w_{2k}w_{1}$ in $H^{\ast }$ are not isolated vertices). Let
the edges $w_{2}w_{3}$ and $w_{2}^{\prime }w_{3}^{\prime }$ form an $AC_{4}$
in $H$, for some vertices $w_{2}^{\prime }$ and $w_{3}^{\prime }$, where $%
w_{2}w_{2}^{\prime }$ and $w_{3}w_{3}^{\prime }$ are not edges in $H$.
Similarly, let the edges $w_{2k}w_{1}$ and $w_{2k}^{\prime }w_{1}^{\prime }$
form an $AC_{4}$ in $H$, for some vertices $w_{2k}^{\prime }$ and $%
w_{1}^{\prime }$, where $w_{2k}w_{2k}^{\prime }$ and $w_{1}w_{1}^{\prime }$
are not edges in $H$. Note that some of the vertices $\{w_{2}^{\prime
},w_{3}^{\prime },w_{2k}^{\prime },w_{1}^{\prime }\}$ may coincide with each
other, as well as with some of the vertices $\{w_{2},w_{3},w_{2k},w_{1}\}$.
Recall that $\chi _{\tau }^{\prime }$ is a proper $2$-coloring of the
vertices of~$H^{\ast }$. Therefore, since $w_{2}w_{3}$ and $w_{2k}w_{1}$ are
colored red in $\chi _{\tau }^{\prime }$, it follows that $w_{2}^{\prime
}w_{3}^{\prime }$ and $w_{2k}^{\prime }w_{1}^{\prime }$ are colored blue in $%
\chi _{\tau }^{\prime }$. Therefore the vertices $w_{1},w_{2},w_{2}^{\prime
},w_{3}^{\prime },w_{3},w_{2k},w_{2k}^{\prime },w_{1}^{\prime }$ build an $%
AC_{8}$ in $H$ on the edges of $E_{H_{2}}$. This is a contradiction, since
we proved above that $H$ does not have any $AC_{2k}$ on the edges of $%
E_{H_{2}}$, where~$k\geq 2$.

Therefore, it follows that $H^{\prime }$ does not include any $AC_{2k}$ on
the edges of $E_{H_{1}}$, where~$k\geq 3$. Thus, since we already proved
that $H^{\prime }$ does not include any $AC_{4}$ on the edges of~$E_{H_{1}}$%
, it follows that $H^{\prime }$ does not include any $AC_{2k}$ on the edges
of $E_{H_{1}}$, where $k\geq 2$. Therefore~$H_{1}$ has a threshold
completion in $H^{\prime }=H-E_{0}$ by Lemma~\ref{threshold-completion-lem}.

\medskip

Summarizing, $H_{1}$ has a threshold completion in $H^{\prime }=H-E_{0}$,
and $H_{2}$ has a threshold completion in $H$. Furthermore all edges of $%
E_{0}$ belong to the graph $H$, and $H=H_{1}\cup H_{2}$. Let $\widetilde{H}%
_{1}$ be the threshold completion of $H_{1}$ in $H-E_{0}$, and let $%
\widetilde{H}_{2}$ be the threshold completion of $H_{2}$ in $H$. Then $%
\widetilde{H}_{1}$ and $\widetilde{H}_{2}$ are two threshold graphs,
i.e.~they do not include any $AC_{4}$. Furthermore, let $\widetilde{G}%
_{1}=(U,V,\widetilde{E}_{1})$ and $\widetilde{G}_{2}=(U,V,\widetilde{E}_{2})$
be the bipartite graphs obtained by $\widetilde{H}_{1}$ and $\widetilde{H}%
_{2}$, respectively, by removing from them all possible edges of $V\times V$%
. Note that $E_{0}\subseteq \widetilde{E}_{2}\setminus \widetilde{E}_{1}$,
since every edge of $E_{0}$ belongs to $\widetilde{H}_{2}$ and not to $%
\widetilde{H}_{1}$. Furthermore, neither $\widetilde{G}_{1}$ nor $\widetilde{%
G}_{2}$ include any induced $2K_{2}$, since $\widetilde{H}_{1}$ and $%
\widetilde{H}_{2}$ do not include any $AC_{4}$. Therefore both $\widetilde{G}%
_{1}$ and $\widetilde{G}_{2}$ are chain graphs. Moreover, since $H=H_{1}\cup
H_{2}$, it follows that also $H=\widetilde{H}_{1}\cup \widetilde{H}_{2}$ and 
$\widetilde{G}=\widetilde{G}_{1}\cup \widetilde{G}_{2}$. Thus, since $%
E_{0}\subseteq \widetilde{E}_{2}\setminus \widetilde{E}_{1}$, it follows
that $\widetilde{G}$ is linear-interval coverable by Definition~\ref%
{linear-interval-coverable-def} and $\{\widetilde{E}_{1},\widetilde{E}_{2}\}$
is a linear-interval cover of $\widetilde{G}$. This construction of $\{%
\widetilde{E}_{1},\widetilde{E}_{2}\}$ from the satisfying truth assignment $%
\tau $ of $\phi _{1}\wedge \phi _{2}$ is shown in Algorithm~\ref%
{linear-interval-cover-from-assignment-alg}.

\medskip

\begin{algorithm}[t!]
\caption{Construction of a linear-interval cover of ${\widetilde{G} = \widehat{C}(P)}$, if $\phi_1 \wedge \phi_2$ is satisfiable} \label{linear-interval-cover-from-assignment-alg}
\begin{algorithmic}[1]
\REQUIRE{The bipartite graph ${\widetilde{G} = \widehat{C}(P)}$, the associated split graph~$H$~of~$\widetilde{G}$, 
its conflict graph~$H^{\ast}$, a~proper $2$-coloring $\chi_{0}$ of the vertices of $H^{\ast}$, and a satisfying truth assignment $\tau$ of $\phi_1 \wedge \phi_2$}
\ENSURE{A linear-interval cover $\{\widetilde{E}_{1},\widetilde{E}_{2}\}$ of $\widetilde{G}$}

\medskip

\STATE{Let $H = (U,V,E_{H})$, where $U=\{u_{1}, u_{2}, \ldots, u_{n}\}$ and $V=\{v_{1}, v_{2}, \ldots, v_{n}\}$}  \label{alg-cover-1}
\STATE{$ E_{0} \leftarrow \{u_{i}v_{i}\ |\ 1\leq i\leq n\}$}  \label{alg-cover-2}

\vspace{0.1cm}

\FOR{every connected component $C_{i}, 1\leq i\leq k$, of $H^{\ast}$} \label{alg-cover-3}
     
     \IF{$C_i$ is an isolated vertex of $H^{\ast}$} \label{alg-cover-4}
          \STATE{color the vertex of $C_i$ blue} \label{alg-cover-5}
     \ELSE \label{alg-cover-6}

          \STATE{\textbf{if} $x_{i}=0$ in $\tau$ \textbf{then} color every vertex of $C_i$ with the same color as in $\chi_{0}$} \label{alg-cover-7}
          \STATE{\textbf{if} $x_{i}=1$ in $\tau$ \textbf{then} color every vertex of $C_i$ with the opposite color than in $\chi_{0}$} \label{alg-cover-8}
     \ENDIF 
\ENDFOR 

\vspace{0.1cm}

\STATE{$E_{H_{1}} \leftarrow \{e \in E_{H} \ | \ e \text{ is red}\}$; \ $H_{1} \leftarrow (U,V,E_{H_{1}})$} \label{alg-cover-9}

\STATE{$E_{H_{2}} \leftarrow \{e \in E_{H} \ | \ e \text{ is blue}\}$; \ $H_{2} \leftarrow (U,V,E_{H_{2}})$} \label{alg-cover-10}

\vspace{0.1cm}

\STATE{Compute a threshold completion $\widetilde{H}_{1}$ of $H_{1}$ in $H - E_{0}$ (by Lemma~\ref{threshold-completion-construction-lem})} \label{alg-cover-11}

\STATE{Compute a threshold completion $\widetilde{H}_{2}$ of $H_{2}$ in $H$ (by Lemma~\ref{threshold-completion-construction-lem})} \label{alg-cover-12}

\vspace{0.1cm}

\STATE{$\widetilde{E}_{1} \leftarrow E(\widetilde{H}_{1}) \setminus (V \times V)$; \ $\widetilde{E}_{2} \leftarrow E(\widetilde{H}_{2}) \setminus (V \times V)$} \label{alg-cover-13}

\vspace{0.1cm}
\RETURN{$\{\widetilde{E}_{1},\widetilde{E}_{2}\}$} \label{alg-cover-14}
\end{algorithmic}
\end{algorithm}

\textbf{Running time of Algorithm~\ref%
{linear-interval-cover-from-assignment-alg}.}~First note that, since $|U| =
|V| = n$, the split graph $H$ has $O(n^2)$ edges. Therefore, since each edge
of $H$ is processed exactly once in the execution of lines~\ref{alg-cover-3}-%
\ref{alg-cover-8} in Algorithm~\ref%
{linear-interval-cover-from-assignment-alg}, these lines are executed in $%
O(n^2)$ time in total. Similarly, each of the lines~\ref{alg-cover-9},~\ref%
{alg-cover-10}, and~\ref{alg-cover-13} is executed in $O(n^2)$ time. Now,
each of the lines~\ref{alg-cover-11} and~\ref{alg-cover-12} is executed by
Lemma~\ref{threshold-completion-construction-lem} in time linear to the size
of $H$, i.e.~in~$O(n^2)$ time each. Therefore the total running time of
Algorithm~\ref{linear-interval-cover-from-assignment-alg} is~$O(n^2)$. This
completes the proof of the theorem.\qed
\end{proof}

\section{The recognition of linear-interval orders and PI graphs\label{recognition-sec}}

In this section we investigate the structure of the formula $\phi _{1}\wedge
\phi _{2}$ that we computed in Section~\ref{linear-interval-satisfiability-sec}. In particular, we first prove in
Section~\ref{structural-properties-formula-subsec} some fundamental
structural properties of $\phi _{1}\wedge \phi _{2}$, which allow us to find
an appropriate sub-formula of $\phi _{1}\wedge \phi _{2}$ which is gradually
mixed (cf.~Definition~\ref{gradually-mixed-def}). Then we exploit this
sub-formula of $\phi _{1}\wedge \phi _{2}$ in order to provide in Section~%
\ref{recognition-algorithm-subsec} an algorithm that solves the
satisfiability problem on $\phi _{1}\wedge \phi _{2}$ in time linear to its size, cf.~Theorem~\ref{formula-equivalent-gm-formula-thm}. 
Finally, using this satisfiability algorithm, we combine our results
of Sections~\ref{linear-interval-sec} and~\ref%
{linear-interval-satisfiability-sec} in order to recognize efficiently PI
graphs and linear-interval orders in Section~\ref%
{recognition-algorithm-subsec}.%

\subsection{Structural properties of the formula $\protect\phi _{1}\wedge 
\protect\phi _{2}$\label{structural-properties-formula-subsec}}

The three main structural properties of $\phi _{1}\wedge \phi _{2}$ are proved
in Lemmas~\ref{disjoint-clauses-lem},~\ref{congruent-clauses-phi-2-lem-a},~and~\ref{congruent-clauses-phi-2-lem-b}, 
respectively. We first provide two auxiliary technical lemmas.

\begin{lemma}
\label{extension-AP6-same-base-wanted-ceiling-lem}Let $\alpha =(\ell
_{1}\vee \ell _{2}\vee \ell _{3})$ be a clause of $\phi _{1}$. Assume that $%
\alpha $ corresponds to the $AP_{6}$ of~$H$ on the vertices $v_{1},\ldots
,v_{6}$, which has the literals $\ell _{1},\ell _{2},\ell _{3}$ on its edges
(in this order). Then, for every edge $e$ of $H$ with literal $\ell
_{e}=\ell _{2}$, there exists an $AP_{6}$ in $H$ with $v_{1}v_{2}$ as is its
base and $e$ as its ceiling, which has the literals $\ell _{1},\ell
_{2},\ell _{3}$ on its edges (in this order).
\end{lemma}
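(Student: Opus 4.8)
The plan is to translate everything into the partial order $P$ and the conflict structure, and then to move the ceiling while keeping the two ``outer'' edges of the $AP_6$ fixed. Since $\widetilde G=\widehat C(P)$, Definition~\ref{C(P)-def} gives the dictionary $u_iv_j\in\widetilde E\Leftrightarrow u_i\not<_P u_j$, and since $H=H_{\widetilde G}$, Observation~\ref{bipartite-split-AC4-obs} says that two edges of $\widetilde G$ are in conflict exactly when their four endpoints induce a $2K_2$ in $\widetilde G$. By Lemma~\ref{split-AC6-alternating-lem} the six vertices of the $AP_6$ alternate between the clique $V$ and the independent set $U$; fix the labelling with $v_1,v_3,v_5\in V$ and $v_2,v_4,v_6\in U$, so that the base $v_1v_2$ is a non-edge, the ceiling $v_4v_5$ (literal $\ell_2$) is an edge, and the two outer edges are $v_2v_3$ (literal $\ell_1$) and $v_6v_1$ (literal $\ell_3$). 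The crucial structural input is that, because $\alpha$ was created by Algorithm~\ref{alg-phi_1-constr} (line~\ref{algorithm-phi_1-3}), its three edges have pairwise non-negated literals, and two conflicting edges always have negated literals; hence the edges $v_2v_3,v_4v_5,v_6v_1$ are pairwise non-conflicting. Lemma~\ref{AC6-if-not-in-conflict-all-not-isolated-lem} then applies and yields the diagonal edges $v_3v_6,v_4v_1,v_5v_2$ together with the conflict $v_4v_5\,\|\,v_3v_6$; write $D=v_3v_6$, which has literal $\overline{\ell_2}$.

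Next I would reduce the required construction to a single conflict. To produce an $AP_6$ with base $v_1v_2$ and ceiling an arbitrary edge $e$ with $\ell_e=\ell_2$, it is natural to keep the two outer vertices $v_3\in V$ and $v_6\in U$ (hence the outer edges $v_2v_3,v_6v_1$ and their literals $\ell_1,\ell_3$) and to install $e=e_4e_5$, with $e_4\in U$ and $e_5\in V$, in the middle. The resulting sequence $v_1,v_2,v_3,e_4,e_5,v_6$ already alternates correctly, so the only remaining conditions are that $v_3e_4$ and $e_5v_6$ be non-edges of $\widetilde G$; by the $2K_2$ description these are exactly the two non-edges witnessing $D\,\|\,e$. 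Thus, with the outer vertices kept fixed, the lemma follows as soon as $D\,\|\,e$, and the case $e=v_4v_5$ is precisely the conflict $D\,\|\,v_4v_5$ already obtained. Since $D$ and $e$ carry the negated literals $\overline{\ell_2}$ and $\ell_2$, they lie in the same connected component of the bipartite conflict graph $H^\ast$ at odd distance. When $D\,\|\,e$ happens to fail, I would instead look for alternative outer vertices $v_3',v_6'$ that still give literal-$\ell_1$ and literal-$\ell_3$ edges, using the conflict path from $v_4v_5$ to $e$ and the diagonals supplied by re-applying Lemma~\ref{AC6-if-not-in-conflict-all-not-isolated-lem} to the intermediate $AP_6$'s.

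Concretely, I would induct on half the conflict-distance $2m$ between $v_4v_5$ and $e$ (even, because $\ell_e=\ell_2$): along the path $v_4v_5=f_0\,\|\,f_1\,\|\,\cdots\,\|\,f_{2m}=e$ the even-indexed edges carry $\ell_2$ and the odd-indexed ones carry $\overline{\ell_2}$, and at each step I would upgrade an $AP_6$ for $f_{2k}$ (base $v_1v_2$, literals $\ell_1,\ell_2,\ell_3$) to one for $f_{2k+2}$ through the intermediate edge $f_{2k+1}$. Each step reduces to establishing the two missing non-edges of $\widetilde G$ from the two conflicts $f_{2k}\,\|\,f_{2k+1}$ and $f_{2k+1}\,\|\,f_{2k+2}$, and I expect this to be the main obstacle: expanding the two conflicts in $P$ yields four comparabilities, but the target non-edges do not follow from these together with transitivity of $P$ alone — a short check shows the naive candidate connectors can fail simultaneously — so the argument must fall back on alternative outer vertices and, to close these cases, on the threshold/chain machinery behind $\chi(H^\ast)\le 2$ (Theorem~\ref{AC6-suffices-thm} and Lemma~\ref{threshold-completion-lem}), which controls the larger alternating cycles that a purely local order manipulation cannot see. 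A final, routine point is that Definition~\ref{AC-2k-def} permits repeated vertices in an $AC_6$, so no genuine distinctness degeneracies arise; one only has to verify at the end that the six constructed vertices form a bona fide $AP_6$ carrying $\ell_1,\ell_2,\ell_3$ in the prescribed cyclic order.
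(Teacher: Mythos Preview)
Your overall strategy --- induct along a conflict path from $v_4v_5$ to $e$ in $H^{\ast}$ and rebuild the $AP_6$ at each step --- is exactly the paper's approach. But the execution has a real gap at precisely the place you flag as ``the main obstacle,'' and the tools you propose to close it are the wrong ones.

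First, the detour through transitivity of $P$ and the threshold/chain machinery (Theorem~\ref{AC6-suffices-thm}, Lemma~\ref{threshold-completion-lem}) is a dead end here. The paper's induction step uses neither. What actually forces the missing non-edges (equivalently, the needed edges incident to $v_1$ and $v_2$) is a \emph{literal contradiction}: from the current $AP_6$, Lemma~\ref{AC6-if-not-in-conflict-all-not-isolated-lem} supplies diagonal edges $v_1u_{i-1}$ and $v_2w_{i-1}$ carrying literals $\overline{\ell_1}$ and $\overline{\ell_3}$; if, say, $v_1w_i$ were a non-edge, then together with the non-edge $u_{i-1}u_i$ (from the conflict $e_{i-1}\|e_i$) this would put $u_iw_i$ in conflict with $v_1u_{i-1}$, forcing $\overline{\ell_2}=\overline{\overline{\ell_1}}=\ell_1$, contradicting the clause hypothesis that $\ell_1\neq\overline{\ell_2}$. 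This is the entire engine of the proof, and it is purely combinatorial in $H$ and $H^{\ast}$; no order-theoretic input beyond the clause condition $\ell_i\neq\overline{\ell_j}$ is used.

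Second, two structural choices make the induction clean and you should adopt both. Do the induction in single steps (not double steps): at step $i$ the $AP_6$ has base $v_1v_2$, ceiling $e_{i-1}=u_{i-1}w_{i-1}$ (literal $\overline{\ell_2}$) when $i$ is even and ceiling $e_i=u_iw_i$ (literal $\ell_2$) when $i$ is odd, with the two ``outer'' vertices being the endpoints of the \emph{previous} conflict edge rather than the fixed $v_3,v_6$. In particular the outer edges (and hence the vertices in positions $3$ and $6$) slide along the path; your attempt to keep $v_3,v_6$ fixed is what produces the obstruction you cannot resolve. Once you let them move and invoke the literal-contradiction argument above, the two required (non-)edges at each step follow immediately, and the side claim that $v_1\neq w_i$, $v_2\neq u_i$ comes from Lemma~\ref{split-AC6-alternating-lem}.
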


\begin{proof}
First note that by the construction of $\phi _{1}$ (cf.~Section~\ref%
{linear-interval-satisfiability-sec}) no two literals among $\{\ell
_{1},\ell _{2},\ell _{3}\}$ are one the negation of the other, i.e.~$\ell
_{1}\neq \overline{\ell _{2}}$, $\ell _{1}\neq \overline{\ell _{3}}$, and $%
\ell _{2}\neq \overline{\ell _{3}}$. Therefore also no pair among the edges
of the $AP_{6}$ on the vertices $v_{1},\ldots ,v_{6}$ is in conflict.
Denote for simplicity $e^{\prime }=v_{4}v_{5}$. Since $\ell _{e^{\prime
}}=\ell _{e}=\ell _{2}$, the edges $e^{\prime }$ and $e$ of $H$ correspond
to two vertices of the conflict graph $H^{\ast }$ that lie in the same
connected component of $H^{\ast }$. Thus there exists a path between these
two vertices of $H^{\ast }$. That is, there exists a sequence of edges $%
e_{1},e_{2},\ldots ,e_{t}$ in $H$, where $e_{1}=e^{\prime }$ and $e_{t}=e$,
such that $e_{i}||e_{i+1}$ for every $i\in \{1,2,\ldots ,t-1\}$. Note that $%
\ell _{e_{i}}\in \{\ell _{2},\overline{\ell _{2}}\}$ for all these edges $%
e_{i}$. For every $1\leq i\leq t$ denote $e_{i}=u_{i}w_{i}$, where $%
u_{1}=v_{4}$ and $w_{1}=v_{5}$. Furthermore let $u_{i}u_{i+1}$ and $%
w_{i}w_{i+1}$ be the non-edges between $e_{i}$ and $e_{i+1}$, where $1\leq
i\leq t-1$. For simplicity of the presentation, denote $u_{0}=v_{3}$ and $%
w_{0}=v_{6}$.

We will prove by induction that for every $i\in \{1,2,\ldots ,t\}$ there
exists an $AP_{6}$ in $H$ on the vertices $%
v_{1},v_{2},u_{i-1},u_{i},w_{i},w_{i-1}$ (if $i$ is odd), or on the vertices 
$v_{1},v_{2},u_{i},u_{i-1},w_{i-1},w_{i}$ (if $i$ is even), which has the
literals $\ell _{1},\ell _{2},\ell _{3}$ on its edges (in this order). The
induction basis (i.e.~the case where $i=1$) follows immediately by the
assumption of the lemma.

For the induction step, let first $i\geq 2$ be even. Then $i-1$ is odd, and
thus there exists by the induction hypothesis an $AP_{6}$ in $H$ on the
vertices $v_{1},v_{2},u_{i-2},u_{i-1},w_{i-1},w_{i-2}$ which has the
literals $\ell _{1},\ell _{2},\ell _{3}$ on its edges (in this order). That
is, $\ell _{v_{2}u_{i-2}}=\ell _{1}$, $\ell _{u_{i-1}w_{i-1}}=\ell _{2}$,
and $\ell _{w_{i-2}v_{1}}=\ell _{3}$. Therefore, since $\ell
_{u_{i}w_{i}}\in \{\ell _{2},\overline{\ell _{2}}\}$ and $%
u_{i}w_{i}||u_{i-1}w_{i-1}$ by assumption, it follows that $\ell
_{u_{i}w_{i}}=\overline{\ell _{2}}$. Furthermore, since no pair among the
edges of the $AP_{6}$ is in conflict, Lemma~\ref%
{AC6-if-not-in-conflict-all-not-isolated-lem} implies in particular that the
edges $v_{1}u_{i-1}$ and $v_{2}w_{i-1}$ exist in $H$ and that $\ell
_{v_{1}u_{i-1}}=\overline{\ell _{1}}$ and~$\ell _{v_{2}w_{i-1}}=\overline{%
\ell _{3}}$.

\begin{myclaim}
\label{claim-1}$v_{1}\neq w_{i}$ and $v_{2}\neq u_{i}$.
\end{myclaim}

\begin{proofofclaim}[Proof of Claim \protect\ref{claim-1}]
Since $H$ is a split graph, there exists a partition of its vertices into a
clique $K$ and an independent set $I$. Then, since $H$ has an $AP_{6}$ on
the vertices $v_{1},v_{2},u_{i-2},u_{i-1},w_{i-1},w_{i-2}$, Lemma~\ref%
{split-AC6-alternating-lem} implies that either $v_{1},u_{i-2},w_{i-1}\in K$
and $v_{2},u_{i-1},w_{i-2}\in I$, or $v_{1},u_{i-2},w_{i-1}\in I$ and $%
v_{2},u_{i-1},w_{i-2}\in K$. In the former case, since $w_{i-1}\in K$ and $%
w_{i-1}w_{i}$ is not an edge in $H$, it follows that $w_{i}\in I$. Thus $%
v_{1}\neq w_{i}$, since $v_{1}\in K$. Furthermore, since $w_{i}\in I$ and $%
u_{i}w_{i}$ is an edge in $H$, it follows that $u_{i}\in K$. Thus $v_{2}\neq
u_{i}$, since $v_{2}\in I$. Similarly, in the latter case, since $u_{i-1}\in
K$ and $u_{i-1}u_{i}$ is not an edge in $H$, it follows that $u_{i}\in I$.
Thus $v_{2}\neq u_{i}$, since $v_{2}\in K$. Furthermore, since $u_{i}\in I$
and $u_{i}w_{i}$ is an edge in $H$, it follows that $w_{i}\in K$. Thus $%
v_{1}\neq w_{i}$, since $v_{1}\in I$. Summarizing, in both cases $v_{1}\neq
w_{i}$ and $v_{2}\neq u_{i}$.
\end{proofofclaim}

\medskip

Suppose that $v_{1}w_{i}$ is not an edge in $H$. Then $u_{i}w_{i}$ is in
conflict with $v_{1}u_{i-1}$, since also $u_{i-1}u_{i}$ is not an edge in $H$%
. Therefore $\ell _{u_{i}w_{i}}=\overline{\ell _{v_{1}u_{i-1}}}$. Thus,
since $\ell _{u_{i}w_{i}}=\overline{\ell _{2}}$ and $\ell _{v_{1}u_{i-1}}=%
\overline{\ell _{1}}$, it follows that $\ell _{1}=\overline{\ell _{2}}$,
which is a contradiction, since no two literals among $\{\ell _{1},\ell
_{2},\ell _{3}\}$ are one the negation of the other. Therefore $v_{1}w_{i}$
is an edge in $H$. Furthermore $\ell _{v_{1}w_{i}}=\ell _{3}$, since $\ell
_{v_{2}w_{i-1}}=\overline{\ell _{3}}$ and $w_{i-1}w_{i},v_{1}v_{2}$ are not
edges in $H$. By symmetry it follows that also $v_{2}u_{i}$ is an edge in $H$
and that $\ell _{v_{2}u_{i}}=\ell _{1}$. Thus the vertices $%
v_{1},v_{2},u_{i},u_{i-1},w_{i-1},w_{i}$ build an $AP_{6}$ in $H$, which has
the literals $\ell _{1},\ell _{2},\ell _{3}$ on its edges (in this order).
This completes the induction step whenever $i$ is even.

Let now $i\geq 3$ be odd. Then $i-1$ is even, and thus there exists by the
induction hypothesis an $AP_{6}$ in $H$ on the vertices $%
v_{1},v_{2},u_{i-1},u_{i-2},w_{i-2},w_{i-1}$ which has the literals $\ell
_{1},\ell _{2},\ell _{3}$ on its edges (in this order). That is, $\ell
_{v_{2}u_{i-1}}=\ell _{1}$, $\ell _{u_{i-2}w_{i-2}}=\ell _{2}$, and $\ell
_{w_{i-1}v_{1}}=\ell _{3}$. Thus, since the edges $u_{i-2}w_{i-2}$ and $%
u_{i-1}w_{i-1}$ are in conflict by assumption, it follows that $\ell
_{u_{i-1}w_{i-1}}=\overline{\ell _{2}}$. Furthermore, since the edges $%
u_{i-1}w_{i-1}$ and $u_{i}w_{i}$ are in conflict by assumption, it follows
that $\ell _{u_{i}w_{i}}=\ell _{2}$. Thus the vertices $%
v_{1},v_{2},u_{i-1},u_{i},w_{i},w_{i-1}$ build an $AP_{6}$ in $H$, which has
the literals $\ell _{1},\ell _{2},\ell _{3}$ on its edges (in this order).
This completes the induction step whenever $i$ is odd.

Summarizing, for $i=t$, there exists an $AP_{6}$ in $H$ on the vertices $%
v_{1},v_{2},u_{t-1},u_{t},w_{t},w_{t-1}$ (if $t$ is odd), or on the vertices 
$v_{1},v_{2},u_{t},u_{t-1},w_{t-1},w_{t}$ (if $t$ is even), which has the
literals $\ell _{1},\ell _{2},\ell _{3}$ on its edges (in this order). In
both cases where $t$ is even or odd, this $AP_{6}$ has the non-edge $%
v_{1}v_{2}$ as it base and the edge $e=u_{t}w_{t}$ as its ceiling. This
completes the proof of the lemma.\qed
\end{proof}

\begin{lemma}
\label{share-two-then-three-literals-lem}Let $\alpha =(\ell _{1}\vee \ell
_{2}\vee \ell _{3})$ and $\beta =(\ell _{1}\vee \ell _{2}\vee \ell _{4})$ be
two clauses of $\phi _{1}$ that share two literals $\ell _{1}$ and~$\ell
_{2} $. Then also $\ell _{3}=\ell _{4}$.
\end{lemma}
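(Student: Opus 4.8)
The plan is to realize each of the two clauses as an $AP_6$ in the split graph $H$ and then to transport one realization onto the other using Lemma~\ref{extension-AP6-same-base-wanted-ceiling-lem}, so that the third literal becomes forced by a conflict relation. By the construction of $\phi_1$, the clause $\alpha$ comes from an $AP_6$ whose three edges $e_1,e_2,e_3$ carry the literals $\ell_1,\ell_2,\ell_3$ (in this order), and no two of these edges are in conflict (this is exactly the condition tested in line~\ref{algorithm-phi_1-3} of Algorithm~\ref{alg-phi_1-constr}); the same holds for $\beta$ with edges $f_1,f_2,f_4$ carrying $\ell_1,\ell_2,\ell_4$. Since $H$ is a split graph, Lemma~\ref{split-no-AP5-double-AP6-lem} guarantees that each of these $AC_6$'s is a genuine $AP_6$, so that Lemma~\ref{AC6-if-not-in-conflict-all-not-isolated-lem} applies to each: it produces the three ``cross'' edges and, in particular, exhibits the third edge of the $AP_6$ as the conflict partner of the edge joining two prescribed endpoints of the other two edges. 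Thus I would first record that the third literal of each clause equals the \emph{negation} of the literal on a cross-edge that is determined by the $\ell_1$- and $\ell_2$-edges together with the way their endpoints sit in the split partition.

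Next I would use Lemma~\ref{extension-AP6-same-base-wanted-ceiling-lem} to bring the two realizations into a common position. Applying it to $\alpha$ with $\ell_2$ playing the role of the middle (ceiling) literal, and taking as target ceiling the $\ell_2$-edge $f_2$ of $\beta$ --~legitimate because $f_2$ and the $\ell_2$-edge of $\alpha$ lie in the same connected component and the same colour class of $H^{\ast}$, hence carry the same literal $\ell_2$~-- yields an $AP_6$ realizing $\alpha$ whose $\ell_2$-edge is exactly $f_2$ and whose outer literals are still $\ell_1,\ell_3$. Both this new $AP_6$ and the $AP_6$ of $\beta$ now have $f_2$ as their ceiling, with an $\ell_1$-edge attached (through a base) to one endpoint of $f_2$ and the respective third edge attached to the other endpoint.

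It then remains to compare the two third edges. Here I would invoke Lemma~\ref{split-AC6-alternating-lem}: since $f_2$ is a fixed edge of the split graph, one of its endpoints lies in the clique $K$ and the other in the independent set $I$, and in any $AP_6$ having $f_2$ as ceiling the $K/I$-classes of the two outer edges are pinned down. Using this, together with the fact that both $AP_6$'s attach an $\ell_1$-edge to $f_2$, I would argue that the $\ell_1$-edge is attached to the same endpoint of $f_2$ in both, so that the two third edges are attached to the common remaining endpoint; Lemma~\ref{AC6-if-not-in-conflict-all-not-isolated-lem} then identifies the same cross-edge as the conflict partner of both third edges, forcing $\ell_3=\overline{\ell_{\mathrm{cross}}}=\ell_4$.

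The hard part will be exactly this last alignment of orientations. Lemma~\ref{extension-AP6-same-base-wanted-ceiling-lem} fixes only one base (hence only one of the three edges) at a time, so after matching the $\ell_2$-edges the two $\ell_1$-edges need not coincide, and a priori the $\ell_1$-edge could attach to either endpoint of $f_2$; the two clauses might then determine their third literals from genuinely different cross-edges. To close this gap I expect to need a transport argument in the spirit of the proof of Lemma~\ref{extension-AP6-same-base-wanted-ceiling-lem}: walk along a path in $H^{\ast}$ connecting the two $\ell_1$-edges (they lie in the same component and colour class, so such a path exists), and show by induction, using Lemma~\ref{split-AC6-alternating-lem} to keep the split-classes of the endpoints consistent at each step, that the cross-edge determining the third literal is invariant along the walk. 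This would also rule out the ``crossed'' configuration, and hence establish $\ell_3=\ell_4$.
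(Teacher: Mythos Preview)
Your overall plan---realize both clauses by $AP_6$'s in $H$ and use Lemma~\ref{extension-AP6-same-base-wanted-ceiling-lem} to transport one onto the other until a common cross-edge forces $\ell_3=\ell_4$---is exactly the right one, and you correctly isolate where the difficulty lies. However, the fix you propose for that difficulty does not quite work as stated, and the paper resolves it differently.

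The issue is this. After one application of the extension lemma you have matched the $\ell_2$-ceilings, but the two $\ell_1$-edges are still arbitrary edges carrying the literal $\ell_1$; they need not even be incident to the same endpoint of $f_2$. You propose to ``walk'' the $\ell_1$-edge along a path in $H^{\ast}$ while keeping the ceiling $f_2$ fixed. But Lemma~\ref{extension-AP6-same-base-wanted-ceiling-lem} walks the \emph{ceiling} while keeping the \emph{base} fixed; walking a side edge while keeping the ceiling fixed is a different statement, and the inductive step of that lemma's proof (which uses the base as an anchor at every stage) does not transfer. So the envisioned invariance of the cross-edge along such a walk is not established by the available tools.

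The paper avoids this by applying Lemma~\ref{extension-AP6-same-base-wanted-ceiling-lem} \emph{twice in sequence}, each time walking a ceiling. First, reparametrize the $AP_6$ of $\alpha$ so that its $\ell_1$-edge is the ceiling, and walk it to $w_2w_3$ (the $\ell_1$-edge of $\beta$); this yields an intermediate $AP_6$ with literals $\ell_3,\ell_1,\ell_2$ whose ceiling is $w_2w_3$. Now choose a base of this intermediate $AP_6$ so that its $\ell_2$-edge becomes the ceiling, and walk \emph{that} to $w_4w_5$ (the $\ell_2$-edge of $\beta$). Each application comes with a two-fold orientation ambiguity for how the target edge sits as ceiling, giving four cases in all. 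Three of them are killed: two by Lemma~\ref{split-AC6-alternating-lem} (the split bipartition forces $w_2,w_4$ or $w_3,w_5$ into the same class, contradicting the original $AP_6$ on $w_1,\dots,w_6$), and one because Lemma~\ref{AC6-if-not-in-conflict-all-not-isolated-lem} would make $w_3w_4$ an edge. In the surviving case the resulting $AP_6$ contains the vertices $w_2,w_4,w_5$, and Lemma~\ref{AC6-if-not-in-conflict-all-not-isolated-lem} yields the cross-edge $w_5w_2$ in conflict with the $\ell_3$-edge; but the same cross-edge is in conflict with $w_6w_1$ in the $AP_6$ of $\beta$, so $\overline{\ell_3}=\ell_{w_5w_2}=\overline{\ell_4}$.

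In short: rather than inventing a new ``side-edge walk,'' apply the ceiling walk twice---once per shared literal---so that after both applications you are genuinely inside the second $AP_6$'s vertex set and the common conflict partner appears automatically.
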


\begin{proof}
By the construction of the formula $\phi _{1}$ (cf.~Section~\ref%
{linear-interval-satisfiability-sec}), the clauses $\alpha $ and $\beta $
correspond to two $AC_{6}$'s in $H$. Since $H$ is a split graph, Lemma~\ref%
{split-no-AP5-double-AP6-lem} implies that each of these two $AC_{6}$'s is
an $AP_{6}$, i.e.~an alternating path of length $6$ (cf.~Figure~\ref%
{AC6-2-fig}). Let $v_{1},v_{2},v_{3},v_{4},v_{5},v_{6}$ be the vertices of
the first $AP_{6}$, which has the literals $\ell _{1},\ell _{2},\ell _{3}$
on its edges (in this order). Note that, by the construction of $\phi _{1}$,
no two literals among $\{\ell _{1},\ell _{2},\ell _{3}\}$ are one the
negation of the other, i.e.~$\ell _{1}\neq \overline{\ell _{2}}$, $\ell
_{1}\neq \overline{\ell _{3}}$, and $\ell _{2}\neq \overline{\ell _{3}}$.
Furthermore let $w_{1},w_{2},w_{3},w_{4},w_{5},w_{6}$ be the vertices of the
second $AP_{6}$, which has the literals $\ell _{1},\ell _{2},\ell _{4}$ on
its edges (in this order). Since $H$ is a split graph, there exists a
partition of its vertices into a clique $K$ and an independent set $I$.

Consider now the base $v_{5}v_{6}$ and the ceiling $v_{2}v_{3}$ of the first 
$AP_{6}$ (cf.~Definition~\ref{base-ceiling-AP6-def}). That is, the vertices
of this $AP_{6}$ can be ordered as $v_{5},v_{6},v_{1},v_{2},v_{3},v_{4}$
(where $v_{5}v_{6}$ is not an edge); then the literals on its edges are $%
\ell _{3},\ell _{1},\ell _{2}$ (in this order). Since $\ell
_{v_{2}v_{3}}=\ell _{w_{2}w_{3}}=\ell _{1}$, there exists by Lemma~\ref%
{extension-AP6-same-base-wanted-ceiling-lem} an $AP_{6}$ with $v_{5}v_{6}$
as its base and $w_{2}w_{3}$ as its ceiling, which has the literals $\ell
_{3},\ell _{1},\ell _{2}$ on its edges (in this order). Note that the
ordering of the vertices in this $AP_{6}$ can be either $%
v_{5},v_{6},a,w_{3},w_{2},b$, or $v_{5},v_{6},a,w_{2},w_{3},b$, for some
vertices $a$ and $b$ of $H$. We distinguish now these two cases.

\medskip

\textbf{Case~1.}~The $AP_{6}$ with $v_{5}v_{6}$ as its base and $w_{2}w_{3}$
as its ceiling has vertex ordering $v_{5},v_{6},a,w_{3},w_{2},b$. Consider
now the base $aw_{3}$ and the ceiling $bv_{5}$ of this $AP_{6}$. That is,
its vertices can be ordered as $a,w_{3},w_{2},b,v_{5},v_{6}$ (where $aw_{3}$
is not an edge); then the literals on its edges are $\ell _{1},\ell
_{2},\ell _{3}$ (in this order). Since $\ell _{bv_{5}}=\ell
_{w_{4}w_{5}}=\ell _{2}$, there exists by Lemma~\ref%
{extension-AP6-same-base-wanted-ceiling-lem} an $AP_{6}$ with $aw_{3}$ as
its base and $w_{4}w_{5}$ as its ceiling, which has the literals $\ell
_{1},\ell _{2},\ell _{3}$ on its edges (in this order). Note that the
ordering of the vertices in this $AP_{6}$ can be either $%
a,w_{3},c,w_{5},w_{4},d$, or $a,w_{3},c,w_{4},w_{5},d$, for some vertices $c$
and $d$ of $H$. We distinguish now these two cases.

\medskip

\textbf{Case~1.1.}~The $AP_{6}$ with $aw_{3}$ as its base and $w_{4}w_{5}$
as its ceiling has vertex ordering $a,w_{3},c,w_{5},w_{4},d$. Since no two
literals among $\{\ell _{1},\ell _{2},\ell _{3}\}$ are one the negation of
the other, it follows that no pair among the edges of this $AP_{6}$ is in
conflict. Thus Lemma~\ref{AC6-if-not-in-conflict-all-not-isolated-lem}
implies in particular that the edge $w_{3}w_{4}$ exists in $H$. This is a
contradiction to our initial assumption that the vertices $%
w_{1},w_{2},w_{3},w_{4},w_{5},w_{6}$ build an $AC_{6}$ (and thus $w_{3}w_{4}$
is not an edge).

\medskip

\textbf{Case~1.2.}~The $AP_{6}$ with $aw_{3}$ as its base and $w_{4}w_{5}$
as its ceiling has vertex ordering $a,w_{3},c,w_{4},w_{5},d$. Then Lemma~\ref%
{split-AC6-alternating-lem} implies that either $w_{3}\in K$ and $w_{5}\in I$%
, or $w_{3}\in I$ and $w_{5}\in K$. However, due to our initial assumption
that the vertices $w_{1},w_{2},w_{3},w_{4},w_{5},w_{6}$ build an $AC_{6}$,
Lemma~\ref{split-AC6-alternating-lem} implies that either $w_{3},w_{5}\in K$
or $w_{3},w_{5}\in I$, which is a contradiction.

\medskip

\textbf{Case~2.}~The $AP_{6}$ with $v_{5}v_{6}$ as its base and $w_{2}w_{3}$
as its ceiling has vertex ordering $v_{5},v_{6},a,w_{2},w_{3},b$. Consider
now the base $aw_{2}$ and the ceiling $bv_{5}$ of this $AP_{6}$. That is,
its vertices can be ordered as $a,w_{2},w_{3},b,v_{5},v_{6}$ (where $aw_{2}$
is not an edge); then the literals on its edges are $\ell _{1},\ell
_{2},\ell _{3}$ (in this order). Since $\ell _{bv_{5}}=\ell
_{w_{4}w_{5}}=\ell _{2}$, there exists by Lemma~\ref%
{extension-AP6-same-base-wanted-ceiling-lem} an $AP_{6}$ with $aw_{2}$ as
its base and $w_{4}w_{5}$ as its ceiling, which has the literals $\ell
_{1},\ell _{2},\ell _{3}$ on its edges (in this order). Note that the
ordering of the vertices in this $AP_{6}$ can be either $%
a,w_{2},c,w_{5},w_{4},d$, or $a,w_{2},c,w_{4},w_{5},d$, for some vertices $c$
and $d$ of $H$. We distinguish now these two cases.

\medskip

\textbf{Case~2.1.}~The $AP_{6}$ with $aw_{2}$ as its base and $w_{4}w_{5}$
as its ceiling has vertex ordering $a,w_{2},c,w_{5},w_{4},d$. Then Lemma~\ref%
{split-AC6-alternating-lem} implies that either $w_{2}\in K$ and $w_{4}\in I$%
, or $w_{2}\in I$ and $w_{4}\in K$. However, due to our initial assumption
that the vertices $w_{1},w_{2},w_{3},w_{4},w_{5},w_{6}$ build an $AC_{6}$,
Lemma~\ref{split-AC6-alternating-lem} implies that either $w_{2},w_{4}\in K$
or $w_{2},w_{4}\in I$, which is a contradiction.

\medskip

\textbf{Case~2.2.}~The $AP_{6}$ with $aw_{2}$ as its base and $w_{4}w_{5}$
as its ceiling has vertex ordering $a,w_{2},c,w_{4},w_{5},d$. Since no two
literals among $\{\ell _{1},\ell _{2},\ell _{3}\}$ are one the negation of
the other, it follows that no pair among the edges of this $AP_{6}$ is in
conflict. Thus Lemma~\ref{AC6-if-not-in-conflict-all-not-isolated-lem}
implies in particular that the edge $w_{5}w_{2}$ exists in $H$ and that $%
ad||w_{5}w_{2}$. Thus, since $\ell _{ad}=\ell _{3}$, it follows that $\ell
_{w_{5}w_{2}}=\overline{\ell _{3}}$. Recall now that we initially assumed
that the vertices $w_{1},w_{2},w_{3},w_{4},w_{5},w_{6}$ build an $AP_{6}$ in 
$H$, which has the literals $\ell _{1},\ell _{2},\ell _{4}$ on its edges (in
this order). Similarly, Lemma~\ref%
{AC6-if-not-in-conflict-all-not-isolated-lem} implies for this $AP_{6}$ that 
$w_{6}w_{1}||w_{5}w_{2}$. Thus, since $\ell _{w_{6}w_{1}}=\ell _{4}$, it
follows that $\ell _{w_{5}w_{2}}=\overline{\ell _{4}}$. That is, $\ell
_{w_{5}w_{2}}=\overline{\ell _{3}}=\overline{\ell _{4}}$, and thus $\ell
_{3}=\ell _{4}$. This completes the proof of the lemma.\qed
\end{proof}

We are now ready to prove the three main structural properties of the formula $%
\phi _{1}\wedge \phi _{2}$ in Lemmas~\ref{disjoint-clauses-lem},~\ref{congruent-clauses-phi-2-lem-a},~and~\ref{congruent-clauses-phi-2-lem-b}, 
respectively. The proof of the next lemma is
a based on the results of~\cite{RaschleSimon95}.

\begin{lemma}
\label{disjoint-clauses-lem}Let $\alpha $ and $\beta $ be two clauses of $%
\phi _{1}$. If $\alpha $ and $\beta $ share at least one variable, then $%
\{\alpha ,\overline{\alpha }\}=\{\beta ,\overline{\beta }\}$.
\end{lemma}

\begin{proof}
In Theorem~3.2 of~\cite{RaschleSimon95}, the authors consider an arbitrary
graph $G$ and its conflict graph~$G^{\ast }$, which is bipartite. For every
edge $e$ of $G$, denote by $C^{\ast }(e)$ the connected component of $%
G^{\ast }$ in which the vertex $e$ belongs. For simplicity of the
presentation, we will also refer in the following to $C^{\ast }(e)$ as the
set of the corresponding edges in $G$. The authors of~\cite{RaschleSimon95}
assume an arbitrary $2$-coloring of the vertices of $G^{\ast }$ (i.e.~of the
edges of $G$), such that there is no monochromatic double $AP_{6}$,
i.e.~there is no double $AP_{6}$ on the edges of one edge-color class of $G$%
. Furthermore they assume that there is a monochromatic $AP_{6}$ in $G$ on
the vertices $v_{1},\ldots ,v_{6}$ (which is not a double $AP_{6}$). Since
this $AP_{6}$ is monochromatic, it follows that no pair among its three
edges is in conflict in $G$ (since any two edges in conflict would have
different colors). Thus the edges $v_{3}v_{6},v_{4}v_{1},v_{5}v_{2}$ exist
in $G$ and $v_{4}v_{5}||v_{3}v_{6}$, $v_{2}v_{3}||v_{4}v_{1}$, and $%
v_{6}v_{1}||v_{5}v_{2}$ by Lemma~\ref%
{AC6-if-not-in-conflict-all-not-isolated-lem}. The non-edge $v_{1}v_{2}$ is
called the \emph{base} of the $AP_{6}$ (cf.~Definition~\ref%
{base-ceiling-AP6-def}); furthermore we call the edge $v_{3}v_{6}$ the \emph{%
front} of the $AP_{6}$~\cite{RaschleSimon95}. Note here that the choice of
the base $v_{1}v_{2}$ is arbitrary (the $AP_{6}$ has three bases $v_{1}v_{2}$%
, $v_{3}v_{4}$, and $v_{5}v_{6}$). Then, they prove\footnote{%
In~\cite{RaschleSimon95}, the authors prove within the proof of Theorem~3.2
a more general statement (cf.~equations~(2) and~(3) in~\cite{RaschleSimon95}%
). In particular, they flip the colors of all edges $xy$ of $G$, for which
there exists an $AP_{6}$ in $G$ having $v_{1}v_{2}$ as its basis and $xy$ as
its front (cf.~equation~(2) in~\cite{RaschleSimon95}); note here that all
these edges, whose color is being flipped, may correspond to one or more
connected components in the conflict graph $G^{\ast }$. Then they prove that
in the new edge coloring of $G$ no flipped edge participates in a
monochromatic $AP_{6}$ (cf.~equation~(3) in~\cite{RaschleSimon95}). In their
proof, which is correct and technically involved, they actually prove that
this happens also when we flip the colors of only one connected component $%
C^{\ast }(v_{3}v_{6})$ of $G^{\ast }$, where $v_{3}v_{6}$ is the front of
the initial monochromatic~$AP_{6}$ on the vertices~$v_{1},\ldots ,v_{6}$.%
\vspace{-0.5cm}} in Theorem~3.2 that, if we flip the colors of all edges of $%
C^{\ast }(v_{3}v_{6})$ then in the new edge coloring of $G$ no edge of $%
C^{\ast }(v_{3}v_{6})$ participates in a monochromatic $AP_{6}$. Note
furthermore that $v_{4}v_{5}\in C^{\ast }(v_{3}v_{6})$, since $%
v_{4}v_{5}||v_{3}v_{6}$, and thus also the color of $v_{4}v_{5}$ changes by
flipping the colors of the edges in $C^{\ast }(v_{3}v_{6})$.

We now apply the results of~\cite{RaschleSimon95} in our case as follows.
Consider two clauses $\alpha $ and $\beta $ of $\phi _{1}$ that share at
least one variable. That is, each of the dual clauses $\{\alpha ,\overline{%
\alpha }\}$ shares at least one literal with at least one of the dual
clauses $\{\beta ,\overline{\beta }\}$. If $\beta \in \{\alpha ,\overline{%
\alpha }\}$ then clearly $\{\alpha ,\overline{\alpha }\}=\{\beta ,\overline{%
\beta }\}$, and thus the lemma follows.

Let now $\beta \notin \{\alpha ,\overline{\alpha }\}$. Consider the $AC_{6}$
of $H$ on the vertices $v_{1},\ldots ,v_{6}$ that corresponds to the dual
clauses $\{\alpha ,\overline{\alpha }\}$. Since $H$ is a split graph, it
follows by Lemma~\ref{split-no-AP5-double-AP6-lem} that $H$ does not contain
any $AP_{5}$ or any double $AP_{6}$. Therefore this $AC_{6}$ of $H$ on the
vertices $v_{1},\ldots ,v_{6}$ is an $AP_{6}$ (but not a double $AP_{6}$).
Let $e=v_{2}v_{3}$, $e^{\prime }=v_{4}v_{5}$, and $e^{\prime \prime
}=v_{6}v_{1}$. This $AP_{6}$ has the non-edge $v_{1}v_{2}$ as its \emph{base}
and the edge $v_{3}v_{6}$ as its \emph{front}, cf.~Definition~\ref%
{base-ceiling-AP6-def}. Note that either $\alpha =(\ell _{e}\vee \ell
_{e^{\prime }}\vee \ell _{e^{\prime \prime }})$ and $\overline{\alpha }=(%
\overline{\ell _{e}}\vee \overline{\ell _{e^{\prime }}}\vee \overline{\ell
_{e^{\prime \prime }}})$, or $\alpha =(\overline{\ell _{e}}\vee \overline{%
\ell _{e^{\prime }}}\vee \overline{\ell _{e^{\prime \prime }}})$ and $%
\overline{\alpha }=(\ell _{e}\vee \ell _{e^{\prime }}\vee \ell _{e^{\prime
\prime }})$. Assume without loss of generality that $\alpha =(\ell _{e}\vee
\ell _{e^{\prime }}\vee \ell _{e^{\prime \prime }})$ and $\overline{\alpha }%
=(\overline{\ell _{e}}\vee \overline{\ell _{e^{\prime }}}\vee \overline{\ell
_{e^{\prime \prime }}})$. Recall by our assumption that $\alpha $ shares at
least one literal with at least one of the dual clauses $\{\beta ,\overline{%
\beta }\}$. Assume without loss of generality that $\alpha $ shares at least
one literal with $\beta $ (the case where $\alpha $ shares at least one
literal with $\overline{\beta }$ can be handled in exactly the same way).
Furthermore, let without loss of generality $\ell _{e^{\prime }}$ be the
common literal of $\alpha $ and $\beta $, i.e.~let $\beta =(\ell _{e^{\prime
}}\vee \ell _{p}\vee \ell _{q})$.

Since $\alpha $ is a clause of $\phi _{1}$, it follows by the construction
of $\phi _{1}$ that no two literals among $\{\ell _{e},\ell _{e^{\prime
}},\ell _{e^{\prime \prime }}\}$ are one the negation of the other (cf.
lines~\ref{algorithm-phi_1-3}-\ref{algorithm-phi_1-5} of Algorithm~\ref%
{alg-phi_1-constr}). Similarly no two literals among $\{\ell _{e^{\prime
}},\ell _{p},\ell _{q}\}$ are one the negation of the other, since $\beta $
is a clause of $\phi _{1}$. Consider now an \emph{arbitrary} truth
assignment $\tau $ of the variables $x_{1},x_{2},\ldots ,x_{k}$, such that $%
\alpha =0$ in $\tau $, i.e.~$\ell _{e}=\ell _{e^{\prime }}=\ell _{e^{\prime
\prime }}=0$ in $\tau $. Note that such an assignment exists, since no two
literals among $\{\ell _{e},\ell _{e^{\prime }},\ell _{e^{\prime \prime }}\}$
are one the negation of the other. Let $\chi $ be the $2$-coloring of the
vertices of $H^{\ast }$ (i.e.~of the edges of $H$) that corresponds to the
truth assignment $\tau $, cf.~Observation~\ref{truth-assignment-coloring-obs}%
. Since $\alpha =0$ in the truth assignment $\tau $, Observation~\ref%
{monochromatic-coloring-assignment-obs} implies that the $AP_{6}$ on the
vertices $v_{1},\ldots ,v_{6}$ is monochromatic in the edge-coloring $\chi $
of $H$. Then, due to the results of~\cite{RaschleSimon95}, if we flip in $%
\chi $ the colors of all edges of $C^{\ast }(v_{3}v_{6})$, in the new edge
coloring $\chi ^{\prime }$ of $H$ no edge of $C^{\ast }(v_{3}v_{6})$
participates in a monochromatic $AP_{6}$.

Let $\tau ^{\prime }$ be the truth assignment that corresponds to this new
coloring $\chi ^{\prime }$ (cf.~Observation~\ref%
{truth-assignment-coloring-obs}). Then $\tau $ and $\tau ^{\prime }$
coincide on all variables except the variable of the component $C^{\ast
}(v_{3}v_{6})$ of $H^{\ast }$. Note that the color of $e^{\prime
}=v_{4}v_{5} $ has been flipped in the transition from $\chi ^{\prime }$ to $%
\chi $, since $e^{\prime }\in C^{\ast }(v_{3}v_{6})$, and thus $\ell
_{e^{\prime }}=1 $ in $\chi ^{\prime }$. Furthermore, since no edge of $%
C^{\ast }(v_{3}v_{6})$ participates in a monochromatic $AP_{6}$ in $\chi
^{\prime }$, it follows that both clauses $\beta =(\ell _{e^{\prime }}\vee
\ell _{p}\vee \ell _{q})$ and $\overline{\beta }=(\overline{\ell _{e^{\prime
}}}\vee \overline{\ell _{p}}\vee \overline{\ell _{q}})$ are satisfied in $%
\tau ^{\prime }$, i.e.~$\beta =1$ and $\overline{\beta }=1$ in $\tau
^{\prime }$, since both $\beta $ and $\overline{\beta }$ include one of the
literals $\{\ell _{e^{\prime }},\overline{\ell _{e^{\prime }}}\}$. We will
now prove that $\{\ell _{p},\ell _{q}\}\cap \{\ell _{e},\ell _{e^{\prime
\prime }}\}\neq \emptyset $. Assume otherwise that $\{\ell _{p},\ell
_{q}\}\cap \{\ell _{e},\ell _{e^{\prime \prime }}\}=\emptyset $. We
distinguish the following three cases.

\medskip

\textbf{Case~1.}~$\ell _{p}\neq \ell _{e^{\prime }}$ and $\ell _{q}\neq \ell
_{e^{\prime }}$. Then, since no two literals among $\{\ell _{e^{\prime
}},\ell _{p},\ell _{q}\}$ are one the negation of the other, it follows that 
$\ell _{p},\ell _{q}\notin \{\ell _{e^{\prime }},\overline{\ell _{e^{\prime
}}}\}$. Therefore the values of $\ell _{p}$ and $\ell _{q}$ remain the same
in both assignments $\tau $ and $\tau ^{\prime }$. Since $\tau $ has been
assumed to be an arbitrary assignment such that $\ell _{e}=\ell _{e^{\prime
}}=\ell _{e^{\prime \prime }}=0$ in $\tau $, we can choose the assignment $%
\tau $ to be such that $\ell _{p}=\ell _{q}=1$ in $\tau $. Since the value
of $\ell _{e^{\prime }}$ changes to $1$ in $\tau ^{\prime }$, while the
values of $\ell _{p}$ and $\ell _{q}$ are the same in both $\tau $ and $\tau
^{\prime }$, it follows that $\ell _{e^{\prime }}=\ell _{p}=\ell _{q}=1$ in $%
\tau ^{\prime }$, and thus $\overline{\beta }=0$ in $\tau ^{\prime }$, which
is a contradiction.

\medskip

\textbf{Case~2.}~Exactly one of $\{\ell _{p},\ell _{q}\}$ is equal to $\ell
_{e^{\prime }}$. Let without loss of generality $\ell _{p}=\ell _{e^{\prime
}}$ and $\ell _{q}\neq \ell _{e^{\prime }}$, i.e.~$\ell _{q}\notin \{\ell
_{e^{\prime }},\overline{\ell _{e^{\prime }}}\}$. Therefore the value of $%
\ell _{q}$ remains the same in both assignments $\tau $ and $\tau ^{\prime }$%
. Since $\tau $ has been assumed to be an arbitrary assignment such that $%
\ell _{e}=\ell _{e^{\prime }}=\ell _{e^{\prime \prime }}=0$ in $\tau $, we
can choose the assignment $\tau $ to be such that $\ell _{q}=1$ in $\tau $.
Since the value of $\ell _{p}=\ell _{e^{\prime }}$ changes to $1$ in $\tau
^{\prime }$, while the value of $\ell _{q}$ is the same in both $\tau $ and $%
\tau ^{\prime }$, it follows that $\ell _{e^{\prime }}=\ell _{p}=\ell _{q}=1$
in $\tau ^{\prime }$, and thus $\overline{\beta }=0$ in $\tau ^{\prime }$,
which is a contradiction.

\medskip

\textbf{Case~3.}~$\ell _{p}=\ell _{q}=\ell _{e^{\prime }}$. Then $\beta
=(\ell _{e^{\prime }}\vee \ell _{p}\vee \ell _{q})=(\ell _{e^{\prime }})$
and $\overline{\beta }=(\overline{\ell _{e^{\prime }}}\vee \overline{\ell
_{p}}\vee \overline{\ell _{q}})=(\overline{\ell _{e^{\prime }}})$, and thus
it is not possible that both $\beta =1$ and $\overline{\beta }=1$ in $\tau
^{\prime }$, which is again a contradiction.

\medskip

Therefore $\{\ell _{p},\ell _{q}\}\cap \{\ell _{e},\ell _{e^{\prime \prime
}}\}\neq \emptyset $. Thus, since the clauses $\alpha $ and $\beta $ share
also the literal $\ell _{e^{\prime }}$, it follows that $\alpha $ and $\beta 
$ share at least two literals. Therefore $\alpha =\beta $ by Lemma~\ref%
{share-two-then-three-literals-lem}. This is a contradiction, since we
assumed that $\beta \notin \{\alpha ,\overline{\alpha }\}$. Therefore $\beta
\in \{\alpha ,\overline{\alpha }\}$, and thus $\{\alpha ,\overline{\alpha }%
\}=\{\beta ,\overline{\beta }\}$. This completes the proof of the lemma.%
\qed
\end{proof}

\begin{definition}
\label{phi2-partition-def}The clauses of $\phi _{2}$ are partitioned into
the \emph{sub-formulas} $\phi _{2}^{\prime },\phi _{2}^{\prime \prime }$,
such that $\phi _{2}^{\prime }$ contains all tautologies of $\phi _{2}$ and
all clauses of $\phi _{2}$ in which at least one literal corresponds to an
uncommitted edge, while~$\phi _{2}^{\prime \prime }$ contains all the
remaining clauses of $\phi _{2}$.
\end{definition}

\begin{lemma}
\label{congruent-clauses-phi-2-lem-a}Let $\{e_{1},e_{2},e_{3}\}$ be the
three edges of an $AC_{6}$ in $H$, which has clauses in $\phi _{1}$. Let $e$
be an edge of~$H$ such that $(\ell _{e}\vee \ell _{e_{1}})$ is a clause in $%
\phi _{2}^{\prime \prime }$. Then $\phi _{2}^{\prime \prime }$ contains also
at least one of the clauses $\{(\ell _{e}\vee \overline{\ell _{e_{2}}}%
),(\ell _{e}\vee \overline{\ell _{e_{3}}})\}$.
\end{lemma}

\begin{proof}
Recall that $H$ is the associated split graph of $\widetilde{G}$, where $%
\widetilde{G}$ is the bipartite complement $\widehat{C}(P)$ of the
domination bipartite graph $C(P)$ of the partial order $P$, cf.~Definitions~%
\ref{bipartite-split-def} and~\ref{C(P)-def}. For the purposes of the proof
denote $C(P)=(U,V,E)$, where $U=\{u_{1},u_{2},\ldots ,u_{n}\}$, $%
V=\{v_{1},v_{2},\ldots ,v_{n}\}$; then $u_{i}v_{j}\in E$ if and only if $%
u_{i}<_{P}u_{j}$ (cf.~Definition~\ref{C(P)-def}). Furthermore denote $%
\widetilde{G}=(U,V,\widetilde{E})$ for the bipartite complement $\widetilde{G%
}=\widehat{C}(P)$ of $C(P)$. Then $H=(U\cup V,E_{H})$, where $E_{H}=%
\widetilde{E}\cup (V\times V)$ (cf.~Definition~\ref{bipartite-split-def}).
Moreover let $E_{0}=\{u_{i}v_{i}\ |\ 1\leq i\leq n\}$ and observe that $%
E_{0}\subseteq \widetilde{E}\subseteq E_{H}$. Since edges of $E$ correspond
to non-edges of $\widetilde{E}$, it follows by the definition of $E$ that $%
u_{i}v_{j}\notin \widetilde{E}$ if and only if $u_{i}<_{P}u_{j}$. That is,
the non-edges of $\widetilde{E}$ between vertices of $U$ and vertices of $V$
follow the transitivity of the partial order $P$.

Since $H$ is a split graph, Lemma~\ref{split-no-AP5-double-AP6-lem} implies
that the $AC_{6}$ of $H$ is an $AP_{6}$, i.e.~an alternating path of length~$%
6$ (cf.~Figure~\ref{AC6-2-fig}). Furthermore, since $V$ induces a clique and 
$U$ induces an independent set in $H$, Lemma~\ref{split-AC6-alternating-lem}
implies that the vertices of the $AP_{6}$ in $H$ belong alternately to $U$
and to $V$. Thus let $u_{i},v_{j},u\,_{p},v_{q},u_{r},v_{s}$ be the vertices
of the $AP_{6}$ (where $u_{i}v_{j}\notin E_{H}$ according to our notation,
cf.~Definition~\ref{AC-2k-def}). Without loss of generality let $%
e_{1}=u\,_{p}v_{j}$, $e_{2}=u_{r}v_{q}$, and $e_{3}=u_{i}v_{s}$. Since the $%
AP_{6}$ has clauses in $\phi _{1}$ by assumption, note by the construction
of $\phi _{1}$ (cf.~Section~\ref{linear-interval-satisfiability-sec}) that
no two literals among $\{\ell _{e_{1}},\ell _{e_{2}},\ell _{e_{3}}\}$ are
one the negation of the other. Therefore no pair among the edges $%
\{e_{1},e_{2},e_{3}\}$ is in conflict, and thus Lemma~\ref%
{AC6-if-not-in-conflict-all-not-isolated-lem} implies that the edges $%
u_{p}v_{s},u_{i}v_{q},u_{r}v_{j}$ exist in $H$ and $%
e_{2}=u_{r}v_{q}||u_{p}v_{s}$, $e_{1}=u_{p}v_{j}||u_{i}v_{q}$, and $%
e_{3}=u_{i}v_{s}||u_{r}v_{j}$. Therefore $\ell _{u_{i}v_{q}}=\overline{\ell
_{e_{1}}}$, $\ell _{u_{p}v_{s}}=\overline{\ell _{e_{2}}}$, and~$\ell
_{u_{r}v_{j}}=\overline{\ell _{e_{3}}}$.

Since $e_{1}=u\,_{p}v_{j}$ and $(\ell _{e}\vee \ell _{e_{1}})$ is a clause
of $\phi _{2}^{\prime \prime }$ (and thus also of $\phi _{2}$), it follows
by the construction of $\phi _{2}$ (cf.~Section~\ref%
{linear-interval-satisfiability-sec}) that either $e=u_{a}v_{p}$ or $%
e=u_{j}v_{a}$ for some index $a\in \{1,2,\ldots ,n\}$.

\medskip

\textbf{Case~1.}~$e=u_{a}v_{p}$. Denote $E_{H}^{\prime }=E_{H}\setminus
E_{0} $. Then it follows by the construction of $\phi _{2}$ that $%
u_{a}v_{j}\notin E_{H}^{\prime }$, and thus either $u_{a}v_{j}\notin E_{H}$
or $u_{a}v_{j}\in E_{0}$. Furthermore, since $(\ell _{e}\vee \ell _{e_{1}})$
is a clause of $\phi _{2}^{\prime \prime }$ by assumption, it follows by
Definition~\ref{phi2-partition-def} that $e$ is a committed edge in $H$.
That is, there exists an edge $e^{\prime }=u_{b}v_{c}$ such that $e^{\prime
}||e$, and thus $\ell _{e^{\prime }}=\overline{\ell _{e}}$. Since $e^{\prime
}||e$, it follows that $u_{a}v_{c},u_{b}v_{p}\notin E_{H}$. Furthermore,
since $u_{b}v_{p},u_{p}v_{q}\notin E_{H}$, it follows that $u_{b}<_{P}u_{p}$
and $u_{p}<_{P}u_{q}$. Therefore $u_{b}<_{P}u_{q}$, since $P$ is a partial
order, and thus also $u_{b}v_{q}\notin E_{H}$.

Note that either $a=j$ or $a\neq j$ (cf.~Figures~\ref{phi-2-fig-1} and~\ref%
{phi-2-fig-2}, respectively. We distinguish now these two cases, which are
illustrated in Figures~\ref{a-equal-to-j-fig} and~\ref{a-not-equal-to-j-fig}%
, respectively. In these figures, the edges $e_{1},e_{2},e_{3}$ of the $%
AP_{6}$, as well as the edges $e$ and $e^{\prime }$, are drawn by thick
lines and all other edges are drawn by thin lines, while non-edges are
illustrated with dashed lines.

\medskip

\textbf{Case~1.1.}~$a=j$ (cf.~Figure~\ref{a-equal-to-j-fig}). Suppose that $%
u_{i}v_{c}\in E_{H}$. Then $u_{i}v_{c}||u_{a}v_{j}=u_{j}v_{j}$, since $%
u_{i}v_{j},u_{a}v_{c}\notin E_{H}$. Thus the edge $u_{j}v_{j}\in E_{0}$ is
committed, which is a contradiction by Lemma~\ref{ei-isolated-vertices-lem}.
Therefore $u_{i}v_{c}\notin E_{H}$. Suppose now that $u_{p}v_{c}\notin E_{H}$%
. Then $u_{b}v_{c}||u_{p}v_{p}$, since $u_{b}v_{p},u_{p}v_{c}\notin E_{H}$.
Thus the edge $u_{p}v_{p}\in E_{0}$ is committed, which is a contradiction
by Lemma~\ref{ei-isolated-vertices-lem}. Therefore $u_{p}v_{c}\in E_{H}$.
Furthermore $u_{p}v_{c}||u_{i}v_{q}$, since $u_{p}v_{q},u_{i}v_{c}\notin
E_{H}$, and thus $\ell _{u_{p}v_{c}}=\overline{\ell _{u_{i}v_{q}}}$.
Therefore, since $\ell _{u_{i}v_{q}}=\overline{\ell _{e_{1}}}$, it follows
that $\ell _{u_{p}v_{c}}=\ell _{e_{1}}$.

Suppose that $u_{a}v_{q}\notin E_{H}$, and thus $u_{a}<_{P}u_{q}$. Then,
since $u_{i}v_{j}\notin E_{H}$, it follows that $u_{i}<_{P}u_{j}$.
Therefore, since $a=j$ and $P$ is a partial order, it follows that $%
u_{i}<_{P}u_{q}$, and thus $u_{i}v_{q}\notin E_{H}$, which is a
contradiction. Therefore $u_{a}v_{q}\in E_{H}$. Furthermore $%
u_{a}v_{q}||u_{p}v_{c}$, since $u_{a}v_{c},u_{p}v_{q}\notin E_{H}$, and thus 
$\ell _{u_{a}v_{q}}=\overline{\ell _{u_{p}v_{c}}}$. Therefore, since $\ell
_{u_{p}v_{c}}=\ell _{e_{1}}$, it follows that $\ell _{u_{a}v_{q}}=\overline{%
\ell _{e_{1}}}$.

Since $u_{b}v_{q},u_{a}v_{c}\notin E_{H}$, it follows that $e^{\prime
}=u_{b}v_{c}||u_{a}v_{q}$, and thus $\ell _{e^{\prime }}=\overline{\ell
_{u_{a}v_{q}}}$. Therefore, since $\ell _{u_{a}v_{q}}=\overline{\ell _{e_{1}}%
}$, it follows that $\ell _{e^{\prime }}=\ell _{e_{1}}$. Finally, since $%
e^{\prime }||e$, it follows that $\ell _{e}=\overline{\ell _{e^{\prime }}}$,
and thus $\ell _{e}=\overline{\ell _{e_{1}}}$. Therefore the clause $(\ell
_{e}\vee \ell _{e_{1}})$ of $\phi _{2}^{\prime \prime }$ is a tautology,
which is a contradiction by Definition~\ref{phi2-partition-def}.

\medskip

\textbf{Case~1.2.}~$a\neq j$ (cf.~Figure~\ref{a-not-equal-to-j-fig}). Then $%
u_{a}v_{j}\notin E_{0}$. Thus, since~${u_{a}v_{j}\notin E_{H}^{\prime }}$,
it follows that $u_{a}v_{j}\notin E_{H}$. Suppose that $u_{a}v_{s}\in E_{H}$
(cf.~Figure~\ref{a-not-equal-to-j-fig}). Then $u_{a}v_{s}||u_{r}v_{j}$,
since~${u_{a}v_{j},u_{r}v_{s}\notin E_{H}}$, and thus $\ell _{u_{a}v_{s}}=%
\overline{\ell _{u_{r}v_{j}}}$. Therefore, since~$\ell _{u_{r}v_{j}}=%
\overline{\ell _{e_{3}}}$, it follows that~$\ell _{u_{a}v_{s}}=\ell _{e_{3}}$%
. Suppose that~${u_{a}v_{q}\notin E_{H}}$. Then ${u_{r}v_{q}||u_{a}v_{s}}$,
since ${u_{r}v_{s},u_{a}v_{q}\notin E_{H}}$. Therefore ${\ell _{u_{a}v_{s}}=%
\overline{\ell _{e_{2}}}}$, since ${\ell _{u_{r}v_{q}}=\ell _{e_{2}}}$.
Thus, since~${\ell _{u_{a}v_{s}}=\ell _{e_{3}}}$, it follows that ${\ell
_{e_{3}}=\overline{\ell _{e_{2}}}}$. This is a contradiction, since no two
literals among ${\{\ell _{e_{1}},\ell _{e_{2}},\ell _{e_{3}}\}}$ are one the
negation of the other. Therefore ${u_{a}v_{q}\in E_{H}}$. Moreover, since~${%
u_{a}v_{j},u_{p}v_{q}\notin E_{H}}$, it follows that~${%
u_{a}v_{q}||u_{p}v_{j}=e_{1}}$, and thus ${\ell _{u_{a}v_{q}}=\overline{\ell
_{e_{1}}}}$. Furthermore ${u_{a}v_{q}||u_{b}v_{c}=e^{\prime }}$, since~${%
u_{b}v_{q},u_{a}v_{c}\notin E_{H}}$. Therefore ${\ell _{u_{a}v_{q}}=%
\overline{\ell _{e^{\prime }}}}$. Thus ${\ell _{u_{a}v_{q}}=\ell _{e}}$,
since ${\ell _{e^{\prime }}=\overline{\ell _{e}}}$. Therefore, since ${\ell
_{u_{a}v_{q}}=\overline{\ell _{e_{1}}}}$ and ${\ell _{u_{a}v_{q}}=\ell _{e}}$%
, it follows that ${\ell _{e}=\overline{\ell _{e_{1}}}}$. Therefore the
clause $(\ell _{e}\vee \ell _{e_{1}})$ of $\phi _{2}^{\prime \prime }$ is a
tautology, which is a contradiction by Definition~\ref{phi2-partition-def}.

Therefore $u_{a}v_{s}\notin E_{H}$. Then also $u_{a}v_{s}\notin
E_{H}^{\prime }$, and thus $\phi _{2}$ has the clause $(\ell
_{u_{a}v_{p}}\vee \ell _{u_{p}v_{s}})=(\ell _{e}\vee \overline{\ell _{e_{2}}}%
)$, since $e=u_{a}v_{p}$ and $\ell _{u_{p}v_{s}}=\overline{\ell _{e_{2}}}$.
Furthermore, since both $e$ and $u_{p}v_{s}$ are committed in $H$ (as $%
e^{\prime }||e$ and $u_{r}v_{q}||u_{p}v_{s}$), the clause $(\ell _{e}\vee 
\overline{\ell _{e_{2}}})$ belongs to $\phi _{2}^{\prime \prime }$ by
Definition~\ref{phi2-partition-def}. 
\begin{figure}[tbh]
\centering 
\subfigure[] {\label{a-equal-to-j-fig} 
\includegraphics[scale=0.85]{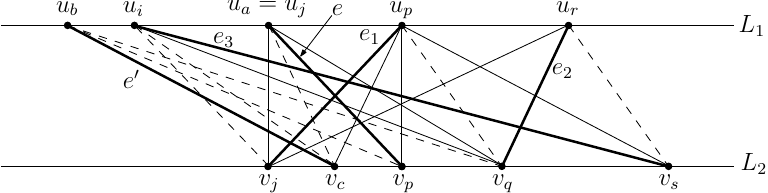}} 
\subfigure[] {\label{a-not-equal-to-j-fig} 
\includegraphics[scale=0.85]{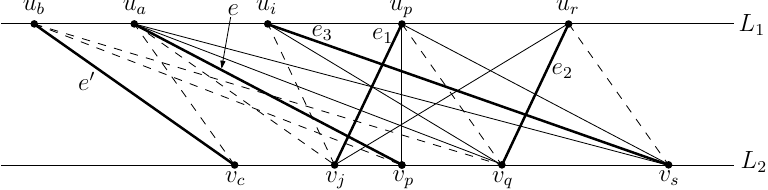}}
\caption{(a) The Case 1.1 and (b) the Case 1.2 in the proof of Lemma~\protect
\ref{congruent-clauses-phi-2-lem-a}.}
\label{a-equal-and-not-equal-to-j-fig}
\end{figure}

\textbf{Case~2.}~$e=u_{j}v_{a}$. This case is exactly symmetric to Case 1.
To see this, imagine exchanging the roles of $U$ and $V$, i.e.~$U$ induces
now a clique (instead of an independent set) and $V$ induces an independent
set (instead of a clique) in $H$. Imagine also flipping the lines $L_{1}$
and $L_{2}$ in Figure~\ref{a-equal-and-not-equal-to-j-fig} (i.e.~$L_{2}$
comes now above~$L_{1}$), such that the vertices of $U$ and $V$ still lie on
the lines~$L_{1}$ and~$L_{2}$, respectively. Similarly to Cases~1.1 and~1.2,
we distinguish the cases $a=p$ (Case~2.1) and $a\neq p$ (Case~2.2),
respectively. Then, Case~2.1 leads to a contradiction (similarly to
Case~1.1), and Case~2.2 implies that the clause $(\ell _{e}\vee \overline{%
\ell _{e_{3}}})$ belongs to $\phi _{2}^{\prime \prime }$ (instead of
the~clause~$(\ell _{e}\vee \overline{\ell _{e_{2}}})$~in~Case~1.2).

\medskip

Summarizing, if $e=u_{a}v_{p}$ then $\phi _{2}^{\prime \prime }$ includes
the clause $(\ell _{e}\vee \overline{\ell _{e_{2}}})$, while if $%
e=u_{j}v_{a} $ then $\phi _{2}^{\prime \prime }$ includes the clause $(\ell
_{e}\vee \overline{\ell _{e_{3}}})$. This completes the proof of the lemma.%
\qed
\end{proof}

\begin{lemma}
\label{congruent-clauses-phi-2-lem-b}Let $\{e_{1},e_{2},e_{3}\}$ be the
three edges of an $AC_{6}$ in $H$, which has clauses in $\phi _{1}$. Let $e$
be an edge of~$H$ such that $(\ell _{e}\vee \overline{\ell _{e_{1}}})$ is a
clause in $\phi _{2}^{\prime \prime }$. Then $\phi _{2}^{\prime \prime }$
contains also at least one of the clauses $\{(\ell _{e}\vee \ell
_{e_{2}}),(\ell _{e}\vee \ell _{e_{3}})\}$.
\end{lemma}

\begin{proof}
Since $H$ is a split graph, Lemma~\ref{split-no-AP5-double-AP6-lem} implies
that the $AC_{6}$ of $H$ is an $AP_{6}$, i.e.~an alternating path of length $%
6$ (cf.~Figure~\ref{AC6-2-fig}). Using the notation of Lemma~\ref%
{congruent-clauses-phi-2-lem-a}, denote by $V$ and $U$ the clique and the
independent set of $H$, respectively. Then the vertices of the $AP_{6}$ in $%
H $ belong alternately to $U$ and to $V$ by Lemma~\ref%
{split-AC6-alternating-lem}. That is, $u_{i},v_{j},u\,_{p},v_{q},u_{r},v_{s}$
are the vertices of the $AP_{6}$ in this order, for some vertices $%
u_{i},u\,_{p},u_{r}\in U$ and $v_{j},v_{q},v_{s}\in V$ (where $%
u_{i}v_{j},u\,_{p}v_{q},u_{r}v_{s}\notin E_{H}$ according to our notation,
cf.~Definition~\ref{AC-2k-def}). Without loss of generality let $%
e_{1}=u\,_{p}v_{j}$, $e_{2}=u_{r}v_{q}$, and $e_{3}=u_{i}v_{s}$. Then,
similarly to the preamble of the proof of Lemma~\ref%
{congruent-clauses-phi-2-lem-a}, it follows that the edges $e_{1}^{\prime
}=u_{i}v_{q}$, $e_{2}^{\prime}=u_{p}v_{s}$, and $e_{3}^{\prime }=u_{r}v_{j}$
exist in $H$ and $e_{1}=u_{p}v_{j}||u_{i}v_{q}=e_{1}^{\prime }$, $%
e_{2}=u_{r}v_{q}||u_{p}v_{s}=e_{2}^{\prime }$, and $%
e_{3}=u_{i}v_{s}||u_{r}v_{j}=e_{3}^{\prime }$. Therefore $\ell
_{e_{1}^{\prime }}=\overline{\ell _{e_{1}}}$, $\ell _{e_{2}^{\prime }}=%
\overline{\ell _{e_{2}}}$, and~$\ell _{e_{3}^{\prime }}=\overline{\ell
_{e_{3}}}$.

Since $u_{i}v_{j},u\,_{p}v_{q},u_{r}v_{s}\notin E_{H}$, it follows that the
vertices $u_{i},v_{q},u_{p},v_{s},u_{r},v_{j}$ (in this order) build an $AC_{6}$ in $H$,
where $\{e_{1}^{\prime },e_{2}^{\prime },e_{3}^{\prime }\}$ are its three
edges. Therefore, by applying Lemma~\ref{split-no-AP5-double-AP6-lem} on
this new $AC_{6}$, it follows that if $(\ell _{e}\vee \ell _{e_{1}^{\prime
}})$ is a clause in $\phi _{2}^{\prime \prime }$, then $\phi _{2}^{\prime
\prime }$ contains also at least one of the clauses $\{(\ell _{e}\vee 
\overline{\ell _{e_{2}^{\prime }}}),(\ell _{e}\vee \overline{\ell
_{e_{3}^{\prime }}})\}$. This completes the proof of the lemma, since $\ell
_{e_{1}^{\prime }}=\overline{\ell _{e_{1}}}$, $\ell _{e_{2}^{\prime }}=%
\overline{\ell _{e_{2}}}$, and~$\ell _{e_{3}^{\prime }}=\overline{\ell
_{e_{3}}}$.\qed
\end{proof}

The next corollary, which follows easily by Definition~\ref%
{gradually-mixed-def} and by Lemmas~\ref{disjoint-clauses-lem}-\ref%
{congruent-clauses-phi-2-lem-b}, allows us to use the linear time algorithm
for gradually mixed formulas (cf.~Theorem~\ref{gradually-mixed-algorithm-thm}%
) in order to solve the SAT problem on~${\phi _{1}\wedge \phi _{2}^{\prime
\prime }}$.

\begin{corollary}
\label{phi1-phi2''-gm-formula-cor}$\phi _{1}\wedge \phi _{2}^{\prime \prime
} $ is a gradually mixed formula.
\end{corollary}

\begin{proof}
First note that, by construction, every clause of $\phi _{1}$ has $3$
literals and every clause of $\phi _{2}$ has $2$ literals. Furthermore, the
first condition of Definition~\ref{gradually-mixed-def} is satisfied due to
Lemma~\ref{disjoint-clauses-lem}. Regarding the second condition of
Definition~\ref{gradually-mixed-def}, consider an arbitrary $AC_{6}$ in $H$
that has clauses in $\phi _{1}$. Denote by $\{e_{1},e_{2},e_{3}\}$ the three
edges of this $AC_{6}$. Then this $AC_{6}$ contributes to the formula $\phi
_{1}$ by the two (dual) clauses $\alpha =(\ell _{e_{1}}\vee \ell
_{e_{2}}\vee \ell _{e_{3}})$ and $\overline{\alpha }=(\overline{\ell _{e_{1}}%
}\vee \overline{\ell _{e_{2}}}\vee \overline{\ell _{e_{3}}})$, cf.~the
construction of $\phi _{1}$ in Section~\ref%
{linear-interval-satisfiability-sec}. If $(\ell _{e}\vee \ell _{e_{1}})$ is
a clause of $\phi _{2}^{\prime \prime }$, then Lemma~\ref%
{congruent-clauses-phi-2-lem-a} implies that $\phi _{2}^{\prime \prime }$
includes also at least one of the clauses $\{(\ell _{e}\vee \overline{\ell
_{e_{2}}}),(\ell _{e}\vee \overline{\ell _{e_{3}}})\}$. Similarly, if $(\ell
_{e}\vee \overline{\ell _{e_{1}}})$ is a clause of $\phi _{2}^{\prime \prime
}$, Lemma~\ref{congruent-clauses-phi-2-lem-b} implies that $\phi
_{2}^{\prime \prime }$ includes also at least one of the clauses $\{(\ell
_{e}\vee \ell _{e_{2}}),(\ell _{e}\vee \ell _{e_{3}})\}$. Therefore the
second condition of Definition~\ref{gradually-mixed-def} is also satisfied
for the formula $\phi _{1}\wedge \phi _{2}^{\prime \prime }$, i.e.~$\phi
_{1}\wedge \phi _{2}^{\prime \prime }$ is a gradually mixed formula.\qed
\end{proof}

\subsection{The recognition algorithm\label{recognition-algorithm-subsec}}

In this section we use Corollary~\ref{phi1-phi2''-gm-formula-cor} 
to design an algorithm that decides satisfiability on $\phi _{1}\wedge \phi
_{2}$ in time linear to its size (cf.~Theorem~\ref%
{formula-equivalent-gm-formula-thm}). This will enable us to combine the
results of Sections~\ref{linear-interval-sec} and~\ref%
{linear-interval-satisfiability-sec} to recognize efficiently
whether a given graph is a PI graph, or equivalently, due to Theorem~\ref%
{PI-char-thm}, whether a given partial order $P$ is the intersection of a
linear order $P_{1}$ and an interval order $P_{2}$.

\begin{theorem}
\label{formula-equivalent-gm-formula-thm}$\phi _{1}\wedge \phi _{2}$ is
satisfiable if and only if $\phi _{1}\wedge \phi _{2}^{\prime \prime }$ is
satisfiable. Given a satisfying truth assignment of $\phi _{1}\wedge \phi
_{2}^{\prime \prime }$ we can compute a satisfying truth assignment of $\phi
_{1}\wedge \phi _{2}$ in linear time.
\end{theorem}

\begin{proof}
If $\phi _{1}\wedge \phi _{2}$ is satisfiable then $\phi _{1}\wedge \phi
_{2}^{\prime \prime }$ is also satisfiable as a sub-formula of $\phi
_{1}\wedge \phi _{2}$. Conversely, suppose that $\phi _{1}\wedge \phi
_{2}^{\prime \prime }$ is satisfiable and let $\tau $ be a satisfying
assignment. Consider an arbitrary clause $\gamma =(\ell _{e_{1}}\vee \ell
_{e_{2}})$ of the sub-formula $\phi _{2}^{\prime }$ of $\phi _{2}$, cf.
Definition~\ref{phi2-partition-def}. If $\gamma $ is a tautology then $%
\gamma $ is satisfied by any truth assignment of $\phi $, and thus also by $%
\tau $. Assume now that $\gamma $ is not a tautology. Then at least one of
its literals $\{\ell _{e_{1}},\ell _{e_{2}}\}$ corresponds to an uncommitted
edge by Definition~\ref{phi2-partition-def}. Recall now by the construction
of $\phi _{1}$ (cf.~Section~\ref{linear-interval-satisfiability-sec}) that
in every clause of $\phi _{1}$, no literal is the negation of another
literal. Thus, for every clause of $\phi _{1}$, no pair among the three
edges in the corresponding $AC_{6}$ is in conflict. Therefore Lemma~\ref%
{AC6-if-not-in-conflict-all-not-isolated-lem} implies that all three edges
of such an $AC_{6}$ are committed. Thus, for every literal $\ell _{e}$ of $%
\phi _{2}^{\prime }$, which corresponds to an uncommitted edge $e$, neither $%
\ell _{e}$ nor $\overline{\ell _{e}}$ appears in $\phi _{1}$. Furthermore
recall that $\phi _{2}^{\prime \prime }$ does not include any literal $\ell
_{e}$ of any uncommitted edge $e$ of $H$ by~Definition~\ref%
{phi2-partition-def}.

Summarizing, for every literal $\ell _{e}$ of $\phi _{2}^{\prime }$, which
corresponds to an uncommitted edge $e$, neither $\ell _{e}$ nor $\overline{%
\ell _{e}}$ appears in $\phi _{1}\wedge \phi _{2}^{\prime \prime }$. That
is, the truth assignment $\tau $ of $\phi _{1}\wedge \phi _{2}$ does not
assign any value to the literal $\ell _{e}$. Furthermore, since $e$ is
uncommitted, no edge of $H$ is assigned the literal $\overline{\ell _{e}}$.
Therefore we can extend (in linear time) the truth assignment $\tau $ to a
truth assignment $\tau ^{\prime }$ that satisfies both $\phi _{1}\wedge \phi
_{2}^{\prime \prime }$ and $\phi _{2}^{\prime }$, by setting $\ell _{e}=1$
for all uncommitted edges $e$ of $H$. That is, $\tau ^{\prime }$ satisfies
the formula $\phi _{1}\wedge \phi _{2}$. Therefore $\phi _{1}\wedge \phi
_{2} $ is satisfiable if and only if $\phi _{1}\wedge \phi _{2}^{\prime
\prime }$ is satisfiable. This completes the proof of the theorem.\qed
\end{proof}

Now we are ready to present our recognition algorithm for PI graphs
(Algorithm~\ref{PI-graph-recognition-alg}). Its correctness and timing
analysis is established in Theorem~\ref{PI-algorithm-analysis-thm}.

\begin{algorithm}[t!]
\caption{Recognition of PI graphs} \label{PI-graph-recognition-alg}
\begin{algorithmic}[1]
\REQUIRE{A graph $G=(V,E)$}
\ENSURE{A PI representation $R$ of $G$, or the announcement that $G$ is not a PI graph}

\medskip

\IF{$G$ is a trapezoid graph} \label{PI-alg-0-1}
     \STATE{Compute a partial order $P$ of the complement $\overline{G}$} \label{PI-alg-0-2}
\ENDIF

\STATE{\textbf{else return} ``$G$ is not a PI graph''} \label{PI-alg-0-3}

\medskip

\STATE{Compute the domination bipartite graph $C(P)$ from $P$} \label{PI-alg-1}

\STATE{$\widetilde{G} \leftarrow \widehat{C}(P)$} \label{PI-alg-2}

\STATE{Compute the associated split graph $H$ of $\widetilde{G}$} \label{PI-alg-3}

\STATE{Compute the conflict graph $H^{\ast}$ of $H$} \label{PI-alg-4}

\vspace{0.1cm}

\IF{$H^{\ast}$ is bipartite} \label{PI-alg-5}
     \STATE{Compute a $2$-coloring $\chi_0$ of the vertices of $H^{\ast}$} \label{PI-alg-6}
     \STATE{Compute the formulas $\phi_1$ and $\phi_2$} \label{PI-alg-7}
     
     \vspace{0.1cm}
     
     \IF{$\phi_{1}\wedge \phi_{2}$ is satisfiable} \label{PI-alg-8}
          \STATE{Compute a satisfying truth assignment $\tau$ of $\phi_{1}\wedge \phi_{2}$ by Theorem~\ref{formula-equivalent-gm-formula-thm}} \label{PI-alg-9}
          \STATE{Compute from $\tau$ a linear-order cover of $\widetilde{G}$ by Algorithm~\ref{linear-interval-cover-from-assignment-alg}} \label{PI-alg-10}
          \STATE{Compute a PI representation $R$ of $G$ by Algorithm~\ref{alg-PI-repr-constr}} \label{PI-alg-11}
     \ELSE \label{PI-alg-12}
          \RETURN{``$G$ is not a PI graph''} \label{PI-alg-13}
     \ENDIF
\ELSE \label{PI-alg-14}
     \RETURN{``$G$ is not a PI graph''} \label{PI-alg-15}
\ENDIF

\vspace{0.1cm}
\RETURN{$R$} \label{PI-alg-16}
\end{algorithmic}
\end{algorithm}

\begin{theorem}
\label{PI-algorithm-analysis-thm}Let $G=(V,E)$ be a graph and $\overline{G}%
=(V,\overline{E})$ be its complement, where $|V|=n$ and $|\overline{E}|=m$.
Then Algorithm~\ref{PI-graph-recognition-alg} constructs in $O(n^{2}m)$ time
a PI representation of~$G$, or it announces that~$G$ is not a PI graph.
\end{theorem}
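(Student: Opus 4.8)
The plan is to establish correctness and the running time separately, in each case reducing to the subroutine results already proved. For correctness I would first note that the class of PI graphs is contained in the class of trapezoid graphs, so if $G$ is a PI graph then $G$ is in particular a cocomparability graph and a trapezoid graph; hence the trapezoid test in line~\ref{PI-alg-0-1} succeeds and returns a partial order $P$ of $\overline{G}$ (computed in $O(n^{2})$ time by Theorem~\ref{trapezoid-chain-number-2-thm} together with the algorithm of~\cite{MaSpinrad94}). I would then chain the three characterizations already in hand: by Corollary~\ref{linear-interval-col}, $G$ is a PI graph if and only if $\widetilde{G}=\widehat{C}(P)$ is linear-interval coverable; and by Theorem~\ref{linear-interval-cover-by-satisfiability-thm}, $\widetilde{G}$ is linear-interval coverable if and only if $\phi_{1}\wedge\phi_{2}$ is satisfiable. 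Consequently, when $G$ is a PI graph, Lemma~\ref{conflict-bilocor-necessity-lem} guarantees that $H^{\ast}$ is bipartite (so line~\ref{PI-alg-5} passes) and that $\phi_{1}\wedge\phi_{2}$ is satisfiable (so line~\ref{PI-alg-8} passes); the representation is then produced by Algorithm~\ref{linear-interval-cover-from-assignment-alg} and Algorithm~\ref{alg-PI-repr-constr}, which are correct by Theorems~\ref{linear-interval-cover-by-satisfiability-thm} and~\ref{PI-repr-constr-thm}.

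Conversely, I would check that every branch announcing ``$G$ is not a PI graph'' is justified. If line~\ref{PI-alg-0-1} fails, then $G$ is not a trapezoid graph and hence not a PI graph by containment. If $H^{\ast}$ is not bipartite, then by the contrapositive of Lemma~\ref{conflict-bilocor-necessity-lem} the graph $\widetilde{G}$ is not linear-interval coverable, so $G$ is not a PI graph by Corollary~\ref{linear-interval-col}. If $\phi_{1}\wedge\phi_{2}$ is unsatisfiable, then $\widetilde{G}$ is not linear-interval coverable by Theorem~\ref{linear-interval-cover-by-satisfiability-thm}, so again $G$ is not a PI graph. Finally, whenever the algorithm returns a representation $R$, this $R$ is a genuine PI representation by Theorem~\ref{PI-repr-constr-thm}, so $G$ is indeed a PI graph. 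This settles correctness in both directions.

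For the running time I would account for each line. Lines~\ref{PI-alg-0-1}--\ref{PI-alg-3} run in $O(n^{2})$ time: trapezoid recognition and the computation of $P$ cost $O(n^{2})$, and $C(P)$, $\widetilde{G}=\widehat{C}(P)$ and the associated split graph $H$ are graphs on $O(n)$ vertices with $O(n^{2})$ edges read off directly from $P$. The part I expect to be the main obstacle is to bound the construction of $H^{\ast}$ (line~\ref{PI-alg-4}) together with the formulas $\phi_{1}$ and $\phi_{2}$ (line~\ref{PI-alg-7}) by $O(n^{2}m)$. The key is that $m=|\overline{E}|$ is exactly the number of comparable pairs of $P$, i.e.\ the number of edges of $C(P)$, equivalently the number of \emph{non}-edges of $\widetilde{G}$ among the $U$--$V$ pairs. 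Every conflict is witnessed by such non-edges: a conflict $u_{i}v_{j}\,\|\,u_{k}v_{\ell}$ forces $u_{i}v_{\ell},u_{k}v_{j}\notin\widetilde{E}$, i.e.\ $u_{i}<_{P}u_{\ell}$ and $u_{k}<_{P}u_{j}$, and likewise every $AC_{6}$ in $H$ and every clause of $\phi_{2}$ is pinned down by a bounded number of non-edges together with a choice of a single ``middle'' vertex. The technical heart of the argument is therefore to reorganize the enumeration so that it iterates over the $m$ non-edges of $\widetilde{G}$ (rather than over all $O(n^{4})$ pairs of edges of $H$), thereby bounding the number of conflicts, $AC_{6}$'s and clauses produced, and the time to test each in $O(1)$ via the adjacency matrix of $H$, by $O(n^{2}m)$; bipartiteness of $H^{\ast}$ and the coloring $\chi_{0}$ then follow in time linear in the size of $H^{\ast}$, which is $O(n^{2}m)$.

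The remaining lines are cheap. The satisfiability test and the computation of a satisfying assignment $\tau$ (lines~\ref{PI-alg-8}--\ref{PI-alg-9}) run in time linear in the size of $\phi_{1}\wedge\phi_{2}$, hence $O(n^{2}m)$, by Theorem~\ref{formula-equivalent-2SAT-thm}; and the two final construction steps (lines~\ref{PI-alg-10}--\ref{PI-alg-11}) cost $O(n^{2})$ each by Theorems~\ref{linear-interval-cover-by-satisfiability-thm} and~\ref{PI-repr-constr-thm}. Summing the contributions of all lines yields the claimed $O(n^{2}m)$ bound, so that the only step requiring genuine care is the conflict-and-formula enumeration described above.
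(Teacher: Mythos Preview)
Your proposal is correct and follows essentially the same approach as the paper: the correctness argument is identical (chaining Corollary~\ref{linear-interval-col}, Lemma~\ref{conflict-bilocor-necessity-lem}, Theorem~\ref{linear-interval-cover-by-satisfiability-thm}, and Theorem~\ref{PI-repr-constr-thm} through the branches of the algorithm), and for the running time you correctly identify that the enumeration of conflicts, $AC_{6}$'s, and $\phi_{2}$-clauses should be organized around the $m$ non-edges of $\widetilde{G}$, which is precisely what the paper does via Claim~\ref{phi-1-implementation-claim} and its accompanying implementation. The paper obtains slightly sharper intermediate bounds (e.g., $|E(H^{\ast})|=O(m^{2})$ and $|\phi_{2}|=O(n(n+m))$, and it uses Lemma~\ref{disjoint-clauses-lem} to cap the number of distinct $\phi_{1}$-clauses at $O(n^{2})$), but all of these sit within your $O(n^{2}m)$ budget.
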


\begin{proof}
If the given graph $G$ is a trapezoid graph, then Algorithm~\ref%
{PI-graph-recognition-alg} computes in line~\ref{PI-alg-0-2} a partial order~%
$P$ of its complement $\overline{G}$. Otherwise, if $G$ is not a trapezoid
graph, then clearly it is also not a~PI graph, and thus the algorithm
correctly announces in line~\ref{PI-alg-0-3} that $G$ is not a~PI graph.

Let $C(P)$ be the domination bipartite graph of the partial order $P$ (cf.
Definition~\ref{C(P)-def}), and let $\widetilde{G}=\widehat{C}(P)$ be the
bipartite complement of $C(P)$, which are computed in lines~\ref{PI-alg-1}
and~\ref{PI-alg-2} of Algorithm~\ref{PI-graph-recognition-alg},
respectively. Furthermore let $H$ be the associated split graph of $%
\widetilde{G}$ (cf.~Definition~\ref{bipartite-split-def}) and $H^{\ast }$ be
the conflict graph of $H$ (cf.~Definition~\ref{conflict-graph-def}), which
are computed in lines~\ref{PI-alg-3} and~\ref{PI-alg-4} of Algorithm~\ref%
{PI-graph-recognition-alg}, respectively. If $H^{\ast }$ is not bipartite,
i.e.~if $\chi (H^{\ast })>2$, then $\widetilde{G}$ is not linear-interval
coverable by Lemma~\ref{conflict-bilocor-necessity-lem}, and thus $G$ is not
a PI graph by Corollary~\ref{linear-interval-col}. Therefore Algorithm~\ref%
{PI-graph-recognition-alg} correctly announces in line~\ref{PI-alg-15} that $%
G$ is not a PI graph if $H^{\ast }$ is not bipartite.

Suppose now that $H^{\ast }$ is bipartite, i.e.~$\chi (H^{\ast })\leq 2$.
Let $\chi _{0}$ be a $2$-coloring of the vertices of $H^{\ast }$, which is
computed in line~\ref{PI-alg-6} of Algorithm~\ref{PI-graph-recognition-alg}.
Furthermore let $\phi _{1}$ and $\phi _{2}$ be the Boolean formulas that can
be computed by Algorithms~\ref{alg-phi_1-constr} and~\ref{alg-phi_2-constr},
respectively (cf.~line~\ref{PI-alg-7} of Algorithm~\ref%
{PI-graph-recognition-alg}). If the formula $\phi _{1}\wedge \phi _{2}$ is
not satisfiable, then $\widetilde{G}$ is not linear-interval coverable by
Theorem~\ref{linear-interval-cover-by-satisfiability-thm}, and thus $G$ is
not a PI graph by Corollary~\ref{linear-interval-col}. Therefore Algorithm~%
\ref{PI-graph-recognition-alg} correctly announces in line~\ref{PI-alg-13}
that $G$ is not a PI graph if $\phi _{1}\wedge \phi _{2}$ is not satisfiable.

Suppose now that $\phi _{1}\wedge \phi _{2}$ is satisfiable, and let $\tau $
be a satisfying truth assignment of $\phi _{1}\wedge \phi _{2}$, as it is
computed in line~\ref{PI-alg-9} of Algorithm~\ref{PI-graph-recognition-alg}.
Then $\widetilde{G}$ is linear-interval coverable by Theorem~\ref%
{linear-interval-cover-by-satisfiability-thm}, and thus $G$ is a PI graph by
Corollary~\ref{linear-interval-col}. Furthermore, given $\tau $, we can
compute a linear-interval cover of $\widetilde{G}$ using Algorithm~\ref%
{linear-interval-cover-from-assignment-alg} (cf.~line~\ref{PI-alg-10} of
Algorithm~\ref{PI-graph-recognition-alg}). Finally, given this
linear-interval cover of $\widetilde{G}$, we can compute a PI representation 
$R$ of $G$ using Algorithm~\ref{alg-PI-repr-constr} (cf.~line~\ref{PI-alg-11}
of Algorithm~\ref{PI-graph-recognition-alg}). Thus, if $\phi _{1}\wedge \phi
_{2}$ is satisfiable, Algorithm~\ref{PI-graph-recognition-alg} correctly
returns $R$ in line~\ref{PI-alg-16}.

\medskip

\textbf{Time complexity}.~First note that the complement $\overline{G}$ of $%
G $ can be computed in $O(n^{2})$ time, since both $G$ and $\overline{G}$
have $n$ vertices. Furthermore, using the algorithm of~\cite{MaSpinrad94} we
can decide in $O(n^{2})$ time whether $G$ is a trapezoid graph, and within
the same time bound we can compute a trapezoid representation of $G$, if it
exists. Suppose in the following that $G$ is a trapezoid graph. Then we can
then compute in $O(n^{2})$ time a partial order $P$ of the complement $%
\overline{G}$ of $G$ as follows: $u<_{P}v$ if and only if the trapezoid for
vertex $u$ lies entirely to the left of the trapezoid for vertex $v$ in this
trapezoid representation of $G$. Therefore, lines~\ref{PI-alg-0-1}-\ref%
{PI-alg-0-3} of Algorithm~\ref{PI-graph-recognition-alg} can be executed in $%
O(n^{2})$ time in total. Note that we choose to compute the partial order $P$
using the trapezoid graph recognition algorithm of~\cite{MaSpinrad94}, in
order to achieve the $O(n^{2})$ time bound. Alternatively we could solve the
transitive orientation problem on $\overline{G}$ using the standard forcing
algorithm with $O(nm)$ running time (note that $m$ is the number of edges of 
$\overline{G}$).

Denote $\widetilde{G}=(U,V,\widetilde{E})$, where $U=\{u_{1},u_{2},\ldots
,u_{n}\}$ and $V=\{v_{1},v_{2},\ldots ,v_{n}\}$. Furthermore denote $%
E_{0}=\{u_{i}v_{i}\ |\ 1\leq i\leq n\}$. Then $H=(U,V,E_{H})$, where $E_{H}=%
\widetilde{E}\cup (V\times V)$ by Definition~\ref{bipartite-split-def}.
Since $C(P)$ and $H$ have $2n$ vertices each, each of the lines~\ref%
{PI-alg-1}-\ref{PI-alg-3} of Algorithm~\ref{PI-graph-recognition-alg} can be
computed by a straightforward implementation in $O(n^{2})$ time. Note that
the partial order $P$ has ${m}$ pairs of comparable elements, since the
complement $\overline{G}$ of $G$ has $m$ edges. Therefore the domination
bipartite graph $C(P)$ of $P$ has ${m}$ edges (cf.~Definition~\ref{C(P)-def}%
), and thus its bipartite complement $\widetilde{G}=\widehat{C}(P)$ has $|%
\widetilde{E}|=n^{2}-m$ edges.

Consider a pair $\{e,e^{\prime }\}$ of edges of $H$ that are in conflict,
i.e.~$e||e^{\prime }$ in $H$. Then $e,e^{\prime }\notin V\times V$ by
Observation~\ref{split-graph-clique-no-conflict-obs}, since $H$ is a split
graph and $V$ induces a clique in $H$. Therefore both $e$ and $e^{\prime }$
are edges of $\widetilde{G}$, i.e.~$e,e^{\prime }\in \widetilde{E}$, and
thus $e=u_{i}v_{j}$ and $e^{\prime }=u_{p}v_{q}$ for some indices $%
i,j,p,q\in \{1,2,\ldots ,n\}$. Furthermore, since $e$ and $e^{\prime }$ are
in conflict, it follows that $u_{i}v_{q},u_{p}v_{j}\notin \widetilde{E}$.
That is, every pair of conflicting edges in $H$ corresponds to exactly one
pair $\{u_{i}v_{q},u_{p}v_{j}\}$ of non-edges of $\widetilde{G}=\widehat{C}%
(P)$. Equivalently, every edge in the conflict graph $H^{\ast }$ of $H$
corresponds to exactly one pair of edges of $C(P)$. Since $C(P)$ has $m$
edges, it follows that the conflict graph $H^{\ast }$ has at most $O(m^{2})$
edges. Furthermore note that the conflict graph $H^{\ast }$ has ${\binom{n}{2%
}+}|\widetilde{E}|=O(n^{2})$ vertices, since $H$ has ${\binom{n}{2}+}|%
\widetilde{E}|$ edges. Therefore the conflict graph $H^{\ast }$ can be
computed in $O(n^{2}+m^{2})$ time (cf.~line~\ref{PI-alg-4} of Algorithm~\ref%
{PI-graph-recognition-alg}).

Note now that in time linear to the size of $H^{\ast }$, we can check
whether $H^{\ast }$ is bipartite, and we can compute a $2$-coloring $\chi
_{0}$ of the vertices of $H^{\ast }$, if one exists. Therefore lines~\ref%
{PI-alg-5}-\ref{PI-alg-6} of Algorithm~\ref{PI-graph-recognition-alg} can be
executed in $O(n^{2}+m^{2})$ time. Furthermore, in time linear to the size
of $H^{\ast }$, i.e.~in $O(n^{2}+m^{2})$ time, we can compute the connected
components $C_{1},C_{2},\ldots ,C_{k}$ of $H^{\ast }$. Then, having already
computed the $2$-coloring $\chi _{0}$ and the connected components $%
C_{1},C_{2},\ldots ,C_{k}$ of $H^{\ast }$, we can assign to every edge $e$
of $H$ the literal $\ell _{e}\in \{x_{i},\overline{x_{i}}\ |\ 1\leq i\leq
k\} $ (cf.~Section~\ref{linear-interval-satisfiability-sec}). This can be
done in $O(n^{2})$ time, since $H$ has ${\binom{n}{2}+}|\widetilde{E}%
|=O(n^{2})$ edges.

Now we bound the size of the formulas $\phi _{1}$ and $\phi _{2}$ that are
computed by Algorithms~\ref{alg-phi_1-constr} and~\ref{alg-phi_2-constr},
respectively. Regarding the size of $\phi _{2}$, note that, by the
construction of $\phi _{2}$, if $(\ell _{e}\vee \ell _{e^{\prime }})$ is a
clause of $\phi _{2}$, then $e=u_{i}v_{t}$, $e^{\prime }=u_{t}v_{j}$, and $%
u_{i}v_{j}\notin E_{H}\setminus E_{0}$, for some indices $i,j,t\in
\{1,2,\ldots ,n\}$. That is, for every index $t\in \{1,2,\ldots ,n\}$ and
for every pair $(i,j)$ of indices in the set $\{(i,j)\ |\ i=j$ or $%
u_{i}v_{j}\notin E_{H}\}$, the formula $\phi _{2}$ has at most one clause.
Note that every pair $(i,j)$ of the set $\{(i,j)\ |\ u_{i}v_{j}\notin
E_{H}\} $ corresponds to exactly one edge $u_{i}v_{j}$ of the bipartite
graph $C(P)$. Thus, since $C(P)$ has $m$ edges, it follows that $|\{(i,j)\
|\ i=j$ or $u_{i}v_{j}\notin E_{H}\}|\leq n+m$. Therefore $\phi _{2}$ has at
most $n(n+m) $ clauses, and thus $\phi _{2}$ can be computed in $O(n(n+m))$
time by Algorithm~\ref{alg-phi_2-constr}.

Regarding the size of $\phi _{1}$, recall first that every connected
component $C_{i}$ of the conflict graph $H^{\ast }$ has been assigned
exactly one Boolean variable $x_{i}$, where $i\in \{1,2,\ldots ,k\}$.
Furthermore recall that every edge $e$ of $H$ has been assigned a literal $%
\ell _{e}\in \{x_{i},\overline{x_{i}}\ |\ 1\leq i\leq k\}$. Therefore, since
every clause of $\phi _{1}$ appears only once in $\phi _{1}$ (cf.~lines~\ref%
{algorithm-phi_1-4}-\ref{algorithm-phi_1-5} of Algorithm~\ref%
{alg-phi_1-constr}), it follows by the construction of $\phi _{1}$ and by
Lemma~\ref{disjoint-clauses-lem} that $\phi _{1}$ has at most $2\frac{k}{3}$
clauses. Furthermore note that $k=O(n^{2})$, since $H^{\ast }$ has $O(n^{2})$
vertices. Thus $\phi _{1}$ has at most $O(n^{2})$ clauses.

\begin{myclaim}
\label{phi-1-implementation-claim}The following two statements are
equivalent:

\begin{enumerate}
\item[(a)] the formula $\phi _{1}$ contains the clauses $\alpha =(\ell
_{e}\vee \ell _{e^{\prime }}\vee \ell _{e^{\prime \prime }})$ and $\alpha
^{\prime }=(\overline{\ell _{e}}\vee \overline{\ell _{e^{\prime }}}\vee 
\overline{\ell _{e^{\prime \prime }}})$,

\item[(b)] there exist four distinct vertices $a,b,c,d$ in $H$, such that:%

\begin{itemize}
\item $ab\notin E_{H}$ and $bc,cd,da\in E_{H}$,

\item either $a,c\in U$ and $b,d\in V$, or $a,c\in V$ and $b,d\in U$,%

\item the edges $bc,cd,da$ are committed in $H$,

\item $\ell _{bc}=\ell _{e}$, $\ell _{cd}=\overline{\ell _{e^{\prime }}}$, $%
\ell _{da}=\ell _{e^{\prime \prime }}$, and

\item $\ell _{e}\neq \overline{\ell _{e^{\prime }}}$, $\ell _{e^{\prime
}}\neq \overline{\ell _{e^{\prime \prime }}}$, $\ell _{e}\neq \overline{\ell
_{e^{\prime \prime }}}$.
\end{itemize}
\end{enumerate}
\end{myclaim}

\begin{proofofclaim}[Proof of Claim \protect\ref{phi-1-implementation-claim}]
((a) $\Rightarrow $ (b)) Consider first a pair of clauses $\alpha =(\ell
_{e}\vee \ell _{e^{\prime }}\vee \ell _{e^{\prime \prime }})$ and $\alpha
^{\prime }=(\overline{\ell _{e}}\vee \overline{\ell _{e^{\prime }}}\vee 
\overline{\ell _{e^{\prime \prime }}})$ in $\phi _{1}$. These clauses
correspond to an $AC_{6}$ on the edges $\{e,e^{\prime },e^{\prime \prime }\}$
of $H$ by the construction of $\phi _{1}$. Furthermore, since $H$ is a split
graph, Lemma~\ref{split-no-AP5-double-AP6-lem} implies that this $AC_{6}$ of 
$H$ is an $AP_{6}$, i.e.~an alternating path of length $6$ (cf.~Figure~\ref%
{AC6-2-fig}). Let $w_{1},w_{2},w_{3},w_{4},w_{5},w_{6}$ be the vertices of
this $AP_{6}$, such that $e=w_{2}w_{3}$, $e^{\prime }=w_{4}w_{5}$, and $%
e^{\prime \prime }=w_{6}w_{1}$ (note that there always exists an enumeration
of the vertices of the $AP_{6}$ such that the edges $e,e^{\prime },e^{\prime
\prime }$ are met in this order on the $AP_{6}$). Then, since $V$ induces a
clique in $H$ and $U$ induces an independent set in $H$, Lemma~\ref%
{split-AC6-alternating-lem} implies that either $w_{1},w_{3},w_{5}\in U$ and 
$w_{2},w_{4},w_{6}\in V$, or $w_{1},w_{3},w_{5}\in V$ and $%
w_{2},w_{4},w_{6}\in U$. Since $\ell _{e}\neq \overline{\ell _{e^{\prime }}}$%
, $\ell _{e^{\prime }}\neq \overline{\ell _{e^{\prime \prime }}}$, and $\ell
_{e}\neq \overline{\ell _{e^{\prime \prime }}}$ (cf.~line~\ref%
{algorithm-phi_1-3} of Algorithm~\ref{alg-phi_1-constr}), it follows that no
pair among the edges $\{e,e^{\prime },e^{\prime \prime }\}$ is in conflict
in $H$. Therefore Lemma~\ref{AC6-if-not-in-conflict-all-not-isolated-lem}
implies that the edges $w_{3}w_{6},w_{4}w_{1},w_{5}w_{2}$ exist in $H$ and $%
e^{\prime }||w_{3}w_{6}$, $e||w_{4}w_{1}$, and $e^{\prime \prime
}||w_{5}w_{2}$. Thus all six edges $\{e,e^{\prime },e^{\prime \prime
},w_{3}w_{6},w_{4}w_{1},w_{5}w_{2}\}$ are committed. Furthermore $\ell
_{w_{4}w_{1}}=\overline{\ell _{e}}$, $\ell _{w_{3}w_{6}}=\overline{\ell
_{e^{\prime }}}$, and $\ell _{w_{5}w_{2}}=\overline{\ell _{e^{\prime \prime
}}}$. Thus the vertices $a=w_{1}$, $b=w_{2}$, $c=w_{3}$, and $d=w_{6}$ of $H$
satisfy the conditions of the part (b) of the claim.

((b) $\Rightarrow $ (a)) Conversely, consider four vertices $a,b,c,d$ in $H$%
, as specified in the part (b) of the claim. Then, since the edge $cd$ is
committed, there exists an edge $pq\in E_{H}$ such that $pc,qd\notin E_{H}$,
and thus $cd||pq$. Then $\ell _{pq}=\overline{\ell _{cd}}$. Therefore, since 
$\ell _{cd}=\overline{\ell _{e^{\prime }}}$, it follows that $\ell
_{pq}=\ell _{e^{\prime }}$. Thus there exists an $AC_{6}$ in $H$ on the
vertices $a,b,c,p,q,d$, where $\ell _{bc}=\ell _{e}$, $\ell _{pq}=\ell
_{e^{\prime }}$, and $\ell _{da}=\ell _{e^{\prime \prime }}$. Furthermore,
since $\ell _{e}\neq \overline{\ell _{e^{\prime }}}$, $\ell _{e^{\prime
}}\neq \overline{\ell _{e^{\prime \prime }}}$, and $\ell _{e}\neq \overline{%
\ell _{e^{\prime \prime }}}$ by assumption, it follows by the construction
of $\phi _{1}$ (cf.~Algorithm~\ref{alg-phi_1-constr}) that $\phi _{1}$
contains the clauses $\alpha =(\ell _{e}\vee \ell _{e^{\prime }}\vee \ell
_{e^{\prime \prime }})$ and $\alpha ^{\prime }=(\overline{\ell _{e}}\vee 
\overline{\ell _{e^{\prime }}}\vee \overline{\ell _{e^{\prime \prime }}})$.
\end{proofofclaim}

\medskip

Now, due to Claim~\ref{phi-1-implementation-claim}, we can implement
Algorithm~\ref{alg-phi_1-constr} for the computation of $\phi _{1}$ in time~$%
{O(n^{2}m+m^{2})}$ as follows. Recall first that $C(P)$ has $m$ edges. We
iterate for every edge~$u_{i}v_{j}$ of~$C(P)$, i.e.~for every \emph{non-edge}%
~${u_{i}v_{j}\notin E_{H}}$ of $H$. For every such $u_{i}v_{j}$, we mark all
vertices in the sets~$A$ and~$B$, where ${A=\{v\in V\ |\ u_{i}v\in E_{H}%
\text{ and }u_{i}v\text{ is committed in }H\}}$ and ${B=\{u\in U\ |\
uv_{j}\in E_{H}\text{ and }uv_{j}\text{ is committed in }H\}}$. Then we scan
through the adjacency lists of all vertices in $A$ to discover a pair of
vertices $v\in A$ and $u\in B$ such that $uv$ is a committed edge of $H$,
and $\ell _{v_{j}u}\neq \ell _{uv}$, $\ell _{uv}\neq \ell _{vu_{i}}$, and $%
\ell _{v_{j}u}\neq \overline{\ell _{vu_{i}}}$. Since $H$ has $O(n^{2})$
edges, this scan through the adjacency lists of the vertices of $A$ can be
done in $O(n^{2})$ time. If we discover such an edge $uv$, then we add to $%
\phi _{1}$ the clauses $\alpha =(\ell _{v_{j}u}\vee \overline{\ell _{uv}}%
\vee \ell _{vu_{i}})$ and $\alpha ^{\prime }=(\overline{\ell _{v_{j}u}}\vee
\ell _{uv}\vee \overline{\ell _{vu_{i}}})$. Due to Claim~\ref%
{phi-1-implementation-claim}, Algorithm~\ref{alg-phi_1-constr} would add the
same two clauses to $\phi _{1}$.

Due to Lemma~\ref{disjoint-clauses-lem}, no other clause of $\phi _{1}$ has
one of the literals $\{\ell _{v_{j}u},\overline{\ell _{v_{j}u}},\ell _{uv},%
\overline{\ell _{uv}},\ell _{vu_{i}},\overline{\ell _{vu_{i}}}\}$. After we
add the two clauses $\alpha $ and $\alpha ^{\prime }$ to $\phi _{1}$, we
visit all edges $e$ of $H$ which correspond to the same connected component
in $H^{\ast }$ with one of the edges $\{v_{j}u,uv,vu_{j}\}$. Note that
exactly these edges $e$ of $H$ have a literal $\ell _{e}\in \{\ell _{v_{j}u},%
\overline{\ell _{v_{j}u}},\ell _{uv},\overline{\ell _{uv}},\ell _{vu_{i}},%
\overline{\ell _{vu_{i}}}\}$. We then mark all these edges $e$ such that we
avoid visiting them again in any subsequent iteration during the
construction of $\phi _{1}$. Thus we ensure that each clause appears at most
once $\phi _{1}$ (cf.~lines~\ref{algorithm-phi_1-4}-\ref{algorithm-phi_1-5}
of Algorithm~\ref{alg-phi_1-constr}). Note that we can perform all such
markings of edges $e$ (for all iterations during the construction of $\phi
_{1}$) in time linear to the size of $H^{\ast }$, i.e.~in $O(n^{2}+m^{2})$
time. Summarizing, we need in total $O(n^{2}m+m^{2})$ time to compute the
formula $\phi _{1}$. Thus, since the formula $\phi _{2}$ can be computed in $%
O(n(n+m))$ time, it follows that line~\ref{PI-alg-7} of Algorithm~\ref%
{PI-graph-recognition-alg} can be executed in $O(n^{2}m+m^{2})$ time.

Now, we can test whether the formula $\phi _{1}\wedge \phi _{2}$ is
satisfiable in time linear to its size by Theorem~\ref%
{formula-equivalent-gm-formula-thm}; moreover, within the same time bound we
can compute a satisfying truth assignment $\tau $ of $\phi _{1}\wedge \phi
_{2}$, if one exists. Thus, since $\phi _{1}$ has $O(n^{2})$ clauses and $%
\phi _{2}$ has $O(n(n+m))$ clauses, lines~\ref{PI-alg-8}-\ref{PI-alg-9} of
Algorithm~\ref{PI-graph-recognition-alg} can be executed in $O(n(n+m))$
time. Furthermore, line~\ref{PI-alg-10} of Algorithm~\ref%
{PI-graph-recognition-alg} can be executed in $O(n^{2})$ time by Theorem~\ref%
{linear-interval-cover-by-satisfiability-thm}, calling Algorithm~\ref%
{linear-interval-cover-from-assignment-alg} as a subroutine. Finally, line~%
\ref{PI-alg-11} of Algorithm~\ref{PI-graph-recognition-alg} can be executed
in $O(n^{2})$ time by Theorem~\ref{PI-repr-constr-thm}, calling Algorithm~%
\ref{alg-PI-repr-constr} as a subroutine. Summarizing, since $m=O(n^{2})$,
the total running time of Algorithm~\ref{PI-graph-recognition-alg} is $%
O(n^{2}m)$. This completes the proof of the theorem.\qed
\end{proof}

Due to characterization of PI graphs in~Theorem~\ref{PI-char-thm} using
partial orders, the next theorem follows now 
by~Theorem~\ref{PI-algorithm-analysis-thm}.

\begin{theorem}
\label{linear-interval-order-recognition-thm}Let $P=(U,R)$ be a partial
order, where $|U|=n$ and $|R|=m$. Then we can decide in $O(n^{2}m)$ time
whether $P$ is a linear-interval order, and in this case we can compute a
linear order $P_{1}$ and an interval order $P_{2}$ such that $P=P_{1}\cap
P_{2}$.
\end{theorem}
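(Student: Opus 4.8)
The plan is to derive this theorem directly from the graph-theoretic recognition result of Theorem~\ref{PI-algorithm-analysis-thm}, using the equivalence between linear-interval orders and PI graphs established in Theorem~\ref{PI-char-thm}. First I would pass from the partial order $P=(U,R)$ to its incomparability graph $G$, i.e.\ to the graph $G$ whose complement $\overline{G}$ is the comparability graph of $P$. By construction $G$ is a cocomparability graph and $P$ is a partial order of $\overline{G}$, so Theorem~\ref{PI-char-thm} applies: $G$ is a PI graph if and only if $P=P_{1}\cap P_{2}$ for a linear order $P_{1}$ and an interval order $P_{2}$, that is, if and only if $P$ is a linear-interval order. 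Thus recognizing linear-interval orders reduces to recognizing PI graphs.

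Next I would check that the parameters translate the running time of Theorem~\ref{PI-algorithm-analysis-thm} into the claimed bound. The graph $G$ has $n=|U|$ vertices, and since every comparable pair $\{x,y\}$ of $P$ contributes exactly one ordered pair to $R$, the complement $\overline{G}$ has precisely $m=|R|$ edges. Hence Theorem~\ref{PI-algorithm-analysis-thm}, applied to $G$, decides in $O(n^{2}m)$ time whether $G$ is a PI graph, which by the previous paragraph is exactly the decision of whether $P$ is a linear-interval order.

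It remains to recover the witnessing orders $P_{1}$ and $P_{2}$ for the given $P$. Here the one point requiring care is that Algorithm~\ref{PI-graph-recognition-alg} begins by recomputing some transitive orientation of $\overline{G}$ (line~\ref{PI-alg-0-2}), which need not coincide with the given $P$. Since we are handed $P$ directly, I would skip this step and feed $P$ into the pipeline, so that $C(P)$, $\widetilde{G}=\widehat{C}(P)$, and all downstream objects are built from our specific $P$. When $G$ is found to be a PI graph, Algorithm~\ref{PI-graph-recognition-alg} invokes Algorithm~\ref{alg-PI-repr-constr}, which --~by the constructions verified in the implications $(c)\Rightarrow(b)$ and $(b)\Rightarrow(a)$ of Theorem~\ref{linear-interval-coverable-char-thm} (cf.\ also Theorem~\ref{PI-repr-constr-thm})~-- produces a linear order $P_{1}$ and an interval order $P_{2}$ satisfying exactly $P=P_{1}\cap P_{2}$; outputting this pair completes the algorithm. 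The main, and essentially only, obstacle is bookkeeping rather than mathematics: one must record the identification $m=|R|$ and ensure the given $P$ is fed into the recognition routine, after which both correctness and the $O(n^{2}m)$ time bound follow immediately from Theorems~\ref{PI-char-thm} and~\ref{PI-algorithm-analysis-thm}.
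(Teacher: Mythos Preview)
Your proposal is correct and follows essentially the same approach as the paper, which derives the theorem in one sentence from Theorems~\ref{PI-char-thm} and~\ref{PI-algorithm-analysis-thm}. Your write-up is in fact more careful than the paper's: you explicitly verify the parameter identification $|E(\overline{G})|=|R|=m$ and address the subtlety that Algorithm~\ref{PI-graph-recognition-alg} recomputes a transitive orientation, noting that one should feed the given $P$ directly into the pipeline so that the output $P_{1},P_{2}$ realize the intersection for this specific $P$.
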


\section{Concluding remarks\label{conclusions-sec}}

In this article we provided the first polynomial algorithm for the
recognition of simple-triangle graphs, or equivalently for the recognition
of linear-interval orders, solving thus a longstanding open problem. For a
graph $G$ with $n$ vertices, where its complement $\overline{G}$ has $m$
edges, our $O(n^{2}m)$-time algorithm either computes a simple-triangle
representation of $G$, or it announces that such one does not exist. The
main tool for our recognition algorithm was a new hybrid tractable subclass
of $3$SAT, called the class of \emph{gradually mixed} formulas. In addition,
we introduced the notion of a \emph{linear-interval cover} of bipartite
graphs, which naturally extends the well-known notion of the chain-cover of
bipartite graphs. There are two main lines for further research. The first
one is to identify more ``islands of tractability'' for hybrid classes of
SAT (and more generally of CSP), while the ultimate goal is to find a
complete characterization of the hybrid classes of CSP that are tractable.
The second line for further research is to resolve the complexity of the
recognition for the related classes with simple-triangle graphs, such as the
classes of \emph{unit} and \emph{proper tolerance} graphs~\cite{GolTol04}
(these are subclasses of parallelogram graphs, and thus also subclasses of
trapezoid graphs), \emph{proper bitolerance} graphs~\cite{GolTol04,Bogart98}
(they coincide with \emph{unit bitolerance} graphs~\cite{Bogart98}), and 
\emph{multitolerance} graphs~\cite{Mertzios-multitolerance_Algorithmica} (they naturally
generalize trapezoid graphs~\cite{Parra98,Mertzios-multitolerance_Algorithmica}). On the
contrary, the recognition problems for the related classes of \emph{triangle}
graphs~\cite{Mertzios-PI-ast-tcs}, \emph{tolerance} and \emph{bounded
tolerance} (i.e.~\emph{parallelogram}) graphs~\cite{MSZ-SICOMP-11}, and 
\emph{max-tolerance} graphs~\cite{Kaufmann06} have been already proved to be
NP-complete.

\end{document}